\documentclass[onecolumn,ldraftcls, 11pt]{IEEEtran} 
\usepackage{xcolor}
\definecolor{plotblue}{HTML}{1F77B4}   
\definecolor{plotorange}{HTML}{FF7F0E} %
\usepackage{pgfplots}
\pgfplotsset{compat=newest}
\usepackage{float}
\usepackage[utf8]{inputenc} 
\usepackage[T1]{fontenc}
\usepackage{url}
\usepackage{ifthen}
\usepackage{cite}
\usepackage{algorithm}
\usepackage{algpseudocode}
\usepackage[cmex10]{amsmath}
\usepackage{wrapfig}
\usepackage{tikz}
\usepackage{circuitikz}
\usepackage{multirow}
\usepackage{makecell}
\usetikzlibrary{shapes, arrows, positioning, patterns}
\usepackage[includefoot, left=27mm,right=27mm,top = 25mm,  bottom=25mm]{geometry}

\usepackage{cite}
\usepackage[colorlinks=false]{hyperref}%
\usepackage{mleftright}       
\mleftright                   
\usepackage{subcaption}
\usepackage{graphicx}         
\usepackage{booktabs}         
\usepackage{amsthm}
 \usepackage{graphicx}
\usepackage{amsmath}
\usepackage{graphics,graphicx,psfrag,color,float}
\usepackage{epsfig,bbold,bbm}
\usepackage{bm}
\usepackage{mathtools}
\usepackage{amsmath}
\usepackage{amssymb}
\usepackage{amsfonts}
\usepackage{amsfonts}
\usepackage {times}
\usepackage{bbm}
\usepackage{soul}
\usepackage{cleveref}
\usepackage{enumitem}
\usepackage{xcolor}
\usepackage{txfonts}
\makeatletter
\def\namedlabel#1#2{\begingroup
    #2%
    \def\@currentlabel{#2}%
    \phantomsection\label{#1}\endgroup
}
\makeatother
\DeclareUnicodeCharacter{0325}{}

\def\QED{\mbox{\rule[0pt]{1.5ex}{1.5ex}}}

\newtheorem{theorem}{\bf{Theorem}}

\newcommand{\supp}{\operatorname{supp}}
\newcommand{\RNum}[1]{\uppercase\expandafter{\romannumeral #1\relax}}

\newcommand{\beq}{\begin{equation}}
\newcommand{\enq}{\end{equation}}
\newcommand{\bel}{\begin{lemma}}
\newcommand{\enl}{\end{lemma}}

\newcommand{\bet}{\begin{theorem}}
\newcommand{\ent}{\end{theorem}}

\newcommand{\tr}{\mathrm{Tr}}
\newcommand{\Te}{\bm{\mathfrak{Te}}}
\newcommand{\Tr}{\bm{\mathfrak{Tr}}}

\newcommand{\nn}{\nonumber}

\newcommand{\ketbra}[1]{|#1\rangle\langle#1|}

\newcommand{\floor}[1]{\left\lfloor #1 \right\rfloor}

\newcommand{\customfootnotetext}[2]{{
  \renewcommand{\thefootnote}{#1}
  \footnotetext[0]{#2}}}
\newcommand{\eps}{\varepsilon}

\newcommand*{\bbR}{\mathbb{R}}

\newcommand*{\bbI}{\mathbb{I}}

\newcommand*{\bbN}{\mathbb{N}}

\newcommand*{\bbE}{\mathbb{E}}

\newcommand*{\cA}{\mathcal{A}}

\newcommand*{\cH}{\mathcal{H}}

\newcommand*{\cD}{\mathcal{D}}

\newcommand*{\cN}{\mathcal{N}}

\newcommand*{\cS}{\mathcal{S}}

\newcommand*{\cT}{\mathcal{T}}

\newcommand*{\cW}{\mathcal{W}}
\newcommand*{\cZ}{\mathcal{Z}}
\newcommand*{\cE}{\mathcal{E}}

\newcommand{\ket}[1]{|#1 \rangle}

\mathchardef\mhyphen="2D

\newcommand{\ketbrasys}[2]{\overset{#2}{|#1\rangle\langle#1|}}

\newcommand{\abs}[1]{\left\vert#1\right\vert}

\makeatletter
\newcommand*{\rom}[1]{\expandafter\@slowromancap\romannumeral #1@}
\makeatother
\mathchardef\mhyphen="2D
\newlist{steps}{enumerate}{1}
\setlist[steps, 1]{leftmargin = 1.1cm, label = Step \arabic*.}
\newtheorem{remark}{Remark}
\newtheorem{definition}{Definition}
\newtheorem{claim}{Claim}

\newtheorem{example}[theorem]{Example}
\newtheorem{lemma}{Lemma}
\newtheorem{corollary}{Corollary}

\usepackage[utf8]{inputenc} 
\usepackage[T1]{fontenc}
\usepackage{url}
\usepackage{ifthen}
\usepackage[cmex10]{amsmath}
\usepackage{amsfonts} 
\usepackage{times}

\begin{document}

\title{Privacy Implies Stability: Information-Theoretic Generalization Bounds for Quantum Learning}

\author{
Ayanava Dasgupta\textsuperscript{$*$},  Naqueeb Ahmad Warsi\textsuperscript{$*$} and  Masahito Hayashi\textsuperscript{$\dagger$}

}

\customfootnotetext{$*$}{
Indian Statistical Institute,
Kolkata 700108, India.
Email: 
{\sf 
 ayanavadasgupta\_r@isical.ac.in, naqueebwarsi@isical.ac.in
}
}

\customfootnotetext{$\dagger$}{School of Data Science, The Chinese University of Hong Kong, Shenzhen, Longgang District, Shenzhen, 518172, China

International Quantum Academy, Futian District, Shenzhen 518048, China

Graduate School of Mathematics, Nagoya University, Nagoya, 464-8602, Japan.
Email: 
{\sf hmasahito@cuhk.edu.cn}}

\maketitle

\begin{abstract}
We develop an information-theoretic framework connecting stability, privacy, and generalization for quantum learning algorithms. Learning procedures are modeled as quantum instruments with classical-quantum outputs, and losses are represented by observables. We prove that under a classical-quantum sub-Gaussian condition, an information-theoretic stability measure controls the expected generalization error. Furthermore, we establish a high-probability generalization bound using quantum R\'enyi divergences to manage higher-order dependencies under non-commutativity.

In the trusted Data Processor setting, quantum differential privacy (QDP) provides a mechanism for stability. We show that one-neighbor QDP strictly bounds the information leaked by the classical-quantum output. Combining this with our stability theorem yields a direct privacy-to-generalization guarantee.

We also explore an untrusted Data Processor setting. Here, output privacy alone is insufficient since an adversarial processor could perform a highly informative procedure before applying noisy post-processing. To combat this, we introduce Information-Theoretic Admissibility (ITA), a certification condition ensuring the prescribed procedure is not just a degraded version of a strictly more informative, physically allowed operation on the encoded ensemble. We prove a fundamental separation: while admissibility and privacy are in strong tension in classical models, quantum non-orthogonality makes them compatible. A quantum measurement can be ITA - exhausting all relevant accessible information - without perfectly recovering the classical dataset. We illustrate this separation through a concrete quantum ITA example.
\end{abstract}

\begin{IEEEkeywords}Quantum Machine Learning, Generalization Error, Probabilistic Bounds, Algorithmic Stability,  Information-Theoretic Stability, Mutual Information Bound, Classical-Quantum Sub-Gaussianity, Quantum Differential Privacy, Information-Theoretic Admissibility
\end{IEEEkeywords}

\section{Introduction}
\label{sec_intro}
Quantum learning algorithms operate on information that is intrinsically
physical.  The training data may be encoded in quantum states, the learning
procedure is naturally described by a quantum channel or, more generally, by
a quantum instrument, and the performance of the learned hypothesis is
evaluated through observables.  In such a setting, generalization is not only
a question about the statistical dependence between a classical dataset and
a classical hypothesis.  A learning procedure may also release, retain, or
correlate with a residual quantum system, and this system can carry
information about the training data even when the announced classical
hypothesis appears stable.

A basic feature distinguishing the quantum setting from the classical one is
that information is constrained by state distinguishability.  
Non-orthogonal quantum states cannot be perfectly discriminated, as quantified by the
Helstrom theory of quantum detection \cite{Helstrom1967_2} and by the theory
of accessible information
\cite{fuchs1996distinguishabilityaccessibleinformationquantum}.  Consequently, even an optimal measurement need not fully reveal the underlying classical data encoded into quantum systems.  
This makes it necessary to formulate
information leakage, stability, and privacy at the level of
classical--quantum states, rather than only at the level of classical
input-output distributions.

Quantum differential privacy provides one way of imposing such an
indistinguishability constraint.  It extends the classical requirement that
neighboring datasets produce nearly indistinguishable outputs to the setting
where the outputs are quantum states and the adversary may perform arbitrary
physically allowed measurements \cite{ZY2017,HRF23,AK25,Nuradha25}.
Recent work has developed several complementary formulations of quantum
privacy, including information-theoretic QDP, local quantum differential
privacy, private quantum channels, and privacy notions based on quantum
hypothesis testing and contraction of distinguishability
\cite{HRF23,AK25,Nuradha25,DHLYT22}.
From this viewpoint, privacy is not only a semantic protection guarantee;
it is also a quantitative restriction on how much statistical
distinguishability can survive a physically allowed quantum processing.

This distinguishability-based viewpoint is closely related to the ordering of
information in statistical experiments and channels.
In the classical theory, the comparison of experiments is formalized by Blackwell's order
\cite{Blackwell51,BG54}.  Quantum analogues are more subtle and have been
studied through comparison theorems, randomization criteria, quantum
statistical morphisms, and reverse data-processing formulations
\cite{Buscemi2012,Jencova2012,Matsumoto:2015vgh,Buscemi2016,Buscemi_2018,BST2019}.
These comparison and post-processing viewpoints give a precise way to
express when one procedure is more informative than another on the same
encoded ensemble.

This ordering viewpoint is particularly relevant to recent work on quantum
information ordering and differential privacy \cite{DWH2025}.  There,
privacy constraints are studied as restrictions on the informativeness of
quantum channels or statistical experiments, using distinguishability
measures such as hypothesis-testing divergences, hockey-stick divergences,
and related quantum $f$-divergences
\cite{Wang_2012,Sharma09,SW12,GLW24}.
The present paper uses this ordering perspective in a different direction:
we study the operational consequences of privacy and information constraints
for quantum learning. In particular, the ordering language is used not to
characterize an extremal or most-informative private mechanism, but to
formulate a certification condition for privacy claims in trusted and
untrusted Data Processor models.

This work is also inspired by the classical information-theoretic theory of
generalization.  In classical statistical learning, algorithmic stability
explains why empirical performance can predict performance on fresh data: if
the output of a learning algorithm is insensitive to small perturbations of
the training set, then empirical and population losses are close
\cite{Bousquet02}.  Mutual-information refinements quantify such stability
by measuring the dependence between the training data and the algorithmic
output \cite{XR_2017}.  Differential privacy, originally introduced as a
rigorous privacy guarantee for statistical databases \cite{dwork2006},
provides an important mechanism for enforcing this kind of stability and
thereby controlling generalization
\cite{Dwork2015Generalization,Bassily2016,RGBS21}.  The goal of this paper
is to develop the corresponding stability--privacy--generalization framework
for quantum learning algorithms.
A complementary line of work studies locally differentially private
mechanisms through extremal mechanisms, strong data-processing inequalities,
and contraction coefficients \cite{KOV15,Asoodeh24,ZA23}.
These works quantify how a privacy-constrained mechanism reduces
distinguishability or information at the output. This viewpoint is related
to the present paper because our privacy-to-stability result gives a direct
upper bound on the classical--quantum information that the released output
can retain about the training data.

These observations show that privacy in quantum learning is not merely a
property of an abstract channel in isolation.  It depends on which systems
are released, which party is trusted, and which measurements or
post-processings are physically available to the adversary.  We therefore
organize the operational meaning of privacy using three roles.
The
\emph{Respondent} provides the data.  The \emph{Data Processor} executes
the learning procedure.
The \emph{Investigator} receives the released
output and uses it for inference, prediction, or testing.
This distinction
is useful because privacy is not merely a property of an abstract channel
in isolation;
it depends on which systems are released and which party is
trusted.
In the \emph{trusted Data Processor} setting, the Data Processor
is assumed to execute the prescribed learning algorithm and to release only
the specified output to the Investigator.
In the \emph{untrusted Data
Processor} setting, the Data Processor itself may be adversarial and may
attempt to extract more information from the Respondent's encoded data than
the prescribed output is meant to reveal.
The latter distinction is the motivation for the admissibility notion
introduced later in the paper: in the untrusted setting, one must certify not
only that the released output is private, but also that the prescribed
procedure is not merely a noisy post-processing of a strictly more
informative procedure that the Data Processor could have performed.

Our first contribution is a stability-to-generalization theorem for quantum
learning algorithms.  We use an information-theoretic stability measure based
on the dependence between the Respondent's data, the evaluation systems, and
the complete classical--quantum output of the learning procedure.  Under a
classical--quantum sub-Gaussian condition on the loss observables, we prove
that this stability controls the expected generalization error.

Conceptually, the theorem says that if the full output of the quantum
learning procedure contains little information about the particular training
instance, then the empirical loss and the true loss are close.  This extends
the mutual-information approach to classical generalization \cite{XR_2017}
to the setting of quantum learning with observable-valued losses and residual
quantum outputs \cite{Caro23,WDH2025}.  An important feature of the result is
that the loss assumption is imposed as a single classical--quantum
sub-Gaussian condition.  This differs from separated assumptions appearing
in earlier quantum generalization bounds, where classical sampling
fluctuations and quantum fluctuations are controlled separately
\cite{Caro23,WDH2025}.  Our formulation captures both effects in one
information-theoretic bound.

We also establish a high-probability generalization guarantee.  The expected
bound controls the average generalization error over the randomness of the
data and the learning procedure.  A high-probability bound is stronger in a
different direction: it controls the realized generalization error except on
a small exceptional event.  In the quantum setting, such a statement requires
tools that can handle higher-order information dependence in the presence of
non-commutativity.  For this reason, the proof uses sandwiched
Rényi-divergence techniques, in line with finite-resource quantum
information theory \cite{Tomamichel_2016,6574274} and with the Rényi-divergence
approach to quantum learning generalization in \cite{WDH2025}.  This result
complements the expected generalization theorem by providing a
concentration-type guarantee adapted to quantum learning models with
observable-valued losses.

The second part of the paper proves that quantum differential privacy
implies information-theoretic stability in the trusted Data Processor
setting.
We introduce a one-neighbor quantum differential privacy
condition for quantum learning algorithms.
The condition requires that
outputs produced from neighboring datasets be indistinguishable up to the
prescribed privacy parameters.
This requirement limits how much the
released classical-quantum output can depend on any individual data entry.
We prove a mechanism-independent upper bound on the corresponding
Holevo-type information between the training data and the released output.
The proof uses a quantum version of the type-covering and grid-covering
strategy underlying the classical analysis of differentially private
learning \cite{RGBS21}, together with standard quantum information
inequalities.

Combining this privacy-induced stability bound with the
stability-to-generalization theorem yields a privacy-to-generalization
guarantee for quantum learning algorithms.
Thus, in the trusted setting,
one-neighbor quantum differential privacy is not only a privacy constraint
but also a sufficient condition for information-theoretic stability, and
therefore for generalization.
In the purely classical and pure-DP limit,
our stability bound recovers the corresponding classical mutual-information
bound of \cite{RGBS21}.
In the approximate and quantum-output setting, it
extends that analysis by allowing a residual quantum system and by keeping
track of the additional overhead caused by approximate privacy.
We also
compare our stability bound with recent quantum local-privacy and
Holevo-information bounds \cite{Nuradha25,DWH2025,wilde2025private}, and
with the quantum learning generalization framework of \cite{Caro23}.

The trusted setting, however, does not address all privacy threats.  
If the
Data Processor is untrusted, output privacy alone is not sufficient.  An
adversarial processor could first perform a more informative quantum
procedure on the Respondent's encoded data, retain or use the information
thereby extracted, and only afterwards apply noisy post-processing to
produce an output that appears private to the Investigator.  
In such a case,
the privacy guarantee of the final released output would not certify that
the prescribed learning procedure itself is responsible for the privacy.  
It would only certify that information was discarded after a potentially more
revealing computation had already been performed.

This issue is naturally expressed in the language of post-processing order.
Classically, if the output of one experiment can be simulated from the output
of another by a Markov kernel, then the latter is at least as informative in
Blackwell's sense \cite{Blackwell51,BG54}.  In the quantum setting, the
corresponding comparison problem is more delicate: depending on the
operational formulation, one may compare models by CPTP maps, positive
trace-preserving maps, quantum statistical morphisms, or families of
decision problems \cite{Buscemi2012,Jencova2012,Matsumoto:2015vgh,Buscemi2016,BST2019}.
For privacy certification in an untrusted implementation, the relevant
question is not whether two statistical models are equivalent in a universal
sense, but whether the prescribed learning procedure is merely a
post-processing of another physically allowed procedure that extracts more
information from the same encoded ensemble.

This motivates the notion of \emph{Information-Theoretic Admissibility}
(ITA).  
ITA is not itself a privacy condition. Rather, it rules out the following
possibility: the prescribed procedure \(\mathcal N\) can be obtained by
applying a noisy post-processing map to another physically allowed procedure
\(\mathcal N'\) that extracts strictly more information from the same
encoded ensemble.
More explicitly, we say that a procedure \(\mathcal N'\) is more informative
than \(\mathcal N\) if the output of \(\mathcal N\) can be obtained by
applying an additional noisy post-processing step to the output of
\(\mathcal N'\).  In other words, \(\mathcal N\) does not discard
information by itself, but only after \(\mathcal N'\) has already extracted
it.
If such a strictly more informative \(\mathcal N'\) exists, then an
untrusted Data Processor could run \(\mathcal N'\), keep the additional
information, and only then apply \(\Gamma\) to produce the apparently private
output of \(\mathcal N\).  In that case, the privacy of the released output
does not certify privacy against the Data Processor.
A prescribed procedure is called ITA if no strictly more informative
procedure of this kind exists.  In other words, it is not merely a degraded
version of another physically allowed procedure that extracts more
information from the same data.
Thus, ITA rules out privacy claims that are achieved by
first extracting more information than necessary, and then
hiding this information through an additional noisy post-processing step.
This definition is
inspired by data-processing ideas and by post-processing orders used in the
comparison of classical and quantum statistical experiments
\cite{Blackwell51,Buscemi2012,Buscemi2016,DWH2025}, but it is applied here
in an ensemble-dependent form tailored to quantum learning.

The classical and quantum implications of ITA are markedly different.
In a classical, or more generally jointly commuting, model, an admissible procedure that is sufficiently informative tends to approach recoverability of the underlying classical dataset.
Hence admissibility and nontrivial
privacy are in strong tension: if the processor can perform an admissible
classical procedure, then privacy of the final output may not prevent the
processor from having learned essentially the raw data.
This classical
collapse shows why output privacy alone is inadequate in the untrusted
setting, and why an admissibility condition must be examined together with
the privacy claim.

In a genuinely quantum model, non-commutativity and non-orthogonality
change the situation.
Even an undominated physically allowed measurement
need not perfectly identify the underlying classical string, because
non-orthogonal quantum states cannot be perfectly discriminated.
This
limitation is not an added noise mechanism; it is a physical limitation on
state discrimination, closely related to the Helstrom bound
\cite{Helstrom1967_2}.
Thus, quantum mechanics does not by itself
guarantee privacy, but it can make admissibility compatible with
nontrivial privacy.
The essential point is that extracting all information
available to a prescribed quantum procedure need not coincide with
recovering the entire classical dataset.

We illustrate this distinction through a quantum ITA example based on
non-orthogonal encodings and a Hamming-weight measurement.
The purpose of
the example is not to propose a practical protocol, but to separate two
notions that coincide in purely classical models: being undominated in the
relevant post-processing order and perfectly recovering the underlying
classical string.
In the orthogonal limit, the corresponding measurement
has the character of a Quantum Non-Demolition measurement
\cite{Braginsky1980}.
Away from that limit, non-orthogonality prevents
perfect recovery even when the measurement exhausts the relevant
physically available information.
This example shows how nontrivial
privacy guarantees can remain meaningful against an untrusted Data
Processor when the privacy claim is paired with an admissibility condition.

The remainder of the paper is organized as follows.  Section
\ref{sec_def_not} fixes notation and recalls the type method used in the
privacy-to-stability analysis.
Section \ref{Qfw} introduces the quantum
learning framework, the observable-valued losses, and the
information-theoretic stability notion, and proves the expected and
high-probability stability-to-generalization results.
Section
\ref{priv_gen} studies the trusted Data Processor setting, defines the
one-neighbor quantum differential privacy condition, and proves that
privacy implies stability and hence generalization.
Section \ref{SSA}
turns to the untrusted Data Processor setting, introduces
Information-Theoretic Admissibility, and analyzes the classical and
quantum consequences of this condition.
The appendices contain technical
proofs, comparisons with prior bounds, and the detailed analysis of the
quantum ITA example.
\section{Notations}\label{sec_def_not}
Let $\cD(\cH)$ denote the set of density operators on a finite-dimensional Hilbert space $\cH$.
\paragraph{Method of Types.}
For a finite alphabet { $\cZ$} of size $d$, the \emph{type} of a sequence $s \in \cZ^n$ is the frequency vector ${\bf f} \in \bbN_0^d$ satisfying $\sum_i f_i = n$.
We denote the set of all types by $T_d^n$, and the \emph{type class} (the set of all sequences with type $\bf f$) by $T_{\bf f} \subset \cZ^n$.
Following \cite{RGBS21}, we define two sequences $s, \tilde{s} \in \cZ^n$ to be \emph{$k$-neighbors}, denoted as $s \overset{k}{\sim} \tilde{s}$, if their types satisfy
$
k = \frac{1}{2}\sum_{a\in\cT}\abs{f_a(s)-f_a(\tilde{s})}.
$
Note that $s \overset{0}{\sim} \tilde{s}$ implies the sequences are identical up to permutation.
Further, for any string $a \in \{0,1\}^n$, we denote $\abs{a}_1$ to be the hamming weight of $a$, i.e., the number of entries with the value $1$ in $a$.

\section{A General Framework for Quantum Learning Algorithms}
\label{Qfw}
\subsection{Learning Setup and Data Encoding}
In this section, we establish a quantum learning framework motivated by \cite{Caro23} and \cite{WDH2025}, operationalized through the interaction between a \textbf{Respondent} (data contributor) and a \textbf{Data Processor} (algorithm executor), \emph{focusing on the non-private setting} (i.e., without imposing any privacy constraint at this stage).

This section should be understood as the baseline mathematical model on which the privacy results of the later sections are built.
At this stage, we do not yet impose any adversarial constraint on the Data Processor, nor do we require that the learning map be private.
Accordingly, the family of instruments $\{\cN^{(s)}\}_{s\in\cS}$ is allowed to depend on the classical dataset label $s$.
This dependence is useful for formulating the most general learning procedure and for defining the joint classical-quantum state used in the generalization analysis.
The trusted and untrusted privacy restrictions introduced later will be obtained by imposing additional operational constraints on this baseline model.

The \textbf{Respondent} provides a classical dataset $s := (z_1,\dots,z_n) \in \cS$ (where $z_i=(x_i,y_i)$ maps input $x_i$ to label $y_i$), { drawn from a probability distribution $P_S$\footnote{{$P_S$ does not necessarily have an i.i.d. structure}}} and encoded into an aggregate quantum state $\rho_{s}:= \bigotimes_{i=1}^n \rho_{z_i} \in \cD(\cH^{\hat{\Te}} \otimes \cH^{\hat{\Tr}})$, where $\rho_{z_i}$ is the quantum state corresponding to $i$-th data $z_i$.
This state spans a training system $\Tr:= \hat{\Tr}^{\otimes n}$ (accessible to the Processor) and a testing system $\Te:=\hat{\Te}^{\otimes n}$ (used for evaluation).

The \textbf{Data Processor} receives the classical-quantum input $\sum_{s}P_S(s)\ketbra{s}\otimes \rho_s$ and executes a learning algorithm, modeled as a collection of quantum instruments $\cN := \{\cN^{(s)} : \cD(\cH^{\Tr}) \to \cD(\cH^{B})\}_{s \in \cS}$.
The output system $B \equiv WB'$ comprises a classical hypothesis $W$ and a quantum residue $B'$.
The distinction between $W$ and $B'$ is important. 
The classical register $W$ represents the hypothesis eventually used by the Investigator, while the quantum system $B'$ represents any residual quantum information retained or released by the learning procedure.
Even if the classical hypothesis $W$ appears stable, the residual system $B'$ may still contain information about the training data.
For this reason, our stability and generalization bounds are formulated for the full output $WB'$, rather than for $W$ alone.
The resulting joint state is given by:
\begin{equation}
    \sigma^{S\Te B}_{\cN} := \sum_{s \in \cS}P_S(s)\ketbra{s} \otimes (\sigma^{\cN}_s)^{\Te B},
\label{HJ2}
\end{equation}
where the output state conditioned on the Respondents' input $s$ is,
\begin{equation}
    \sigma^{\cN}_s:= \sum_{w \in \cW}((\bbI^{\Te} \otimes \cN^{(s)}_{w})(\rho_s))^{\Te B'} \otimes \ketbra{w}^W.
\label{sigma_s_state}
\end{equation}
 The action of the Data Processor's instrument $\cN^{(s)}$ is denoted as:
\begin{align}
\cN^{(s)}(\rho_s):=\sum_{w \in \cW} (\bbI^{\Te} \otimes  \cN^{(s)}_{w})(\rho_s)\otimes \ketbra{w}, \label{SHJ}
\end{align}
where each $\cN^{(s)}_{w}$ is a completely positive trace non-increasing map.
This interaction is illustrated in Figure \ref{fig:quantum_diag1}.

\begin{figure}[h]
\centering
\resizebox{100mm}{!}{
\begin{circuitikz}
\tikzstyle{every node}=[font=\large]
\draw [ font=\LARGE, color={rgb,255:red,255; green,0; blue,0} , fill= green!10, line width=1.3pt , rounded corners = 11.4] (4.25,10.5) rectangle  node[yshift=0.75cm] {$\cN = \left\{\cN^{(s)}\right\}_{s \in \cS}$} (9.25,7.25);
\draw [ color={rgb,255:red,255; green,0; blue,0} , fill= green!10, line width=1.3pt , rounded corners = 11.4] (-2.5,9.5) rectangle  node {\textbf{\textit{Respondent}} } (1.5,8.0);

\node [font=\large, color ={rgb,255:red,255; green,0; blue,0} ] at (-0.5,7.5)  {$\mathbf{\sum_{s}P_{S}(s) \ketbra{s} \otimes \rho_s}$ };


\draw [->, line width=1.3pt,color = cyan](1.5,8.75) to[short] node[color = black, xshift = -0.75cm, yshift= -0.25cm] {$\mathbf{\{s,\rho_s\}}$} (4.25,8.75);
\draw [->, dashed, line width=1.3pt,color = cyan](9.25,8.75) to[short] node[color = black, xshift = -1cm, yshift= -0.25cm] {$\mathbf{\cN^{(s)}(\rho_s)}$} (12.5,8.75);
\node [font=\large,  blue] at (6.75,8.9) {\textbf{\textit{Learning Algorithm}}};

\node [font=\large,  red] at (6.75,7.75) {\textbf{\textit{Data Processor}}};


\end{circuitikz}

}\caption{A general quantum learning framework.}\label{fig:quantum_diag1}
\end{figure}

\subsection{Stability of a Quantum Learning Algorithm}
We now define stability for quantum learning algorithms.
Intuitively, stability requires the learning outcome to remain invariant to \textit{single-entry} modifications, thereby preventing the leakage of individual data points.
Extending the classical information-theoretic framework of \cite{XR_2017}, which quantifies stability via mutual information, we formalize this notion below.
\begin{definition}(Stability)\label{def_stable_learn}
    A quantum learning algorithm $\cN = \left\{\cN^{(s)}\right\}_{s}$ is defined to be $\gamma$-stable, if $\max_{P_S} I\left[S \Te;WB'\right] \leq \gamma$, 
    where $I\left[S\Te;WB'\right]$ is calculated with respect to the classical-quantum state mentioned in \eqref{sigma_s_state}.
\end{definition}
The above definition provides a quantitative upper bound on the information
that can be extracted from the Data Processor's output system \(B\)
\((B \equiv WB')\) about the composite system consisting of the
Respondent's input dataset \(S\) and the testing system \(\Te\).
The appearance of both $S$ and $\Te$ in the information term reflects the fact that the learning procedure may create correlations not only with the classical dataset label but also with the quantum testing system associated with the encoded data.
Similarly, the full output $WB'$ is used because both the announced hypothesis and the residual quantum system may carry information about the training instance.
Thus, Definition~\ref{def_stable_learn} is deliberately stronger than a purely classical stability condition involving only $I[S;W]$.
It measures the total classical-quantum dependence between the Respondent's data and the Data Processor's released output.
Consequently, a small upper-bound implies that the algorithm’s output is not strongly dependent on any single training data point, indicating that the algorithm is information-theoretically stable.
\subsection{Stability Implies Generalizability For Quantum Learning Algorithms}
In this section, we demonstrate that if the Data Processor employs a stable algorithm, the results generalize well to unseen data.
For a quantum learning algorithm $\cN = \{\cN^{(s)}\}$, the joint state representing the Respondent's input and the Data Processor's output, mentioned in \eqref{HJ2}, can be expanded as:
\begin{equation}
    \sigma^{S\Te WB'}_{\cN} := \sum_{(s,w) \in \cS\times\cW}P_S(s)\ketbrasys{s}{S} \otimes P^{\cN}_{W\mid S} (w\mid s)\ketbrasys{w}{W} \otimes(\sigma^{\cN}_{s,w})^{\Te B'} ,
\label{HJ3}
\end{equation}
where $P^{\cN}_{W\mid S} (w\mid s):= \tr\left[(\bbI^{\Te} \otimes  \cN^{(s)}_{w})(\rho_s)\right]$ is the probability of the Data Processor selecting hypothesis $w$ given the dataset $s$, and $\sigma^{\cN}_{s,w}$ is the normalized residual state $\frac{(\bbI^{\Te} \otimes  \cN^{(s)}_{w})(\rho_s)}{P^{\cN}_{W\mid S} (w\mid s)}$.
In the following discussion, we define how to quantize the loss or error induced from the resultant state $\sigma^{\cN}$.

In the earlier discussed quantum learning framework, the input data $s$ and output hypothesis $w$ induced by the quantum learning algorithm $\cN$ are embedded into the output residue quantum state $\sigma^{\cN}_{s,w}$.
Therefore, to evaluate the performance of the Data Processor, we define the loss in terms of the expected value of observables with respect to the state $\sigma^{\cN}$ produced by the Data Processor.
In \cite{Caro23,WDH2025}, the authors consider a family of non-negative self-adjoint loss observables $\left\{L(s,w)\right\}_{\substack{(w,s)}}$ which act on the quantum testing system $\Te$ and the output quantum system $B'$.
Using these loss observables, we define the following global loss operator,
 \begin{equation}
 \label{GLS}
L^{S\Te WB'}:=   \sum_{(s,w) \in \cS\times\cW}\ketbrasys{s}{S} \otimes \ketbrasys{w}{W} \otimes \overset{\Te B'}{L(s,w)}.
\end{equation}

Based on the above description of the joint state $\sigma^{S\Te WB'}_{\cN}$ and the loss operators $\{L(s,w)\}$, we now distinguish between the loss observed by the Data Processor on the training data (empirical) and the loss expected on unseen fresh data (true).

The empirical loss is evaluated on the joint state generated by the actual training procedure.
It therefore retains the correlations between the dataset, the selected hypothesis, and the residual quantum output.
By contrast, the true loss is evaluated by breaking this dependence: the data used for evaluation are fresh and independent of the released output.
This is why the empirical loss is computed with $\sigma^{S\Te WB'}_{\cN}$, whereas the true loss is computed with the product state $\sigma^{S\Te}\otimes\sigma^{WB'}_{\cN}$.
The generalization error is precisely the discrepancy between these two evaluations.
\begin{definition}[Expected Empirical Loss {\cite[Definition $11$]{Caro23}}]\label{exp_emp_loss_def}
    The expected empirical loss $\hat{L}_{\rho} (\cN)$ captures the performance of the Data Processor's algorithm on the dataset provided by the Respondent.
It is the expectation over the joint distribution induced by the algorithm:
    \begin{align*}
        \hat{L}_{\rho} (\cN) &:= \bbE_{(S,W) \sim {P}^{\cN}_{SW}}[\tr[L(S,W)(\sigma^{\cN}_{S,W})^{\Te B'}]] = \tr[L^{S\Te WB'}\sigma^{S\Te WB'}_{\cN}].
\end{align*}
\end{definition}

\begin{definition}[Expected True Loss {\cite[Definition $19$]{WDH2025}}] \label{exp_true_loss_def}
    The expected true loss \newline $L_{\rho} (\cN)$ represents the generalization performance of the Data Processor's algorithm.
It evaluates the hypothesis $W$ generated by the Data Processor against a fresh dataset $\overline{S}$ independent of the training data $S$:
    \begin{align*}
        L_{\rho} (\cN)&:=\bbE_{(\overline{S},\overline{W}) \sim P_S \times {P}^{\cN}_{{W}} }\left[\tr\left[L(\overline{S},\overline{W})\left(\rho^{\Te}_{\overline{S}} \otimes (\sigma^{\cN}_{\overline{W}})^{B'}\right)\right]\right] =\tr[L^{S\Te WB'}(\sigma^{S\Te} \otimes \sigma^{WB'}_{\cN})].
\end{align*}
    where for any $s$, we define $\rho^{\Te}_{s}:= \tr_{\Tr}[\rho_s]$, for each $w$, we define  $\sigma^{\cN}_{w} := \bbE_{S \sim P^{\cN}_{S|W=w}} [\tr_{\Te}[\sigma^{\cN}_{S,w}]]$, and  $\sigma^{S\Te} $ and $ \sigma^{WB'}_{\cN}$ are the corresponding marginals of the state $\sigma^{S\Te WB'}_{\cN}$ defined in \eqref{HJ3}.
\end{definition}

\begin{remark}
    We adopt the definition of expected true loss as proposed in \cite{WDH2025} and not that of \cite{Caro23}.
The authors in \cite{WDH2025} give a rigorous justification for Definition \ref{exp_true_loss_def} and argue that the definition proposed by \cite[Definition $12$]{Caro23} is not a correct definition for the expected true loss.
\end{remark}

Operationally, this definition corresponds to the following experiment. 
The Data Processor first produces the output $WB'$ from the training data.
Then the performance of the resulting hypothesis is evaluated against an independently sampled fresh data system.
The product state $\sigma^{S\Te}\otimes\sigma^{WB'}_{\cN}$ encodes exactly this independence. 
Thus, the distinction between empirical and true loss is not a formal artifact;
it is the mathematical representation of evaluating a learned hypothesis on unseen data.

Based on these definitions, the expected generalization error is defined as the deviation between the Data Processor's empirical performance and the true performance.
\begin{definition}[Expected Generalization Error \cite{WDH2025}]\label{gen_ws_error_exp}
    The expected generalization error is:
    \begin{align*}
    \overline{\text{\textnormal{gen}}}_{\rho}(\cN) &:= \abs{ \hat{L}_{\rho}(\cN) - L_{\rho}(\cN)} = \abs{\tr[L^{S\Te WB}\sigma^{S\Te WB}_{\cN}] - \tr\left[L^{S\Te WB}(\sigma^{S\Te} \otimes \sigma^{WB'}_{\cN})\right]}.
\end{align*}
\end{definition}

We will now bound $\overline{\text{\textnormal{gen}}}_{\rho}(\cN)$ in terms of $I\left[S \Te;WB'\right].$ To obtain such a bound in the classical setting \cite{XR_2017} assumed that the loss function is sub-Gaussian.
We will make a similar assumption for the loss operators $\{L(w,s)\}$ and the Data Processor's output state.
Towards this, we make the following definition.  

The role of the following condition is to control the fluctuations of the loss observable under the product reference state.
It is the quantum analogue of the classical sub-Gaussian tail assumption used in information-theoretic generalization bounds.
Here the randomness has two sources: the classical randomness of $(S,W)$ and the quantum uncertainty associated with measuring the loss observable on $\Te B'$.
The condition below packages these two sources into a single variance proxy $\alpha^2$.
\begin{definition}[ Classical-Quantum $\alpha$-Sub-Gaussianity]\label{ass_cq_subg}
For a fixed parameter $\alpha \in (0,\infty),$ the collection $\{L(w,s)\}$ of loss operators  is said to be an $\alpha$-sub-Gaussian collection with respect to $\sigma^{S\Te} \otimes \sigma^{WB'}_{\cN} := \sum_{(s,w) \in \cS\times\cW}P_S(s)\ketbra{s} \otimes P^{\cN}_{W} (w\mid s)\ketbra{w}\otimes\rho^{\Te}_{s} \otimes  (\sigma^{\cN}_{w})^{B'},$
    if for every $\lambda \in \bbR,$ it satisfies,
    \begin{equation}
\bbE\left[\tr\left[e^{\lambda\left(L(S,W) - \bbE\left[\tr\left[L(S,W) \left(\rho^{\Te}_{S} \otimes (\sigma^{\cN}_{W})^{B'}\right)\right]\bbI^{\Te B'}\right]\right)}\left(\rho^{\Te}_{S} \otimes (\sigma^{\cN}_{W})^{B'}\right)\right]\right] \leq e^{\frac{\lambda^2 \alpha^2}{2}},\label{ass_cq_subg_eq}
    \end{equation}
    where the expectations are calculated with respect to the product distribution $ P_S \times {P}^{\cN}_{{W}}$.
Note that \eqref{ass_cq_subg_eq} is equivalent to,
    \begin{equation}
        \tr\left[e^{\lambda\left(L^{S\Te WB} - \tr\left[L^{S\Te WB}(\sigma^{S \Te} \otimes \sigma^{WB'}_{\cN}) \right]\bbI^{S\Te WB}\right)}(\sigma^{S \Te} \otimes \sigma^{WB'}_{\cN})\right] \leq e^{\frac{\lambda^2 \alpha^2}{2}},\label{ass_cq_subg_eq2}
    \end{equation}
    where $L^{S\Te WB'}$ is the global loss operator defined in \eqref{GLS}.
\end{definition}

{ Definition \ref{ass_cq_subg} naturally generalizes classical sub-Gaussianity. In the limit of trivial quantum systems ($\dim(\Te) = \dim(B')=1$), the operators $\sigma^{\cN}_{W}$ and $\bbI^{B'}$ become scalars, reducing $L(S,W)$ to a classical random loss function.
Consequently, condition \eqref{ass_cq_subg_eq} collapses to the standard classical sub-Gaussian inequality $\bbE_{(S,W)}[e^{\lambda(L(S,W) - \bbE[L(S,W)])}] \leq e^{\frac{\lambda^2 \alpha^2}{2}}$ with respect to $P_S \times {P}^{\cN}_{{W}}$.
We now present a theorem bounding the expected generalization error for quantum learning algorithms.}

\begin{theorem}\label{exp_gen_bound}
   For a fixed $\alpha \in (0,\infty),$ if the loss operators for a quantum learning algorithm $\cN$, satisfy Definition \ref{ass_cq_subg}, then, we have,
     \begin{equation}
         \overline{\text{\textnormal{gen}}}_{\rho}(\cN) \leq \sqrt{2\alpha^2 I[S \Te;WB']}.
\label{gen_bound_eq}
    \end{equation}
\end{theorem}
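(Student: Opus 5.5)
The plan is to reproduce the Xu--Raginsky argument, with the Donsker--Varadhan variational formula replaced by its quantum analogue, the Gibbs variational principle for the Umegaki relative entropy. Set $\rho := \sigma^{S\Te WB'}_{\cN}$ and $\sigma := \sigma^{S\Te}\otimes\sigma^{WB'}_{\cN}$. Both states are block diagonal in the classical registers $S,W$; in particular $\sigma$ has full support on the relevant blocks wherever $\rho$ does. We also have
\begin{equation*}
D(\rho\|\sigma) = I[S\Te;WB'],
\end{equation*}
because the mutual information of a bipartite state equals its relative entropy to the product of its marginals. Here the bipartition is $S\Te \,|\, WB'$.

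The key tool is the quantum variational formula. For any self-adjoint $H$,
\begin{equation*}
\log\tr\left[e^{H+\log\sigma}\right] = \sup_{\omega}\left\{\tr[\omega H] - D(\omega\|\sigma)\right\}.
\end{equation*}
This gives $\tr[\rho H] \le D(\rho\|\sigma) + \log\tr[e^{H+\log\sigma}]$. We then apply the Golden--Thompson inequality $\tr[e^{A+B}]\le\tr[e^Ae^B]$ to obtain
\begin{equation*}
\tr[\rho H] \le D(\rho\|\sigma) + \log\tr[e^{H}\sigma].
\end{equation*}
We choose $H = \lambda\bigl(L^{S\Te WB'} - \tr[L^{S\Te WB'}\sigma]\,\bbI\bigr)$ for $\lambda\in\bbR$. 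The left-hand side becomes $\lambda\bigl(\hat L_\rho(\cN) - L_\rho(\cN)\bigr)$, using Definitions \ref{exp_emp_loss_def} and \ref{exp_true_loss_def}. The second term on the right is exactly the quantity bounded in \eqref{ass_cq_subg_eq2}, so it is at most $\lambda^2\alpha^2/2$.

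Combining these, for every $\lambda\in\bbR$,
\begin{equation*}
\lambda\bigl(\hat L_\rho(\cN) - L_\rho(\cN)\bigr) \le I[S\Te;WB'] + \frac{\lambda^2\alpha^2}{2}.
\end{equation*}
The sub-Gaussian condition holds for all real $\lambda$, so we may apply this with both signs of $\lambda$. Each sign gives a nonnegative quadratic in $\lambda$, and its discriminant condition yields $\abs{\hat L_\rho(\cN) - L_\rho(\cN)}\le\sqrt{2\alpha^2 I[S\Te;WB']}$. This is \eqref{gen_bound_eq}.

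The main point to check is the use of the variational formula together with Golden--Thompson. Note that $L$ and $\sigma$ need not commute on $\Te B'$, which is why $\tr[e^{H+\log\sigma}]$ must be upper bounded by $\tr[e^H\sigma]$; Golden--Thompson gives exactly this direction. We must also restrict to the support of $\sigma$ when writing $\log\sigma$. Since $\rho$ is supported inside $\operatorname{supp}\sigma$ (the marginal-product support contains the joint support), $D(\rho\|\sigma)$ is finite. We can therefore compress $H$ to $\operatorname{supp}\sigma$ and interpret $\tr[e^H\sigma]$ accordingly. Compressing $H$ only changes $e^H$ off the support, which $\sigma$ annihilates, so this should cause no issue.
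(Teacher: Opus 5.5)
Your proof is the paper's proof: the paper also lower-bounds $I[S\Te;WB']=D(\rho\|\sigma)$ by $\tr[\rho H]-\ln\tr[e^{H}\sigma]$, applies the sub-Gaussian condition \eqref{ass_cq_subg_eq2}, and finishes with the discriminant of the resulting quadratic in $\lambda$; the only difference is that the paper cites this variational bound (Hayashi, Theorem 5.9), whereas you derive it from the Gibbs variational principle plus Golden--Thompson. One correction to your support remark: compressing $H$ to $\supp\sigma$ changes $e^{H}$ on the support itself whenever $H$ does not commute with the support projection, and this can happen here because $L(s,w)$ need not commute with the support of $\rho^{\Te}_s\otimes(\sigma^{\cN}_w)^{B'}$; for rank-deficient $\sigma$, instead run your argument with the full-rank $\sigma_\epsilon=(1-\epsilon)\sigma+\epsilon\,\bbI/d$ ($d$ the total dimension) and let $\epsilon\to0$ using $D(\rho\|\sigma_\epsilon)\le D(\rho\|\sigma)-\ln(1-\epsilon)$, or obtain the bound directly by measuring in the eigenbasis of $H$ and combining classical Donsker--Varadhan with data processing.
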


\begin{proof}
    See Appendix \ref{proof_exp_gen_bound} for the proof.
\end{proof}

The proof is based on a transport-type argument. 
The relative entropy 
\[
D(\sigma^{S\Te WB'}_{\cN}\|
\sigma^{S\Te}\otimes\sigma^{WB'}_{\cN})
= I[S\Te;WB']
\]
measures how far the actual joint state is from the product state corresponding to fresh evaluation data.
The classical-quantum sub-Gaussian condition converts this relative-entropy distance into a bound on the difference of expected losses.
Thus, the theorem states that generalization follows whenever the learning output is nearly independent of the training data in the information-theoretic sense.

The following corollary of Theorem \ref{exp_gen_bound}, together with Definition \ref{def_stable_learn}, indicates that when the Data Processor employs a stable algorithm, the generalization error remains tightly bounded.
\begin{corollary}
\label{SimpliesG}
     If the Data Processor's learning algorithm $\cN:=\{\cN^{(s)}\}$ is $\gamma$-stable and the loss operators of $\cN$ satisfy Definition \ref{ass_cq_subg}, for a fixed $\alpha \in (0,\infty),$ then, its generalization error is upper bounded by $\sqrt{2\alpha^2 \gamma}$.
\end{corollary}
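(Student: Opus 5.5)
The corollary follows by chaining Theorem~\ref{exp_gen_bound} with Definition~\ref{def_stable_learn}. The plan is to fix the data distribution $P_S$ with respect to which the generalization error $\overline{\text{\textnormal{gen}}}_{\rho}(\cN)$ is evaluated. We then invoke the theorem at that particular $P_S$, and finally replace the resulting information term by the worst case over distributions.

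\textbf{Step 1.} Since the loss operators of $\cN$ satisfy Definition~\ref{ass_cq_subg} with parameter $\alpha$, Theorem~\ref{exp_gen_bound} gives
\[
\overline{\text{\textnormal{gen}}}_{\rho}(\cN)\le \sqrt{2\alpha^2 I[S\Te;WB']}.
\]
Here the information is computed in the state $\sigma^{S\Te WB'}_{\cN}$ of \eqref{HJ3}, which is induced by the given $P_S$.

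\textbf{Step 2.} By $\gamma$-stability,
\[
I[S\Te;WB'] \le \max_{P_S} I[S\Te;WB'] \le \gamma,
\]
since the given distribution is one of the candidates in the maximization.

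\textbf{Step 3.} The map $x\mapsto\sqrt{2\alpha^2 x}$ is monotone nondecreasing on $[0,\infty)$, and mutual information is nonnegative. Substituting the bound from Step~2 into Step~1 therefore yields
\[
\overline{\text{\textnormal{gen}}}_{\rho}(\cN)\le\sqrt{2\alpha^2\gamma}.
\]

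There is essentially no obstacle. The only point needing care is a consistency check: the state appearing in Definition~\ref{def_stable_learn}, i.e.\ \eqref{sigma_s_state} averaged over $P_S$ as in \eqref{HJ2}, must be the same state used in Theorem~\ref{exp_gen_bound}. It is, since \eqref{HJ3} is simply the expansion of \eqref{HJ2}. Hence the information quantity bounded by stability is exactly the one appearing in the theorem.
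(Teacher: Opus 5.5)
Your proposal is correct and matches the paper's argument: the paper gives no separate proof and treats the corollary as Theorem~\ref{exp_gen_bound} at the given $P_S$, combined with the bound $I[S\Te;WB']\le\max_{P_S}I[S\Te;WB']\le\gamma$ from Definition~\ref{def_stable_learn} and monotonicity of $x\mapsto\sqrt{2\alpha^2x}$. Your check that the state in the stability definition coincides with $\sigma^{S\Te WB'}_{\cN}$ of \eqref{HJ3} is correct.
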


\subsection{Probabilistic Bounds and True Loss Lower Bound}
While bounds on the expected generalization error provide a measure of average performance, robust learning requires guarantees that hold with high confidence for individual realizations of the algorithm.  To address this dependence, we first introduce the Sandwiched R\'enyi divergence, the i.i.d. structure of the data, and the relevant sub-Gaussianity and error definitions.

The Sandwiched R\'enyi divergence \cite{MDSF_Renyi_2013} of order $\gamma \in (1, \infty)$ for two quantum states $\rho$ and $\sigma$ is defined as:
\begin{equation}
\Tilde{D}_{\gamma}(\rho \| \sigma) := \begin{cases}
            \frac{1}{\gamma - 1}\log\tr\left[\left(\sigma^{\frac{1 - \gamma}{2\gamma}}\rho\sigma^{\frac{1 - \gamma}{2\gamma}}\right)^{\gamma}\right], & {\mbox{if }} (\rho \ll \sigma),\\
            +\infty, & \mbox{else}.
\end{cases}\label{def_sand_renyi}
\end{equation}

\paragraph{I.I.D. Structure of Data and Algorithm.}
We assume the dataset $S = \{Z_1, \dots, Z_n\}$ consists of $n$ i.i.d. random variables, where each $Z_i \sim P_Z$. Commensurate with this, we assume the quantum learning algorithm $\cN$ respects this independence by acting on each data encoding locally. Specifically, the global channel decomposes as a tensor product,
\[
\cN^{(S)}_{w} := \bigotimes_{i=1}^{n} \cN^{(Z_i)}_{w},
\]
where each local map $\cN^{(Z_i)}_{w}: \cH^{\hat\Tr} \to \cH^{\hat B}$ acts on the input state $\rho_{Z_i}$ corresponding to the $i$-th datapoint. Consequently, the residual quantum output system decomposes as $B' := \hat{B}^{\otimes n}$.

\paragraph{Decomposition of Loss.}
Consistent with the independence of the processing, we assume the global loss observable $L(w,s)$ is the average of local loss observables acting on the individual subsystems:
\begin{align}
     L(w,s) := \frac{1}{n}\sum_{i=1}^{n}(\bbI^{\hat{\Te}} \otimes \bbI^{\hat{B}})^{\otimes (i-1)} \otimes \hat{L}(w,z_i) \otimes (\bbI^{\hat{\Te}} \otimes \bbI^{\hat{B}})^{\otimes( n-i)},\label{loss_observables_frag}
\end{align}
where $\hat{L}(w,z_i)$ is the local loss observable for the $i$-th data point. To guarantee exponential probability, we require the local loss operators to satisfy a sub-Gaussian condition.

\begin{definition}[Classical-Quantum Local $\alpha$-Sub-Gaussianity]\label{ass_cq_subg_loc}
For a fixed parameter $\alpha \in (0,\infty),$ the collection $\{\hat{L}(w,z)\}$ of local loss operators is said to be an $\alpha$-sub-Gaussian collection if, for every $\lambda \in \bbR$, the centered local moment generating function satisfies,
    \begin{equation}
\bbE\left[\tr\left[e^{\lambda\left(\hat{L}(Z_i,W) - \bbE\left[\tr\left[\hat{L}(Z_i,W) \left(\rho^{\Te}_{Z_i} \otimes (\sigma^{\cN}_{W})^{\hat{B}}\right)\right]\bbI^{\hat{\Te} \hat{B}}\right]\right)}\left(\rho^{\hat{\Te}}_{Z_i} \otimes (\sigma^{\cN}_{W})^{\hat{B}}\right)\right]\right] \leq e^{\frac{\lambda^2 \alpha^2}{2}},\label{ass_cq_subg_loc_eq}
    \end{equation}
    where the expectations are taken with respect to the product distribution $P_Z \times {P}^{\cN}_{{W}}$.
\end{definition}

\begin{definition}[Conditional Classical-Quantum Local $\alpha$-Sub-Gaussianity]\label{ass_cq_subg_loc_cond}
For a fixed parameter $\alpha \in (0,\infty),$ for every $w \in \cW$, the collection $\{\hat{L}(w,z)\}_{z}$ of local loss operators is said to be an $\alpha$-sub-Gaussian collection if, for every $\lambda \in \bbR$, the centered local moment generating function satisfies,
    \begin{equation}
\bbE\left[\tr\left[e^{\lambda\left(\hat{L}(Z,w) - \bbE\left[\tr\left[\hat{L}(Z,w) \left(\rho^{\Te}_{Z} \otimes (\sigma^{\cN}_{w})^{\hat{B}}\right)\right]\bbI^{\hat{\Te} \hat{B}}\right]\right)}\left(\rho^{\hat{\Te}}_{Z} \otimes (\sigma^{\cN}_{w})^{\hat{B}}\right)\right]\right] \leq e^{\frac{\lambda^2 \alpha^2}{2}},\label{ass_cq_subg_loc_cond_eq}
    \end{equation}
    where the expectations are taken with respect to the distribution $P_{Z}$.
\end{definition}

Under this i.i.d. setting, we formally define the random variable representing the absolute deviation of the generalization error.

\begin{definition}[Absolute Generalization Error Deviation]
\label{def:gen_error_deviation}
Let $\cN$ be a quantum learning algorithm with the i.i.d. structure defined above. Using \cite[Definition $20$]{WDH2025}, for a given $w \in \cW$, we define the generalization error random variable as the absolute difference between the empirical loss and the true loss $L_{\rho}(\cN, w)$, (see \cite[Definition $17$]{WDH2025}),
\begin{align}
    &~~~\textnormal{gen}_{\rho}(\cN,S,w) \nn\\
    &:= \left| \tr\left[L(S,w)(\sigma^{\cN}_{S,w})^{\Te B'}\right] - L_{\rho}(\cN,w) \right|\nn\\
    &= \abs{\frac{1}{n} \sum_{i=1}^{n}\tr\left[\hat{L}(Z_i,w)(\sigma^{\cN}_{Z_i,w})^{\hat{\Te} \hat{B}}\right] - \bbE_{\overline{Z}\sim P_Z }\left[\tr\left[\hat{L}(\overline{Z},w)\left(\rho^{\hat{\Te}}_{\overline{Z}} \otimes (\sigma^{\cN}_{w})^{\hat{B}}\right)\right]\right]}.\nn
\end{align}
\end{definition}

With these definitions in place, we prove a quantum version of \cite[Corollary $2$]{Esposito21} in term of the sandwiched R\'enyi divergence, derived under the assumption of i.i.d. data and loss observable decompositions.

\begin{theorem}\label{thm:renyi_gen_bound}
Let $\cN$ be a quantum learning algorithm.
Assume that the associated collection of loss operators $\{L(s,w)\}$ satisfies the Conditional Classical-Quantum Local Sub-Gaussian condition (Definition \ref{ass_cq_subg_loc_cond}).
For any Sandwiched R\'enyi divergence order $\gamma > 1$ and confidence level $\delta \in (0,1)$, the generalization error is bounded with probability at least $1-\delta$ as:
\begin{equation}
\Pr_{(S,W) \sim P^{\cN}_{SW}}\left\{\textnormal{gen}_{\rho}(\cN,S,W) \leq \sqrt{\frac{2\alpha^2}{n} \left( \Tilde{D}_{\gamma} (\sigma^{S\Te WB'}_{\cN}\| \sigma^{S\Te} \otimes \sigma^{WB'}_{\cN}) + \frac{\gamma}{\gamma-1} \ln \frac{2}{\delta} \right)}\right\} \geq 1-\delta, \nn
\end{equation}
where $\textnormal{gen}_{\rho}(\cN,S,W)$ is the generalization error random variable (Definition \ref{def:gen_error_deviation}) and $\Tilde{D}_{\gamma}$ denotes the Sandwiched R\'enyi divergence (defined in \eqref{def_sand_renyi}).
\end{theorem}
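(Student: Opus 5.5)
Following the classical strategy of \cite[Corollary $2$]{Esposito21}, we will transfer a small-probability event from the product reference state $\sigma^{S\Te}\otimes\sigma^{WB'}_{\cN}$ to the true joint state $\sigma^{S\Te WB'}_{\cN}$ via a change-of-measure inequality for the sandwiched R\'enyi divergence, and control the probability under the product state by local sub-Gaussian concentration.

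\emph{Step 1: a projective test for the bad event.} Set $\eta := \sqrt{\frac{2\alpha^2}{n}\left(\Tilde{D}_{\gamma}+\frac{\gamma}{\gamma-1}\ln\frac{2}{\delta}\right)}$. Define the bad set $E\subset\cS\times\cW$ of pairs $(s,w)$ with $\textnormal{gen}_{\rho}(\cN,s,w)>\eta$, and the projector $\Pi := \sum_{(s,w)\in E}\ketbra{s}\otimes\ketbra{w}\otimes\bbI^{\Te B'}$. Since $\Pi$ is block-diagonal in the classical registers,
\[
\tr[\Pi\,\sigma^{S\Te WB'}_{\cN}] = \Pr_{(S,W)\sim P^{\cN}_{SW}}\{(S,W)\in E\},
\]
and $\tr[\Pi(\sigma^{S\Te}\otimes\sigma^{WB'}_{\cN})] = (P_S\times P^{\cN}_W)(E)$. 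The key inequality is the data-processing inequality for $\Tilde{D}_\gamma$, $\gamma>1$, under the two-outcome measurement $\{\Pi,\bbI-\Pi\}$. Dropping one nonnegative term of the classical R\'enyi divergence gives
\[
\Pr_{P_{SW}}(E)\le \left(P_S\times P_W(E)\right)^{\frac{\gamma-1}{\gamma}} \exp\!\left(\frac{\gamma-1}{\gamma}\Tilde{D}_\gamma(\sigma^{S\Te WB'}_{\cN}\|\sigma^{S\Te}\otimes\sigma^{WB'}_{\cN})\right).
\]
This is the quantum replacement for the H\"older step in \cite{Esposito21}.

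\emph{Step 2: concentration under the product distribution.} For fixed $w$, under $P_S$ the empirical term is $\frac1n\sum_i X_i$ with independent $X_i=\tr[\hat{L}(Z_i,w)\sigma_{Z_i,w}]$. Here I must make sure that, under the reference (product) law, the relevant local state is $\rho^{\hat{\Te}}_{Z_i}\otimes(\sigma^{\cN}_w)^{\hat B}$, so that the mean of each $X_i$ is exactly $L_\rho(\cN,w)$. The second-order bound comes from Definition \ref{ass_cq_subg_loc_cond}. First, operator convexity of the exponential through Jensen on the state, $e^{\lambda\tr[\hat L\tau]}\le\tr[e^{\lambda\hat L}\tau]$, converts the operator MGF bound into a scalar MGF bound for $X_i$. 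Independence then gives a $\alpha^2/n$ sub-Gaussian proxy for the average, and Chernoff with a two-sided union bound yields $P_S\{\textnormal{gen}>t\}\le 2e^{-nt^2/(2\alpha^2)}$. Averaging over $w\sim P_W$ keeps this bound.

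\emph{Step 3: combine.} Plugging $t=\eta$ into Step 1 gives
\[
\Pr(E)\le 2^{\frac{\gamma-1}{\gamma}}\exp\!\left(\frac{\gamma-1}{\gamma}\left(\Tilde{D}_\gamma-\frac{n\eta^2}{2\alpha^2}\right)\right)=2^{\frac{\gamma-1}{\gamma}}\,(\delta/2)\le\delta,
\]
which is the claim.

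The main obstacle is Step 2's identification of the reference law. Under the product state $\sigma^{S\Te}\otimes\sigma^{WB'}_{\cN}$, the empirical quantity as written uses the joint residue $\sigma_{Z_i,w}$, which is not a function of the product state alone. I expect to handle this by viewing $\textnormal{gen}$ as the outcome statistics of measuring the observable $L(S,W)$ on the (possibly correlated) state. The event then becomes a spectral projector of $L^{S\Te WB'}-L_\rho\bbI$ rather than a purely classical set, and its weight under the product state is bounded directly by the operator Chernoff bound from Definition \ref{ass_cq_subg_loc_cond} using the tensor-product (local) structure \eqref{loss_observables_frag}, where the summands commute. Step 1 is unchanged, since data processing holds for any two-outcome POVM.
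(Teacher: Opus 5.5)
There is a genuine gap in Step 2, and it sits exactly where the quantum content of the theorem lies. Step 1 is fine: data processing of $\Tilde{D}_{\gamma}$ under $\{\Pi,\bbI-\Pi\}$ is valid, and so is the arithmetic in Step 3.

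The problem is that your test $\Pi=\sum_{(s,w)\in E}\ketbra{s}\otimes\ketbra{w}\otimes\bbI^{\Te B'}$ acts trivially on $\Te B'$. So Step 1 only replaces $P^{\cN}_{SW}$ by $P_S\times P^{\cN}_W$ on the classical registers. The event $E$ is still defined through the \emph{joint} residual $\sigma^{\cN}_{s,w}$. Under $P_S\times P^{\cN}_W$ with $w$ fixed, the summands $X_i=\tr[\hat{L}(Z_i,w)\sigma^{\cN}_{Z_i,w}]$ are centred at $\bbE_{Z}\tr[\hat{L}(Z,w)\sigma^{\cN}_{Z,w}]$. This generally differs from $L_\rho(\cN,w)=\bbE_{Z}\tr[\hat{L}(Z,w)(\rho^{\hat{\Te}}_Z\otimes(\sigma^{\cN}_w)^{\hat{B}})]$, for two reasons:
\begin{enumerate}
\item the processing can create $\hat{\Te}$--$\hat{B}$ correlations that the loss detects;
\item the $\hat{B}$-marginal of $\sigma^{\cN}_{Z,w}$ depends on $Z$, whereas $\sigma^{\cN}_w$ is a posterior average.
\end{enumerate}
Moreover, Definition \ref{ass_cq_subg_loc_cond} controls moment generating functions only against $\rho^{\hat{\Te}}_Z\otimes(\sigma^{\cN}_w)^{\hat{B}}$. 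Your Jensen step with $\tau=\sigma^{\cN}_{Z_i,w}$ therefore never connects to the hypothesis. As a result, $(P_S\times P^{\cN}_W)(E)$ need not be small: with a fixed bias larger than $\eta$, it tends to $1$.

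Your proposed repair replaces $E$ by a spectral projector of $L^{S\Te WB'}-L_\rho\bbI$. That changes the quantity being bounded. The projector's weight under $\sigma^{S\Te WB'}_{\cN}$ is the probability that a single measurement of the loss observable deviates. It is not the probability that the expectation value entering $\textnormal{gen}_\rho(\cN,S,W)$ deviates, and no uniform comparison between the two exists. For instance, take an observable with eigenvalues $0$ and $M$ and a state with weight $p$ on $M$ and $pM>\eta$. Its expectation exceeds $\eta$, while the projector weight $p$ can be arbitrarily small.

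What is missing is to apply the Chernoff--Jensen step to the expectation value \emph{before} changing measure. The change of measure should then act on moment generating functions over the whole classical--quantum system, not on events. The paper's proof proceeds as follows.
\begin{enumerate}
\item It uses $\mathbf{1}\{\tr[(L(s,w)-L_\rho(\cN,w)\bbI)\sigma^{\cN}_{s,w}]>\eta\}\le e^{-\lambda\eta}\tr[e^{\lambda(L(s,w)-L_\rho(\cN,w)\bbI)}\sigma^{\cN}_{s,w}]$. Averaging over $S\sim P^{\cN}_{S|W=w}$ then gives $\tr[e^{\lambda(L^{S\Te B'}_w-L_\rho(\cN,w)\bbI)}\sigma^{S\Te B'}_{\cN,w}]$, where $\sigma^{S\Te B'}_{\cN,w}=\sum_s P^{\cN}_{S|W=w}(s)\ketbra{s}\otimes\sigma^{\cN}_{s,w}$.
\item It sandwiches with the powers $\pm\frac{\gamma-1}{2\gamma}$ of $\sigma_{\mathrm{prod},w}=\sigma^{S\Te}\otimes(\sigma^{\cN}_w)^{B'}$. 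Non-commutative H\"older with exponents $\frac{\gamma}{\gamma-1}$ and $\gamma$, plus Araki--Lieb--Thirring, bounds this trace by $\big(\tr[e^{\frac{\gamma\lambda}{\gamma-1}(L^{S\Te B'}_w-L_\rho(\cN,w)\bbI)}\sigma_{\mathrm{prod},w}]\big)^{\frac{\gamma-1}{\gamma}}\exp\big(\frac{\gamma-1}{\gamma}\Tilde{D}_{\gamma}(\sigma^{S\Te B'}_{\cN,w}\|\sigma_{\mathrm{prod},w})\big)$.
\item The first factor is now taken against the product reference state, where the global sub-Gaussian bound \eqref{ass_cq_subg_loc_cond_eq2} with proxy $\alpha^2/n$ applies. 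The divergence absorbs both the $S$--$W$ dependence and the $\Te$--$B'$ correlations.
\item Averaging over $W$ via Jensen and the block-diagonal decomposition of $\Tilde{D}_{\gamma}$, optimizing $\lambda$, and taking a two-sided union bound yields the stated threshold.
\end{enumerate}
A test that is classical on $(S,W)$ cannot see the quantum part of the dependence. That is why your event-based data-processing route stalls at Step 2.
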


\begin{proof}
    See Appendix \ref{app:renyi_bound} for the complete proof.
\end{proof}

Complementing these upper bounds in expectation and probability, we provide the following lower bound on the expected true loss in terms of the expected empirical loss.
\begin{theorem}
\label{thm:true_loss_lb}
Let $\cN$ be a quantum learning algorithm with loss operators satisfying the Classical-Quantum Sub-Gaussian property (Definition \ref{ass_cq_subg}) with parameter $\alpha > 0$.
For any sandwiched R\'enyi divergence order $\gamma > 1$, the expected true loss is lower bounded by the empirical loss in the following exponential form,
\begin{equation}
    \exp\left(L_{\rho} (\cN)\right) \geq \hat{L}_{\rho}(\cN) \exp\left( - \left[ \frac{\gamma \alpha^2}{2(\gamma-1)} + \frac{\gamma-1}{\gamma} \Tilde{D}_{\gamma} (\sigma^{S\Te WB'}_{\cN}\| \sigma^{S\Te} \otimes \sigma^{WB'}_{\cN}) \right] \right).
\end{equation}
\end{theorem}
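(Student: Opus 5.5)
The plan is a change-of-measure argument from the joint state to the product state, carried out in the sandwiched R\'enyi setting with a non-commutative H\"older inequality. Write $\rho:=\sigma^{S\Te WB'}_{\cN}$ for the joint state, $\sigma:=\sigma^{S\Te}\otimes\sigma^{WB'}_{\cN}$ for the product state, $L:=L^{S\Te WB'}$ for the global loss operator, and $c:=L_\rho(\cN)=\tr[L\sigma]$. Let $\gamma':=\gamma/(\gamma-1)$ be the H\"older conjugate of $\gamma$. If $\rho\not\ll\sigma$, then $\Tilde{D}_\gamma=+\infty$ and the bound is trivial, so we assume $\rho\ll\sigma$ and work on the support of $\sigma$.

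\emph{Step 1: bound the empirical loss by an exponential moment.} Since $e^{x}\ge x$ for all real $x$, functional calculus gives the operator inequality $e^{L}\ge L$. Hence
\[
\hat L_\rho(\cN)=\tr[\rho L]\le \tr[\rho\, e^{L}] = e^{c}\,\tr[\rho\, e^{L-c\bbI}].
\]
It therefore suffices to show
\[
\tr[\rho\, X]\le \exp\Big(\tfrac{\gamma\alpha^2}{2(\gamma-1)}+\tfrac{\gamma-1}{\gamma}\Tilde{D}_\gamma(\rho\|\sigma)\Big),\qquad X:=e^{L-c\bbI}\ge 0 .
\]
This is exactly the claimed inequality after rearranging.

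\emph{Step 2: H\"older splitting.} Write
\[
\tr[\rho X]=\tr\big[\big(\sigma^{\frac{1-\gamma}{2\gamma}}\rho\,\sigma^{\frac{1-\gamma}{2\gamma}}\big)\big(\sigma^{\frac{1}{2\gamma'}}X\sigma^{\frac{1}{2\gamma'}}\big)\big],
\]
using $\frac{\gamma-1}{2\gamma}=\frac{1}{2\gamma'}$. Apply H\"older's inequality for Schatten norms with exponents $\gamma$ and $\gamma'$. The first factor has $\gamma$-norm exactly $\exp\big(\frac{\gamma-1}{\gamma}\Tilde D_\gamma(\rho\|\sigma)\big)$ by definition \eqref{def_sand_renyi}. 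For the second factor, the Araki--Lieb--Thirring inequality with $\gamma'\ge1$ gives
\[
\big\|\sigma^{\frac{1}{2\gamma'}}X\sigma^{\frac{1}{2\gamma'}}\big\|_{\gamma'}^{\gamma'}\le \tr[\sigma X^{\gamma'}] = \tr\big[\sigma\, e^{\gamma'(L-c\bbI)}\big].
\]

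\emph{Step 3: use sub-Gaussianity.} Apply \eqref{ass_cq_subg_eq2} with $\lambda=\gamma'$ to get $\tr[\sigma e^{\gamma'(L-c\bbI)}]\le e^{\gamma'^2\alpha^2/2}$. Taking the $1/\gamma'$ power gives $e^{\gamma'\alpha^2/2}=e^{\gamma\alpha^2/(2(\gamma-1))}$. Combining Steps 1--3 yields
\[
\hat L_\rho(\cN)\le e^{L_\rho(\cN)}\exp\Big(\tfrac{\gamma\alpha^2}{2(\gamma-1)}+\tfrac{\gamma-1}{\gamma}\Tilde D_\gamma\Big),
\]
which is the statement.

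The main obstacle is Step 2, the non-commutative change of measure. One must place the powers of $\sigma$ so that one factor reproduces exactly the sandwiched R\'enyi quantity and the other reduces, via Araki--Lieb--Thirring, to the moment generating function that appears in Definition \ref{ass_cq_subg}. Some care is also needed with supports when $\sigma$ is not full rank; we handle this by restricting to $\supp\sigma$, where $\rho$ lives.
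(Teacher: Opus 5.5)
Your proof is correct and is essentially the paper's argument: the same H\"older splitting with exponents $\gamma$ and $\frac{\gamma}{\gamma-1}$, Araki--Lieb--Thirring on the second factor, and the sub-Gaussian condition at $\lambda=\frac{\gamma}{\gamma-1}$. The only difference is the order of steps. The paper applies H\"older and Araki--Lieb--Thirring to $L$ itself and only afterwards uses $L^{\gamma/(\gamma-1)}\preceq e^{\frac{\gamma}{\gamma-1}L}$, whereas you first dominate $L\preceq e^{L}$ and then run H\"older on $e^{L-c\bbI}$. This gives the identical bound without needing $L\succeq 0$ for the Araki--Lieb--Thirring step, and your restriction to the support of the product state also repairs the paper's tacit full-rank use of $\bbI=\sigma_{\text{prod}}^{\frac{\gamma-1}{2\gamma}}\sigma_{\text{prod}}^{\frac{1-\gamma}{2\gamma}}$.
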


\begin{proof}
    See Appendix \ref{app:true_los_lb} for the proof.
\end{proof}

In Appendix \ref{App:comp_gen}, we compare our upper-bounds on generalization error (Theorem \ref{exp_gen_bound} and Theorem \ref{thm:renyi_gen_bound}) with prior works.

\section{Generalization Guarantees for Differentially Private Quantum Learning}\label{priv_gen}
\subsection{Trusted Setting and One-Neighbor ($\eps,\delta$)-DP}
This section examines the connection between privacy and generalization in quantum learning.
Building on Section \ref{Qfw}, which linked information-theoretic stability to generalization, we demonstrate that differential privacy enforces this stability.
Using the framework of Figure \ref{fig:quantum_diag}, we introduce the \textbf{Investigator} as the recipient of the output system $B$, generated by a \textbf{Trusted Data Processor} from the Respondent's raw data ($S, \Te, \Tr$).

To prevent the reconstruction of individual entries, the Processor ensures the algorithm satisfies differential privacy, requiring output invariance under single-entry modifications.
This constraint is mathematically equivalent to algorithmic stability (Definition \ref{def_stable_learn}), confirming privacy as a sufficient condition for generalization.
We formalize this indistinguishability requirement below.

The trusted setting should be distinguished from the untrusted setting studied in Section~\ref{SSA}.
Here the Data Processor is assumed to execute the prescribed privacy-preserving algorithm.
Therefore, the adversarial party is the Investigator, who receives only the released system $B=WB'$.
The privacy requirement is consequently imposed on the output states produced by neighboring datasets.
In this model, differential privacy is a statement about the indistinguishability of the released output, not about limiting the internal access of the Data Processor.
\begin{figure}[h]
\centering
\resizebox{120mm}{!}{
\begin{circuitikz}
\tikzstyle{every node}=[font=\large]
\draw [ font=\LARGE, color={rgb,255:red,255; green,0; blue,0} , fill= green!10, line width=1.3pt , rounded corners = 11.4] (4.25,10.5) rectangle  node[yshift=0.75cm] {$\cN = \left\{\cN^{(s)}\right\}_{s \in \cS}$} (9.25,7.25);
\draw [ color={rgb,255:red,255; green,0; blue,0} , fill= green!10, line width=1.3pt , rounded corners = 11.4] (-3.5,9.5) rectangle  node {\textbf{\textit{Respondent}} } (0.5,8.0);
\node [font=\large, color ={rgb,255:red,255; green,0; blue,0} ] at (-1.5,7.5)  {$\mathbf{\sum_{s}P_{S}(s) \ketbra{s} \otimes \rho_s}$ };
\draw [ color={rgb,255:red,255; green,0; blue,0} , fill= green!10, line width=1.3pt , rounded corners = 11.4] (13.75,9.5) rectangle  node {\textbf{\textit{Investigator}}} (17.75,8.0);
\node [font=\large,  black] at (15.8,8.3) {\textbf{can access  B}};
\draw [->, line width=1.3pt,color = cyan](0.5,8.75) to[short] (4.25,8.75);
\draw [->, dashed, line width=1.3pt,color = cyan](9.25,8.75) to[short] node[color = black, xshift = -1cm, yshift= -0.25cm] {output system $B$} (13.75,8.75);
\node [font=\large,  blue] at (6.75,8.9) {\textbf{\textit{Learning Algorithm}}};
\node [font=\large,  red] at (6.75,7.75) {\textbf{\textit{Trusted Data Processor}}};

\node [font=\large, color = black] at (2,8.5) {$\mathbf{\{s,\rho_s\}}$};
\draw[<->, thick, dashed, color = black] (15.75,9.5) to[bend right=30]  (-1.5,9.5);
\node[font=\large, color = red] at (7,12.5) {\textbf{Investigator attempts to learn $S$}};
\end{circuitikz}
}\caption{Privacy based learning framework.}\label{fig:quantum_diag}
\end{figure}

The definition contains three components.
Permutation invariance removes dependence on the ordering of the samples and ensures that the algorithm depends only on the empirical type of the dataset.
The privacy condition is the quantum indistinguishability condition for outputs generated by neighboring datasets.
Support consistency is a technical regularity condition needed to control relative-entropy quantities in the approximate privacy regime $\delta>0$;
in the pure case $\delta=0$, it is automatically satisfied.

\begin{definition}\label{DPL}
An algorithm $\cN = \left\{\cN^{(s)}\right\}_{s \in \cS}$ is a $1$-neighbor $(\eps,\delta)$-DP support consistent quantum learning algorithm if it satisfies the following conditions:

\begin{enumerate}
    \item \textbf{Permutation Invariance:} For all $s,s' \in \cS$ satisfying $T_s=T_{s'}$, 
    the algorithm satisfies the condition $\cN^{(s)}(\rho_s) =
    \cN^{(s')}(\rho_{s'})$.
This ensures that the algorithm's output depends solely on the frequency of the data, not its specific ordering.
This condition is natural in statistical learning, where the order of training examples is irrelevant to the hypothesis, and it further adds an extra layer of privacy.
    \item \textbf{Privacy:} For every $s\overset{1}\sim s'$ and $0\preceq \Lambda \preceq \mathbb{I}$, the following inequality holds:
\begin{equation}
    \begin{split}
        \tr[\Lambda \cN^{(s)}(\rho_s)] &\leq e^\eps\tr[\Lambda \cN^{(s')}(\rho_{s'})] +\delta,\\
        \tr[\Lambda \cN^{(s')}(\rho_{s'})] &\leq e^\eps\tr[\Lambda \cN^{(s)}(\rho_s)] +\delta.
\end{split}
    \label{pp4}\end{equation}

    \item \textbf{Support Consistency:} For every $s\overset{1}\sim s'$, the output supports are identical, i.e.,
    \begin{equation}
    \supp(\cN^{(s)}(\rho_{s})) = \supp(\cN^{(s')}(\rho_{s'})).\label{pp3}
    \end{equation}
    See Remark \ref{rem_supp_con} for more details.
\end{enumerate}
\end{definition}

\begin{remark}\label{rem_supp_con}
The support consistency condition \eqref{pp3} is automatically satisfied in the pure differential privacy regime ($\delta=0$).
\end{remark}

\subsection{Privacy Implies Stability: A Mutual-Information Bound}
Definition \ref{DPL} above implies that privacy guarantees extend to $k$-neighbors, albeit with degraded parameters.
\begin{corollary}\label{fact_QDP_group}
    If $\cN$ satisfies Definition \ref{DPL}, then for any inputs $s \overset{k}{\sim} s'$ ($k\geq1$) and $0 \preceq \Lambda \preceq \bbI$, we have, $\tr[\Lambda \cN^{(s)}(\rho_s)] \leq e^{k \eps} \tr[\Lambda \cN^{(s')}(\rho_{s'})]+ g_{k}(\eps,\delta),$
    where $g_{k}(\eps,\delta) := \frac{e^{k\eps}-1} {e^{\eps}-1}\delta$ is assumed to be strictly less than $1$.
The symmetric inequality holds by swapping $s$ and $s'$.
\end{corollary}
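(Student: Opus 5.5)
The plan is the standard group-privacy argument, carried out by induction along a chain of $1$-neighbors. Fix $s \overset{k}{\sim} s'$. Their types differ in total variation by $k$, so there is a sequence of types ${\bf f}^{(0)} = T_s, {\bf f}^{(1)}, \dots, {\bf f}^{(k)} = T_{s'}$ in which consecutive types differ by moving a single unit of mass from one symbol to another. At each step we decrement one coordinate where ${\bf f}(s)$ exceeds ${\bf f}(s')$ and increment one where it falls short. For each $j$ choose a sequence $s_j$ of type ${\bf f}^{(j)}$, with $s_0 = s$ and $s_k = s'$. Then consecutive sequences satisfy $s_{j-1} \overset{1}{\sim} s_j$.

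By permutation invariance (condition 1 of Definition~\ref{DPL}), the output state $\cN^{(s_j)}(\rho_{s_j})$ depends only on the type. Hence the particular representatives chosen do not matter, and the endpoints give exactly $\cN^{(s)}(\rho_s)$ and $\cN^{(s')}(\rho_{s'})$. Strictly, the chain argument only needs consecutive sequences to be $1$-neighbors, so permutation invariance serves mainly to justify the choice of representatives.

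Next, prove by induction on $j$ that for every $0 \preceq \Lambda \preceq \bbI$,
\begin{equation*}
\tr[\Lambda \cN^{(s_0)}(\rho_{s_0})] \leq e^{j\eps}\tr[\Lambda \cN^{(s_j)}(\rho_{s_j})] + \delta\sum_{i=0}^{j-1} e^{i\eps}.
\end{equation*}
The case $j=1$ is the privacy condition \eqref{pp4}. For the inductive step, apply \eqref{pp4} to the pair $s_j \overset{1}{\sim} s_{j+1}$ with the same operator $\Lambda$, and substitute:
\begin{equation*}
e^{j\eps}\tr[\Lambda \cN^{(s_j)}(\rho_{s_j})] \leq e^{(j+1)\eps}\tr[\Lambda \cN^{(s_{j+1})}(\rho_{s_{j+1}})] + e^{j\eps}\delta.
\end{equation*}
The accumulated additive term is then the geometric sum $\delta\sum_{i=0}^{k-1}e^{i\eps} = \frac{e^{k\eps}-1}{e^{\eps}-1}\delta = g_k(\eps,\delta)$. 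The symmetric inequality follows by running the same chain in reverse, using the second line of \eqref{pp4}.

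The main point to check is the combinatorial one. A pair of sequences that are $k$-neighbors in the type sense of Section~\ref{sec_def_not} must be connectable by exactly $k$ one-neighbor steps, each of which moves mass by one. This holds because $\sum_a (f_a(s)-f_a(s'))_+ = k$. The decrement/increment pairing therefore terminates after exactly $k$ steps, and each intermediate vector is a valid type with entries in $\bbN_0$ summing to $n$.

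Support consistency and the assumption $g_k<1$ play no role in the inequality itself. The assumption $g_k<1$ only ensures the bound is nontrivial for later use.
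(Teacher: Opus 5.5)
Your proposal is correct and follows the same route as the paper. You build a chain $s=s_0\overset{1}{\sim}s_1\overset{1}{\sim}\cdots\overset{1}{\sim}s_k=s'$ and iterate the one-neighbor privacy inequality, accumulating the geometric sum $\delta\sum_{i=0}^{k-1}e^{i\eps}=g_k(\eps,\delta)$. The paper only asserts that such a chain exists, so your unit-mass-transfer construction on types, and your remark that permutation invariance, support consistency and $g_k<1$ are not needed, fill in details the paper leaves implicit.
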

\begin{proof}
    See Appendix \ref{fact_QDP_group_proof} for the proof.
\end{proof}

We now utilize the framework established in Section \ref{Qfw} to analyze the stability of quantum learning algorithms that satisfy Definition \ref{DPL}.
For this analysis, we modify the framework by treating the quantum test data system $\Te$ as trivial (i.e., $\dim(\Te) = 1$).

Under this modification, the stability measure from Definition \ref{def_stable_learn} simplifies to the mutual information between the training data $S$ and the output system $WB'$.
Therefore, in the theorem below, we derive an upper bound on $I[S;WB']$ for a quantum $(\eps, \delta)$-differentially private (DP) learning algorithm.
This derivation relies on the following assumption regarding the noise parameters $\eps$ and $\delta$,
\begin{equation} \label{ass1}
    g_{n(|\mathcal{Z}|-1)}(\eps,\delta) < 1,
\end{equation}
where for any $k\geq1$, $g_k(\eps,\delta)$ is defined in Corollary \ref{fact_QDP_group}.
\begin{theorem}\label{lemma_Holevo_stability}
For $\eps \in \left[\frac{1}{n},1)\right.$, consider $\cN = \left\{\cN^{(s)}\right\}_{s \in \cS}$ to be a  $1$-neighbor $(\eps,\delta)$-DP support consistent learning algorithm
 (see Definition \ref{DPL}) and satisfies the condition \eqref{ass1}.
Then, the following holds,
    \begin{equation}
        I[S;WB'] \leq (|\mathcal{Z}| - 1) \ln\left({n e\eps}\right) + h_{\abs{\cZ}}(\eps,\delta), \label{lemma_Holevo_stability_eq}
    \end{equation}
     where, $n$ is the length of the training data and for some constant $m\in(0,1],$ $h_{\abs{\cZ}}(\eps,\delta) := \ln \frac{1}{1- g_{{n(\abs{\cZ}-1)}}(\eps,\delta)} + \frac{2}{m}g_{n(\abs{\cZ}-1)}(\eps,\delta)$ and has a property that 
     $h_{\abs{\cZ}}(\eps,0) = 0$.
\end{theorem}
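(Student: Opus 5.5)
We follow the classical covering argument of \cite{RGBS21}, lifted to classical--quantum states. The starting point is the variational characterization of the Holevo quantity: for any state $\omega^{WB'}$,
\[
I[S;WB'] = \sum_s P_S(s) D\bigl(\sigma^{\cN}_s \,\|\, \sigma^{WB'}_{\cN}\bigr) \le \sum_s P_S(s) D\bigl(\sigma^{\cN}_s \,\|\, \omega\bigr).
\]
We will therefore construct a single reference state $\omega$, independent of $P_S$, that dominates every output $\sigma^{\cN}_s$ up to a controlled multiplicative factor and a controlled additive term. This makes the bound uniform over $P_S$, as required by Definition \ref{def_stable_learn} with $\Te$ trivial.

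\textbf{Step 1 (reduction to types and a grid).} By permutation invariance, $\sigma^{\cN}_s$ depends only on the type $T_s\in T^n_{|\cZ|}$, so we write $\sigma_{\bf f}$. We cover the type simplex by a grid $G$ of types with spacing $\lceil 1/\eps\rceil$ in each of the first $|\cZ|-1$ coordinates. Every type ${\bf f}$ then has a grid point ${\bf g}({\bf f})$ with $s\overset{k}{\sim}s'$ for some $k\lesssim (|\cZ|-1)/(2\eps)$; in the worst case $k\le n(|\cZ|-1)$. The grid size satisfies $|G|\le (n\eps+1)^{|\cZ|-1}\le (ne\eps)^{|\cZ|-1}$ up to the usual constant bookkeeping, using $\eps\ge 1/n$. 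We take $\omega := \frac{1}{|G|}\sum_{{\bf g}\in G}\sigma_{\bf g}$. Then $D(\sigma_{\bf f}\|\omega)\le \ln|G| + D(\sigma_{\bf f}\|\sigma_{{\bf g}({\bf f})})$ by operator monotonicity of the logarithm, since $\omega\succeq \sigma_{\bf g}/|G|$.

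\textbf{Step 2 (bounding $D(\sigma_{\bf f}\|\sigma_{\bf g})$ via group privacy).} Corollary \ref{fact_QDP_group} gives $\tr[\Lambda\sigma_{\bf f}]\le e^{k\eps}\tr[\Lambda\sigma_{\bf g}]+g_k$ for all effects $\Lambda$, that is, a hockey-stick bound $E_{e^{k\eps}}(\sigma_{\bf f}\|\sigma_{\bf g})\le g_k$. Choosing the grid fineness so that $k\eps\le 1$ on the grid cells contributes the $+1$ that produces the factor $e$ inside $\ln(ne\eps)$. For $\delta=0$ we get $\sigma_{\bf f}\preceq e^{k\eps}\sigma_{\bf g}$ directly, hence $D\le k\eps\le 1$, and the bound follows with $h=0$. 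For $\delta>0$ we decompose $\sigma_{\bf f}=(1-g)\tau+g\,\mu$ with $(1-g)\tau\preceq e^{k\eps}\sigma_{\bf g}$. We obtain this by taking the positive part of $\sigma_{\bf f}-e^{k\eps}\sigma_{\bf g}$, whose trace is at most $g_k\le g_{n(|\cZ|-1)}<1$ by \eqref{ass1} and monotonicity of $g_k$ in $k$. Joint convexity together with $D(\tau\|\sigma_{\bf g})\le k\eps+\ln\frac{1}{1-g}$ handles the main part. The remainder $g\,D(\mu\|\sigma_{\bf g})$ is controlled using support consistency, which makes $\mu\ll\sigma_{\bf g}$ (supports are propagated along a chain of $1$-neighbors). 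A constant $m\in(0,1]$ related to the minimal nonzero eigenvalue ratio of the outputs on the common support turns this into a term of order $\frac{2}{m}g$. Together these give $h_{|\cZ|}(\eps,\delta)$, and the term vanishes at $\delta=0$ since $g_k(\eps,0)=0$.

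\textbf{Main obstacle.} The hardest step is the remainder term for $\delta>0$. Relative entropy is unbounded, so a hockey-stick guarantee alone does not control $D(\mu\|\sigma_{\bf g})$; in addition, non-commutativity prevents the pointwise likelihood-ratio truncation used classically. We would first try the positive-part (Jordan) decomposition of $\sigma_{\bf f}-e^{k\eps}\sigma_{\bf g}$ described above, and bound $D(\mu\|\sigma_{\bf g})\le \ln\|\sigma_{\bf g}^{-1/2}\mu\,\sigma_{\bf g}^{-1/2}\|_\infty$ on the common support, absorbing this quantity into $2/m$. If that proves too lossy, we would instead pass through $\tilde D_\gamma$ or the max-divergence smoothed by $g$.
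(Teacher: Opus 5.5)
\textbf{Verdict: there is a genuine gap, in the $\delta>0$ step.} Everything else in your outline matches the paper: the variational bound $I[S;WB']\le\sum_s P_S(s)D(\sigma_s\|\omega)$, the uniform mixture $\omega$ over grid representatives of the type simplex, the mixture estimate $D(\rho\|\sum_b P(b)\sigma_b)\le D(\rho\|\sigma_b)-\ln P(b)$ (Claim~\ref{lemma_mixture_quant_KL}), group privacy from Corollary~\ref{fact_QDP_group}, and $D\le D_{\max}\le k\eps$ when $\delta=0$.

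A smaller bookkeeping slip: the factor $e$ must come from exactly one place. The paper takes $t=n\eps$ cells per axis, so $\ln|T'|\le(|\cZ|-1)\ln(n\eps)$ and $k\eps\le(|\cZ|-1)n\eps/t=|\cZ|-1$. Bounding $|G|$ by $(ne\eps)^{|\cZ|-1}$ and then adding $k\eps$ counts the $e$ twice. Also, ``$k\eps\le 1$'' is not what your spacing gives.

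\textbf{Why the decomposition fails.} Your split $\sigma_{\mathbf f}=(1-g)\tau+g\mu$, with $g\mu:=(\sigma_{\mathbf f}-c\,\sigma_{\mathbf g})_+$ and $c=e^{k\eps}$, is the classical $\min(p,cq)$ truncation, and it does not survive non-commutativity. Indeed $(1-g)\tau=c\,\sigma_{\mathbf g}-(\sigma_{\mathbf f}-c\,\sigma_{\mathbf g})_-$ is $\preceq c\,\sigma_{\mathbf g}$, but it is generally not positive semidefinite. For $\sigma_{\mathbf f}=\ketbra{0}$, $\sigma_{\mathbf g}=\ketbra{+}$ and $c=1.1$, its $\ket{1}$ diagonal entry is about $-0.11$. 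This persists after mixing both states with a little $\bbI/2$, so support consistency does not help.

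No other choice of decomposition rescues it. The PSD order has no meet, and every $Y$ with $0\preceq Y\preceq\sigma_{\mathbf f}$ and $Y\preceq c\,\sigma_{\mathbf g}$ can have trace far below $1-E_c(\sigma_{\mathbf f}\|\sigma_{\mathbf g})$. Take $\sigma_{\mathbf f}=(1-\eta)\ketbra{\psi}+\eta\bbI/2$ and $\sigma_{\mathbf g}=(1-\eta)\ketbra{\phi}+\eta\bbI/2$, with $\psi,\phi$ at angle $\theta$. Then $E_c\le\theta$, yet $\tr Y\le(1+c)\eta/(1-\cos\theta)=O(\eta/\theta^2)$. With $\eta\ll\theta^2$, any decomposition of your form forces $g\approx 1$ rather than $g\le g_k$. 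So $\tau$ is not a state, and the convexity step and $D(\tau\|\sigma_{\mathbf g})\le k\eps+\ln\frac{1}{1-g}$ have nothing to act on.

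\textbf{How the paper handles it.} It never decomposes $\sigma_{\mathbf f}$; it smooths instead.
\begin{enumerate}
\item \cite[Lemma 6.9]{Tomamichel_2016} gives a nearby state $\rho'$ with $D_{\max}(\rho'\|\sigma_{\mathbf g})\le k\eps+\ln\frac{1}{1-g_k}$.
\item Claim~\ref{fact:Dmax_bound} gives the triangle-type inequality $D(\sigma_{\mathbf f}\|\sigma_{\mathbf g})\le D(\sigma_{\mathbf f}\|\rho')+D_{\max}(\rho'\|\sigma_{\mathbf g})$, which follows from operator monotonicity of $\ln$.
\item The remainder $D(\sigma_{\mathbf f}\|\rho')$ is a divergence between two close states. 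It is bounded by the continuity estimate $\frac{2}{m}E_1^2(\sigma_{\mathbf f}\|\rho')\le\frac{2}{m}g_k$; this is where the constant $m$ enters, with support consistency keeping the divergences finite.
\end{enumerate}
Your fallback (``max-divergence smoothed by $g$'') points in this direction. The missing ingredient is exactly this inequality, which replaces your $g\,D(\mu\|\sigma_{\mathbf g})$ remainder by $D(\sigma_{\mathbf f}\|\rho')$.
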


\begin{proof}
    See Appendix \ref{proof_theorem_main} for the proof.
\end{proof}

The proof has a simple structure. 
First, the mutual information $I[S;WB']$ is upper bounded by choosing a suitable reference output state $\omega^B$. 
Second, the set 
of empirical types of length $n$ is covered by a smaller grid. 
For each dataset, one compares its output state with the output state associated with a nearby grid representative. 
The differential privacy condition controls the divergence between these nearby outputs, while the number of grid points contributes the logarithmic covering term.
Optimizing the grid size yields the term $(|\cZ|-1)\ln(ne\eps)$, and the approximate privacy parameter $\delta$ contributes the overhead $h_{|\cZ|}(\eps,\delta)$.

The stability results for the case when $\eps \in \left.[0,\frac{1}{n}\right)$ and the case when $\eps \in \left(1\right.,\infty)$ follow from the proof techniques of Theorem \ref{lemma_Holevo_stability}.
We mention them as the corollaries below,

\begin{corollary}\label{corr_Holevo_stability_2}
    For $\eps \in \left.[0,\frac{1}{n}\right)$, consider a learning algorithm $\cN = \left\{\cN^{(s)}\right\}_{s \in \cS}$, which satisfies the properties mentioned in Definition \ref{DPL} and the condition \eqref{ass1}.
Then, $
        I[S;WB'] \leq (|\mathcal{Z}| - 1)\eps n + h_{\abs{\cZ}}(\eps,\delta).$
\end{corollary}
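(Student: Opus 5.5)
We follow the proof of Theorem~\ref{lemma_Holevo_stability} but take the trivial grid: every type is compared with one fixed reference type, so no covering term is needed. Since $\dim(\Te)=1$, we begin from the variational (golden-formula) bound
\[
I[S;WB'] = \min_{\omega^{B}} \sum_{s} P_S(s)\, D\bigl(\cN^{(s)}(\rho_s)\,\big\|\,\omega^{B}\bigr) \le \sum_s P_S(s)\, D\bigl(\cN^{(s)}(\rho_s)\,\big\|\,\omega^{B}\bigr),
\]
valid for any state $\omega^B$. By permutation invariance (item~1 of Definition~\ref{DPL}), $\cN^{(s)}(\rho_s)$ depends only on the type $T_s$. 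We therefore fix a single reference type ${\bf f}_0 \in T^n_d$, pick a sequence $s_0$ of that type, and set $\omega^B := \cN^{(s_0)}(\rho_{s_0})$.

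\textbf{Step 1: every type is a bounded number of neighbors from ${\bf f}_0$.} Any two types of length $n$ over $\cZ$ are $k$-neighbors with $k\le n$. The bound in the statement has the factor $(|\cZ|-1)n$, which we obtain by moving between types one coordinate at a time. Concretely, we route a chain of single-entry changes through at most $|\cZ|-1$ coordinate adjustments, each of size at most $n$. Equivalently, we just use $k\le n(|\cZ|-1)$, which is the index appearing in assumption \eqref{ass1}. Corollary~\ref{fact_QDP_group} then gives, for every $s$ and every $0\preceq\Lambda\preceq\bbI$,
\[
\tr[\Lambda\,\cN^{(s)}(\rho_s)] \le e^{K\eps}\,\tr[\Lambda\,\omega] + g_K(\eps,\delta),
\qquad K := n(|\cZ|-1),
\]
together with the symmetric inequality.

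\textbf{Step 2: convert the hypothesis-testing bound into a relative-entropy bound.} We reuse the lemma from the proof of Theorem~\ref{lemma_Holevo_stability}. That lemma says: if $\rho\le e^{a}\sigma$ in the $(\,\cdot\,,\delta')$ sense with $\delta'<1$, and the supports agree, then
\[
D(\rho\|\sigma)\le a + \ln\frac{1}{1-\delta'} + \frac{2}{m}\,\delta'.
\]
Here $m$ is the same constant used there, and support consistency (chained along neighbors via \eqref{pp3}) guarantees finiteness. Applying it with $a=K\eps$ and $\delta'=g_K(\eps,\delta)<1$ gives, for every $s$,
\[
D\bigl(\cN^{(s)}(\rho_s)\,\big\|\,\omega\bigr)\le (|\cZ|-1)\,n\eps + h_{|\cZ|}(\eps,\delta).
\]
Averaging over $P_S$ yields the corollary.

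\textbf{Why this matches the regime.} For $\eps<1/n$, the grid optimization in Theorem~\ref{lemma_Holevo_stability} would call for a grid spacing of $1/\eps>n$, i.e.\ a single cell. This explains why the trivial grid is the right choice and why the $\ln(ne\eps)$ term is replaced by $n\eps$.

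The main obstacle is Step~2 in the approximate regime $\delta>0$. A pure multiplicative operator inequality is unavailable there, so we must invoke the smoothing argument from the main theorem's appendix: decompose $\cN^{(s)}(\rho_s)$ into a part dominated by $e^{K\eps}\omega$ plus a remainder of weight at most $g_K$, and control the remainder's contribution to $D$ using the support-consistency constant $m$ (the minimum nonzero eigenvalue normalization). We will check that chaining support consistency along the path from $s$ to $s_0$ keeps the supports equal, so that $m$ is uniform over all $s$.
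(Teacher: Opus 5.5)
Your proposal is correct and is essentially the paper's proof: the paper simply sets the grid size $t=1$ in \eqref{lemma_Holevo_stability_eq6}, which amounts to taking a single reference output state $\omega^B$, bounding every type's distance to it by $(|\cZ|-1)n$, and applying the same Step~4 smoothing bound (Tomamichel's Lemma~6.9, Claim~\ref{fact:Dmax_bound}, and the $\frac{2}{m}g$ term) with $K=n(|\cZ|-1)$, so the covering term $(|\cZ|-1)\ln t$ vanishes. Your use of an arbitrary reference type instead of the grid center is immaterial, since any two types are at distance at most $n\le n(|\cZ|-1)$ and both $e^{k\eps}$ and $g_k$ are increasing in $k$.
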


\begin{proof}
    See {Appendix \ref{proof_corr1}} for the proof.
\end{proof}

\begin{corollary}\label{corr_Holevo_stability_1}
    For $\eps \in \left(1\right.,\infty)$, consider a  learning algorithm $\cN = \left\{\cN^{(s)}\right\}_{s \in \cS}$, which satisfies the properties mentioned in Definition \ref{DPL} and the condition \eqref{ass1}.
Then, $
        I[S;WB'] \leq (|\mathcal{Z}| - 1) \ln\left(n+1\right).$
\end{corollary}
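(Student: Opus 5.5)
The plan is to prove Corollary \ref{corr_Holevo_stability_1} without using the privacy condition at all, since for $\eps>1$ the bound $(|\cZ|-1)\ln(n+1)$ is simply the logarithm of the number of distinct types. The argument uses only \emph{permutation invariance} from Definition \ref{DPL}. First, I will use the variational (golden) formula for the Holevo quantity: for any state $\omega^{WB'}$,
\begin{equation*}
I[S;WB'] = \sum_s P_S(s) D\big(\sigma^{\cN}_s\|\sigma^{WB'}_{\cN}\big) \le \sum_s P_S(s) D\big(\sigma^{\cN}_s\|\omega^{WB'}\big).
\end{equation*}
Here $\Te$ is trivial, as stipulated before Theorem \ref{lemma_Holevo_stability}. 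This is the same first step as in the proof of Theorem \ref{lemma_Holevo_stability}.

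Next, by permutation invariance, $\sigma^{\cN}_s$ depends only on the type $T_s={\bf f}\in T^n_{|\cZ|}$. Write $\tau_{\bf f}$ for the common output state. I will choose the reference state as the uniform mixture over types,
\begin{equation*}
\omega := \frac{1}{|T^n_{|\cZ|}|}\sum_{{\bf f}\in T^n_{|\cZ|}} \tau_{\bf f}.
\end{equation*}
Then $\omega \succeq |T^n_{|\cZ|}|^{-1}\tau_{\bf f}$ for every ${\bf f}$. Operator monotonicity of the logarithm gives $D(\tau_{\bf f}\|\omega)\le \ln|T^n_{|\cZ|}|$, and the support condition holds automatically. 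Averaging over $P_S$ yields $I[S;WB']\le \ln|T^n_{|\cZ|}|$.

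Finally, I will use the standard type-counting bound $|T^n_d| = \binom{n+d-1}{d-1}\le (n+1)^{d-1}$ with $d=|\cZ|$. This holds because each of the first $d-1$ coordinates of a type lies in $\{0,\dots,n\}$ and the last coordinate is determined by the others. This gives $I[S;WB']\le(|\cZ|-1)\ln(n+1)$.

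This is the ``grid equals the full type set'' specialization of the covering argument, with grid spacing $1$. In that case every dataset is its own representative, so no privacy divergence term appears and the $\delta$-overhead $h_{|\cZ|}$ drops out entirely. I expect no real obstacle. The only point needing care is the operator inequality $\log\omega\succeq \log\tau_{\bf f}-\ln|T^n_{|\cZ|}|$ on the support of $\tau_{\bf f}$, which follows from operator monotonicity of $\log$, restricted appropriately to supports.
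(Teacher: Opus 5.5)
Your proposal is correct and is essentially the paper's argument. The paper's proof of this corollary simply falls back to Step 1 of the proof of Theorem \ref{lemma_Holevo_stability}: it takes the uniform mixture over all type representatives, applies Claim \ref{lemma_mixture_quant_KL}, and uses $|T^n_{|\cZ|}|\le (n+1)^{|\cZ|-1}$, relying only on permutation invariance, exactly as you do. Your per-type comparison $\omega\succeq |T^n_{|\cZ|}|^{-1}\tau_{\bf f}$ (or more simply $D(\tau_{\bf f}\|\omega)\le D_{\max}(\tau_{\bf f}\|\omega)\le \ln|T^n_{|\cZ|}|$) also avoids the blanket support hypothesis $\rho\ll\sigma_b$ for all $b$ that Claim \ref{lemma_mixture_quant_KL} imposes.
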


\begin{proof}
    See Appendix \ref{proof_corr2} for the proof.
\end{proof}

{ Theorem \ref{lemma_Holevo_stability} quantitatively links differential privacy to algorithmic stability by bounding the mutual information $I[S;WB']$ between the training data and the algorithm's output.
This bound is uniform and scales explicitly with dataset size $n$, alphabet size $|\cZ|$, and privacy parameters $(\eps,\delta)$.
By translating $(\eps,\delta)$-DP guarantees into a provable stability bound, the theorem establishes a direct connection between privacy and stability-based generalization controls.

Furthermore, Theorem \ref{lemma_Holevo_stability} strictly generalizes \cite[Proposition 2]{RGBS21}: by taking a trivial quantum system ($\dim(B')=1$) and setting $\delta=0$, the overhead $h_{\abs{\cZ}}$ vanishes, recovering the classical stability bound $(|\mathcal{Z}| - 1) \ln(n e\eps)$.}

\begin{remark}
    The upper-bound obtained in Theorem \ref{lemma_Holevo_stability} is independent of $P_S$ and thus, Theorem \ref{lemma_Holevo_stability} implies that if a quantum learning algorithm $\cN = \left\{\cN^{(s)}\right\}_{s \in \cS}$ satisfies Definition \ref{DPL}, then $\cN$ is $\left((|\mathcal{Z}| - 1) \ln\left({n e\eps}\right) +\right. $ $\left.h_{\abs{\cZ}}(\eps,\delta)\right)$-stable (see Definition \ref{def_stable_learn}).
A similar observation also follows for Corollaries \ref{corr_Holevo_stability_2} and \ref{corr_Holevo_stability_1}.
\end{remark}

In Appendix \ref{comp_stab}, we present a detailed comparison of Theorem \ref{lemma_Holevo_stability} with \cite[Proposition~10]{Nuradha25}.
Additionally, in Appendix \ref{comp_stab_car}, we provide an in-depth comparison of Theorem \ref{lemma_Holevo_stability} with the results of \cite[Appendix C.7]{Caro23} in the setting of untrusted Data Processors, a topic we will elaborate on in the subsequent section.
\subsection{From Stability to Generalization: DP Generalization Guarantees}
We now formally demonstrate that differential privacy guarantees generalization.
Conceptually, the argument is a two-step implication:
\[
\text{differential privacy}
\quad\Longrightarrow\quad
\text{information-theoretic stability}
\quad\Longrightarrow\quad
\text{generalization}.
\]
The first implication is Theorem~\ref{lemma_Holevo_stability}, which bounds $I[S;WB']$ using the privacy parameters.
The second implication is Theorem~\ref{exp_gen_bound}, which converts this information bound into a generalization bound.
The following corollary records the combined consequence.

By combining Theorem \ref{lemma_Holevo_stability} with Theorem \ref{exp_gen_bound} (assuming a trivial system $\Te$), which bounds the expected generalization error via the square root of mutual information, we establish a direct link.
Specifically, a $1$-neighbor $(\eps,\delta)$-DP support consistent algorithm limits dependence on individual data points, thereby ensuring robust generalization.
We formalize this result in the corollary below.

\begin{corollary}
    \label{cor:dp_generalization_bound}
    Consider a quantum learning algorithm $\cN$ that is $1$-neighbor $(\eps,\delta)$-DP support consistent (see Definition \ref{DPL}) with $\eps \in [\frac{1}{n}, 1)$ and satisfies condition \eqref{ass1}.
If the loss operator satisfies \eqref{ass_cq_subg_eq} (or equivalently \eqref{ass_cq_subg_eq2}) mentioned in Definition \ref{ass_cq_subg}, for some $\alpha \in (0,\infty)$, then, $\overline{\text{\textnormal{gen}}}_{\rho}(\cN) \leq \sqrt{2\alpha^2 \mathcal{I}_{bound}},$
    where $\mathcal{I}_{bound} = \left[ (|\mathcal{Z}| - 1) \ln\left({n e\eps}\right) + h_{\abs{\cZ}}(\eps,\delta) \right]$ is the upper bound on the mutual information derived in Theorem \ref{lemma_Holevo_stability}.
\end{corollary}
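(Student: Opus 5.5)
The corollary follows by chaining Theorem~\ref{lemma_Holevo_stability} into Theorem~\ref{exp_gen_bound}, specialized to a trivial testing system $\Te$ ($\dim(\Te)=1$). In that case the stability quantity $I[S\Te;WB']$ appearing in Theorem~\ref{exp_gen_bound} reduces exactly to $I[S;WB']$. The reason is that the joint state $\sigma^{S\Te WB'}_{\cN}$ in \eqref{HJ3} becomes the classical--quantum state $\sigma^{SWB'}_{\cN}$. Likewise, the reference product state $\sigma^{S\Te}\otimes\sigma^{WB'}_{\cN}$ becomes $\sigma^{S}\otimes\sigma^{WB'}_{\cN}$. We check that this specialization is harmless for the other hypotheses: the loss observables $L(s,w)$ now act on $B'$ alone, and the sub-Gaussian condition \eqref{ass_cq_subg_eq2} is assumed to hold with respect to this reduced product state. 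Theorem~\ref{exp_gen_bound} is then applicable verbatim and gives
\begin{equation*}
\overline{\text{\textnormal{gen}}}_{\rho}(\cN)\le\sqrt{2\alpha^2\, I[S;WB']}.
\end{equation*}

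Next, we invoke Theorem~\ref{lemma_Holevo_stability}. Its hypotheses are exactly those of the corollary: $1$-neighbor $(\eps,\delta)$-DP support consistency in the sense of Definition~\ref{DPL}, $\eps\in[\frac1n,1)$, and condition \eqref{ass1}. It therefore gives $I[S;WB']\le \mathcal{I}_{bound}$. This mutual information is evaluated on the same state $\sigma^{SWB'}_{\cN}$ induced by $P_S$ and $\cN$. The bound in Theorem~\ref{lemma_Holevo_stability} is uniform over $P_S$, so no compatibility issue arises with the particular data distribution used to define the generalization error.

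Finally, since $x\mapsto\sqrt{2\alpha^2x}$ is monotone nondecreasing on $[0,\infty)$, substituting the information bound into the generalization bound yields $\overline{\text{\textnormal{gen}}}_{\rho}(\cN)\le\sqrt{2\alpha^2\mathcal{I}_{bound}}$. We should also confirm that $\mathcal{I}_{bound}\ge 0$, so the square root is well defined. This holds because $\ln(ne\eps)\ge 1$ for $\eps\ge 1/n$, and because $h_{\abs{\cZ}}(\eps,\delta)\ge0$ when $g<1$.

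The only point requiring care is the first step: checking that the trivial-$\Te$ reduction is consistent across both theorems, since Theorem~\ref{lemma_Holevo_stability} is stated for $I[S;WB']$ while Theorem~\ref{exp_gen_bound} is stated for $I[S\Te;WB']$. Once $\dim(\Te)=1$ is fixed, both quantities coincide, and the rest is direct substitution.
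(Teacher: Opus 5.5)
Your proposal is correct and is the paper's own argument. Like the paper, you specialize to a trivial testing system $\Te$ so that $I[S\Te;WB']=I[S;WB']$, apply Theorem~\ref{exp_gen_bound}, and substitute the bound of Theorem~\ref{lemma_Holevo_stability} via monotonicity of the square root; your extra checks on uniformity over $P_S$ and nonnegativity of $\mathcal{I}_{bound}$ are fine, and the latter is automatic anyway since $\mathcal{I}_{bound}\ge I[S;WB']\ge 0$.
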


\section{Untrusted Data Processor and Information Theoretic Admissibility (ITA)}\label{SSA}
\subsection{Untrusted Data Processor Model}
In the previous sections, we assumed a \textit{trusted} Data Processor model where the Data Processor reliably executes the privacy-preserving algorithm and releases only the privatized output.
We now relax this assumption to address the Untrusted Data Processor scenario.
Here, the Data Processor is considered adversarial and may attempt to leak or extract information about the training data $s$ beyond what is contained in the intended output system $B$.
To address this privacy threat rigorously, we adopt a worst-case security model where the Data Processor and the Investigator collude or effectively act as a single adversarial entity.

This change of adversarial model is substantial. 
In the trusted setting, it is meaningful to define privacy only for the released output, because the Data Processor is assumed to follow the prescribed algorithm.
In the untrusted setting, however, the prescribed algorithm need not be the algorithm actually executed.
Therefore, an output-based privacy guarantee alone cannot certify privacy against the Data Processor.
The privacy mechanism must instead be evaluated together with the information that the encoded quantum states make physically accessible.

In this setting, the Respondent does not grant the Data Processor access to the raw classical data $s$ directly.
Instead, the Respondent provides access only to the set of encoded quantum states $\{\rho_s\}_s$.
The Processor is tasked with running a learning algorithm to produce a hypothesis $w$.
Since the Data Processor is untrusted, the learning algorithm is modeled as a single, fixed completely positive, trace-preserving (CP-TP) map $\mathcal{N}$ that must be independent of the specific input index $s$.
The total state generated at the output of this process is, $
\cN(\rho_s) := \sum_{w \in \cW} \cN_{w}(\rho_s) \otimes \ketbra{w},$
where each $\mathcal{N}_{w}$ is a completely positive trace non-increasing map summing to $\cN$.
To ensure privacy, the Respondent mandates that this map $\mathcal{N}$ must satisfy differential privacy constraints with respect to neighboring inputs.
We formalize this via the following definition for an untrusted Data Processor.

\begin{definition}\label{DPL2}An algorithm $\cN$ is said to be a $1$-neighbor $(\varepsilon,\delta)$-DP support-consistent learning algorithm with an untrusted Data Processor if the following conditions hold.
\begin{enumerate}
\item \textbf{Permutation Invariance:}
For all $s,s' \in \cS$ satisfying $T_s=T_{s'}$, 
    the algorithm satisfies the condition $\cN(\rho_s) = \cN(\rho_{s'})$.
\item \textbf{Privacy:} For every $s \overset{1}{\sim} s'$ (see Section \ref{sec_def_not}) and $0 \preceq \Lambda \preceq \mathbb{I}$, the following inequality holds:
\begin{align}
\tr[\Lambda \cN(\rho_s)] &\leq e^\varepsilon \tr[\Lambda \cN(\rho_{s'})] + \delta,\nn\\
\tr[\Lambda \cN(\rho_{s'})] &\leq e^\varepsilon \tr[\Lambda \cN(\rho_{s})] + \delta.
\nn\end{align}
\item \textbf{Support Consistency:} For every $s \overset{1}{\sim} s'$, the output supports are identical, i.e.,\begin{equation}\supp(\cN(\rho_s)) = \supp(\cN(\rho_{s'})).
\label{pp3A}\end{equation}\end{enumerate}
\end{definition}

\subsection{Information-Theoretic Admissibility (ITA): Motivation and Definition}

While the Respondent prescribes a specific privacy-preserving algorithm
\(\mathcal{N}\), an adversarial Data Processor possessing the raw quantum
inputs \(\{\rho_s\}_s\) is not technically bound to execute
\(\mathcal{N}\).
The Data Processor aims to extract the maximum possible
information about \(s\).
Therefore, there exists a risk that the Data
Processor might execute a \textit{strictly more informative} algorithm
\(\mathcal{N}'\) and then perform post-processing so that the publicly
visible behavior is indistinguishable from that of the prescribed
algorithm \(\mathcal{N}\).
In operational terms, the Data Processor
could first perform a non-private learning procedure and only afterwards
artificially stitch up the noise needed to reproduce the apparent output
of the prescribed private algorithm.

If such a scenario is possible, the privacy guarantees calculated based
on \(\mathcal{N}\) (as mentioned in Definition \ref{DPL2}) are rendered
void, as the Data Processor effectively holds the information content of
\(\mathcal{N}'\).
To formalize this obstruction, we use an ordering of informativeness between learning algorithms.

\begin{definition}[Information ordering of Algorithms]\label{def_inf_chan}
    Let $\cN:=\{\cN_w\}_w$ and $\cN':=\{\cN'_w\}_w$ be two quantum learning algorithms and
$\{\rho_s\}_s$ be a fixed set of input states to these algorithms.
We say that $\cN'$ is \emph{more informative} than $\cN$
with respect to $\{\rho_s\}_s$ if there exists a family of CP-TP maps
$\{\Gamma_w\}_w$ such that
\begin{equation}
\Gamma_w \circ \cN'_w(\rho_s) = \cN_w(\rho_s),
\quad \forall\, (s,w ) \in \cS\times\cW.
\nn
\end{equation}

Furthermore, $\cN'$ is said to be strictly more informative than $\cN$ with respect to $\{\rho_s\}_s$  if $\cN'$ is more informative than $\cN$, but the converse does not hold.
\end{definition}

A point that is important in the present quantum setting is that a
learning algorithm is not modeled as a single quantum channel, but as a
quantum instrument.
Thus we write
\[
\mathcal N=\{\mathcal N_w\}_{w\in\mathcal W},
\qquad
\mathcal N'=\{\mathcal N'_w\}_{w\in\mathcal W},
\]
where the classical label \(w\) denotes the announced hypothesis and
\(\mathcal N_w\) is the corresponding completely positive map producing
the residual quantum output.
The normalization condition is that
\(\sum_{w\in\mathcal W}\mathcal N_w\) is trace preserving, and similarly
for \(\mathcal N'\).
In this instrument formulation, informativeness is compared branch by
branch, with respect to the same classical hypothesis \(w\).
We say that
an algorithm \(\mathcal N'\) is more informative than \(\mathcal N\) with
respect to the ensemble \(\{\rho_s\}_{s\in\mathcal S}\) if there exists a
post-processing CPTP map \(\Gamma\), called a simulation map, such that
\begin{equation}
\mathcal N_w(\rho_s)
=
\Gamma\!\left(\mathcal N'_w(\rho_s)\right),
\qquad
\forall s\in\mathcal S,\ \forall w\in\mathcal W .
\label{sim_eq}
\end{equation}
Here \(\Gamma\) acts only on the residual quantum output system.
It is
not an instrument and it does not generate, randomize, relabel, or
coarse-grain the classical hypothesis \(w\).
Rather, after the same
branch \(w\) has been selected, \(\Gamma\) simulates the residual quantum
output of \(\mathcal N\) from that of \(\mathcal N'\).

Thus, ITA should not be interpreted as a privacy condition by itself.
Rather, it is a credibility condition for privacy claims in the
untrusted setting.
It rules out algorithms whose apparent privacy is
produced only by artificial post-processing noise after a more
informative computation has already been performed.
Once ITA is
imposed, any remaining privacy must come from the physical
indistinguishability of the encoded states or from genuine limitations
of the allowed quantum operation.

If such a relation holds, the data-processing inequality
\cite{Petz1985Quasientropy} implies that the mutual information between
the input and the output of \(\mathcal N'\) is greater than or equal to
that of \(\mathcal N\).
Strict informativeness means that the above
simulation relation holds from \(\mathcal N'\) to \(\mathcal N\), while
\(\mathcal N'\) cannot itself be simulated from \(\mathcal N\) by a
CPTP map of the same branch-wise form (as mentioned in \eqref{sim_eq}).

To certify that a prescribed algorithm \(\mathcal N\) dominates over
every other algorithm \(\mathcal N'\) with respect to the collection
\(\{\rho_s\}_s\), we introduce the concept of Information-Theoretic
Admissibility (ITA).
\begin{definition}[Information-Theoretic Admissibility]\label{def:ITA}
A learning algorithm \(\mathcal N=\{\mathcal N_w\}_{w\in\mathcal W}\) is
ITA with respect to the set \(\{\rho_s\}_s\) if there exists no other
learning algorithm
\(\mathcal N'=\{\mathcal N'_w\}_{w\in\mathcal W}\) that is
\textit{strictly} more informative than \(\mathcal N\).
Equivalently,
there is no \(\mathcal N'\) such that \(\mathcal N\) can be obtained from
\(\mathcal N'\) through a simulation map \(\Gamma\) satisfying
\eqref{sim_eq}, while \(\mathcal N'\) cannot be obtained from
\(\mathcal N\) through a CPTP simulation map of the same branch-wise
form.
\end{definition}

Essentially, if an algorithm is ITA, it implies that the Data Processor
is already performing the optimal information extraction allowed by the
quantum mechanics formalism within the specified class of instruments
and for the specified ensemble.

The definition is ensemble-dependent.  It does not compare instruments
on all possible input states, but only on the particular family of
encoded states \(\{\rho_s\}_s\) supplied by the Respondent.
This is
necessary because privacy in the present model is a property of the
concrete data-encoding scheme, not of the instrument in isolation.
Consequently, an algorithm may be admissible for one ensemble and
inadmissible for another.
\subsection{Quantum Advantage: Privacy under ITA}

The imposition of the ITA condition reveals a fundamental divergence
between classical and quantum privacy capabilities.
In the classical
case, admissibility in the untrusted setting collapses privacy in a
strong sense: if the prescribed algorithm fails to retain enough
information to reconstruct the raw input, then a Data Processor can
instead keep a more informative classical representation and simulate
the prescribed algorithm afterwards.
In the quantum case, however,
non-commutativity changes this conclusion.  Since non-orthogonal
quantum states cannot be perfectly distinguished, an information-
theoretically admissible procedure need not permit perfect recovery of
the underlying classical data.

\noindent\textbf{The Collapse of Classical Privacy:}
In the classical domain, the encoded states \(\rho_s\) effectively
behave as probability distributions, or equivalently as mutually
commuting states.
In such a scenario, the lemma below shows that
simultaneous ITA, interpreted as maximality in the informativeness
order, and nontrivial privacy are incompatible unless privacy is already
imposed at the level of the raw states.
\begin{lemma}\label{lem:ITA_classical}
Assume that all states \(\{\rho_s\}_s\) commute, i.e., the setting is
classical.
If there exists no reconstruction map \(\Gamma\) such that
\[
\Gamma \circ \mathcal N(\rho_s)=\rho_s
\]
for every \(s\), then the algorithm \(\mathcal N\) is not ITA.
\end{lemma}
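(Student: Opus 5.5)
The plan is to exhibit an explicit algorithm \(\mathcal N'\) that is strictly more informative than \(\mathcal N\) with respect to \(\{\rho_s\}_s\) in the sense of Definition~\ref{def_inf_chan}. Since the \(\rho_s\) commute, I will diagonalize them in a common orthonormal basis \(\{\ket{x}\}_x\), so that \(\rho_s=\sum_x p_s(x)\ketbra{x}\) with each \(p_s\) a probability distribution. The natural candidate is a \emph{classical copying} instrument. First measure in the basis \(\{\ket{x}\}\), which does not disturb any \(\rho_s\). Then keep a classical copy of \(x\) in an extra register \(X\), and apply \(\mathcal N\) to a fresh preparation of \(\ket{x}\). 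Concretely, set
\[
\mathcal N'_w(\tau):=\sum_x \bra{x}\tau\ket{x}\,\ketbra{x}^X\otimes \mathcal N_w(\ketbra{x}).
\]
The maps \(\sum_w\mathcal N'_w\) are CPTP, so \(\mathcal N'=\{\mathcal N'_w\}_w\) is a legitimate learning algorithm with residual output \(XB'\).

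The first step is to check the simulation relation \eqref{sim_eq}. Take \(\Gamma:=\tr_X\), which is CPTP, acts only on the residual system, and is independent of \(w\). By linearity of \(\mathcal N_w\) and diagonality of \(\rho_s\),
\[
\Gamma(\mathcal N'_w(\rho_s))=\sum_x p_s(x)\,\mathcal N_w(\ketbra{x})=\mathcal N_w(\rho_s)
\]
for all \(s,w\). Hence \(\mathcal N'\) is more informative than \(\mathcal N\).

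The second step, which I expect to be the main obstacle, is to show the converse fails: no CPTP maps \(\{\Gamma'_w\}_w\) satisfy \(\Gamma'_w(\mathcal N_w(\rho_s))=\mathcal N'_w(\rho_s)\) for all \((s,w)\). I will argue by contradiction. Suppose such \(\Gamma'_w\) exist. Compose with the map \(\Phi(\cdot):=\sum_x\bra{x}^X\cdot\ket{x}^X\,\ketbra{x}\) that discards \(B'\) and re-prepares \(\ket{x}\) from the register \(X\). This gives \(\Phi(\mathcal N'_w(\rho_s))=\sum_x p_s(x)\tr[\mathcal N_w(\ketbra{x})]\ketbra{x}\). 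Summing over \(w\), the factors \(\tr[\mathcal N_w(\ketbra{x})]\) sum to one, so \(\sum_w \Phi(\mathcal N'_w(\rho_s))=\rho_s\). Now define \(\Gamma(\sum_w \tau_w\otimes\ketbra{w}):=\sum_w\Phi\circ\Gamma'_w(\tau_w)\). This is a CPTP map on the full output \(WB'\) with \(\Gamma\circ\mathcal N(\rho_s)=\rho_s\) for every \(s\), contradicting the hypothesis of Lemma~\ref{lem:ITA_classical}.

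The delicate points are these. First, \(\Gamma\) must be allowed to read the classical label \(w\). This is fine, because the lemma's reconstruction map acts on the whole output \(\mathcal N(\rho_s)\), which contains \(W\). Second, commutativity is used essentially in the first step: the basis measurement leaves each \(\rho_s\) unchanged, which is what makes \(\tr_X\circ\mathcal N'_w=\mathcal N_w\) on the ensemble. Third, in the first step the trace over \(X\) only needs to reproduce \(\mathcal N_w\) on the ensemble, not on all inputs, consistent with the ensemble dependence of Definition~\ref{def:ITA}.

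Having shown that \(\mathcal N'\) is strictly more informative than \(\mathcal N\), Definition~\ref{def:ITA} immediately gives that \(\mathcal N\) is not ITA.
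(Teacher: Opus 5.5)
Your proposal is correct and follows the paper's proof. Both use the same classical-copy instrument $\mathcal N'_w$, which records $x$ in a register $X$ alongside $\mathcal N_w(\ketbra{x})$, and both show by contradiction that a reverse simulation $\Gamma'_w$, followed by discarding $B'$ and $W$, would give a forbidden reconstruction map. Your version is slightly more careful than the paper's: you define $\mathcal N'_w$ on all inputs via dephasing and explicitly verify the forward direction with $\Gamma=\tr_X$, which the paper only asserts.
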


\begin{proof}
See Appendix \ref{proof_LL}.
\end{proof}

The intuition is straightforward.  When the encoded states commute, the
situation is effectively classical.
If the algorithm does not retain
enough information to reconstruct the input, then one can construct a
strictly more informative procedure by keeping the classical input label
and then simulating the original algorithm as a post-processing.
Therefore, a classical algorithm that is not recoverable cannot be
maximal in the information-theoretic order.
This is the sense in which
admissibility collapses privacy in the classical untrusted setting.

The implication of Lemma \ref{lem:ITA_classical} is severe: classical
ITA algorithms permit full reconstruction of the raw data.
Since under
ITA the Data Processor effectively holds the raw data, the output-based
guarantees of Definition \ref{DPL2} are insufficient.
Therefore, for
classical ITA algorithms, the privacy condition in Definition
\ref{DPL2} must be strengthened to require indistinguishability on the
raw states \(\rho_s\) directly, effectively substituting
\(\mathcal N(\rho_s)\) with \(\rho_s\).
We discuss this point in the further
in subsection \ref{subsec_ITA_adv}.

\noindent\textbf{Quantum Privacy via Non-Commutativity:}
In the quantum regime, the raw inputs \(\{\rho_s\}_s\) may be
non-commuting.
Quantum mechanics dictates that non-orthogonal states
cannot be perfectly distinguished, even by the optimal measurement.
Therefore, the classical equivalence between optimal information
extraction and perfect recovery no longer holds.
A quantum algorithm
may be maximal within the relevant instrument order, and hence ITA,
while still failing to reveal the underlying classical label \(s\)
perfectly.

This distinction is central to the present paper.  ITA is not a privacy
condition by itself.
Rather, it is a credibility condition for privacy
claims in the untrusted setting: it rules out apparent privacy that is
created only by applying artificial noise after a more informative
procedure has already been performed.
In the quantum case, once such
post-processing explanations are excluded, any remaining privacy may
come from the physical indistinguishability of the encoded states or
from genuine limitations of the allowed quantum operation.

The relevant comparison is the branch-wise instrument comparison
introduced above.  Thus an algorithm
\(\mathcal N=\{\mathcal N_w\}_{w\in\mathcal W}\) is compared with another
instrument \(\mathcal N'=\{\mathcal N'_w\}_{w\in\mathcal W}\) on the
specific ensemble \(\{\rho_s\}_s\).
A simulation map \(\Gamma\) is a
CPTP map acting only on the residual quantum output system and satisfies
\[
\mathcal N_w(\rho_s)
=
\Gamma\!\left(\mathcal N'_w(\rho_s)\right),
\qquad
\forall s,\ \forall w .
\]
It is not an instrument and does not generate, randomize, relabel, or
coarse-grain the classical hypothesis \(w\).
Hence the question of ITA
is not whether the prescribed algorithm is globally non-degradable as a
channel on arbitrary input states, but whether it is maximal on the
particular encoded ensemble under this branch-wise simulation order.
We now give a concrete quantum example in which ITA and privacy are
compatible.
\begin{example}[Quantum ITA algorithm with nontrivial privacy]
\label{ex:ITA_quantum}
Consider the states
\[
\rho_z=\ketbra{\phi_{z,p}},
\qquad
\ket{\phi_{z,p}}
=
\sqrt{1-p}\ket{0}+(-1)^z\sqrt{p}\ket{1}.
\]
For a dataset \(s=(z_1,\dots,z_n)\), the actual encoded state is
$\rho_s
=
\bigotimes_{j=1}^n \ketbra{\phi_{z_j,p}} $.
To define the measurement, we use the auxiliary orthonormal basis
obtained at \(p=1/2\),
$\ket{e_s}:=
\bigotimes_{j=1}^n \ket{\phi_{z_j,1/2}} $.
Let \(P_k\) be the projection onto the subspace spanned by
$\{\ket{e_s}: |s|_1=k\}$,
i.e., the subspace corresponding to strings with Hamming weight \(k\).
The untrusted Data Processor applies the projective instrument
\(\{\mathcal N_k\}_k\) defined by
$\mathcal N_k(\rho):=P_k\rho P_k $.
\end{example}

In Example \ref{ex:ITA_quantum}, the projective instrument
\(\{\mathcal N_k\}_k\) is ITA with respect to the ensemble
\(\{\rho_s\}_s\) under the branch-wise instrument order defined above. The point
of the example is that the ITA property and the privacy property arise
from different parts of the argument.
The ITA property is an
ensemble-dependent maximality statement in the instrument order, whereas
the privacy property comes from the non-orthogonality of the encoded
states.

\noindent\textbf{Justification for ITA:}
The projectors \(\{P_k\}_k\) define the Hamming-weight subspaces, and
the maps
\[
\mathcal N_k(\rho)=P_k\rho P_k,
\]
form a projective instrument.
The assertion that this instrument is ITA
is not inferred merely from the non-orthogonality of the states, nor
from an informal sufficient-statistic argument.
Rather, it is proved by
checking the ensemble-dependent branch-wise simulation order: there is
no instrument \(\mathcal N'=\{\mathcal N'_k\}_k\) that is strictly more
informative than \(\mathcal N\) while still satisfying
\[
\mathcal N_k(\rho_s)
=
\Gamma\!\left(\mathcal N'_k(\rho_s)\right),
\qquad
\forall s,\ \forall k ,
\]
for some CPTP map \(\Gamma\) acting only on the residual quantum output
system.
The map \(\Gamma\) is not an instrument and does not modify the
classical outcome \(k\).

In the orthogonal case \(p=1/2\), the states \(\{\rho_s\}_s\) are
mutually distinguishable in the basis \(\{\ket{e_s}\}_s\).
The
projectors \(\{P_k\}_k\) are then projections onto Hamming-weight
subspaces, and the corresponding operation is quantum non-demolition
with respect to this coarse-grained observable \cite{Braginsky1980}.
In this classical limit, however, the collapse phenomenon described
above applies: if the raw string can be perfectly recovered, then ITA
does not by itself provide a privacy advantage.

\noindent\textbf{Justification for Privacy:}
For \(p\neq 1/2\), the single-qubit states
\(\ket{\phi_{0,p}}\) and \(\ket{\phi_{1,p}}\) are non-orthogonal.
Consequently, the product states \(\{\rho_s\}_s\) are not perfectly
distinguishable in general.
Thus, even though the Hamming-weight
instrument is ITA in the branch-wise, the Investigator cannot perfectly identify the
underlying string \(s\).
The residual uncertainty is a consequence of
the intrinsic indistinguishability of the encoded quantum states, as
quantified operationally by state-discrimination bounds such as the
Helstrom bound \cite{Helstrom1967_2}.
This establishes the desired separation from the classical case.  In
the classical setting, ITA forces the Data Processor to hold enough
information to reconstruct the raw data.
In the quantum setting of
Example \ref{ex:ITA_quantum}, the Data Processor performs an
information-theoretically admissible procedure, but perfect recovery of
the classical string is still obstructed by the geometry of the
non-orthogonal encoding.
Hence the privacy is not produced by adding
noise after a non-private learning algorithm;
it arises from the
intrinsic quantum limitation on distinguishability during the learning
process itself.
In this sense, the security condition in Definition
\ref{DPL2} remains meaningful for the ITA algorithm in the example.

{ \subsection{Implications of ITA: Source-Layer Privacy and Quantum Intrinsic Noise}\label{subsec_ITA_adv}

~~~~~~~\textbf{Resolving the ITA Conflict: Source-Layer Privacy.}
The impossibility result in Lemma \ref{lem:ITA_classical} implies that in the classical domain, if a Data Processor is untrusted and executes an ITA (optimal) algorithm, privacy cannot be preserved by the algorithm itself.
Since an ITA algorithm extracts all available information, the output effectively reveals the raw input.
Consequently, to preserve privacy in the classical untrusted setting, the burden of protection must shift from the \textit{algorithm} to the \textit{input data} itself.
This is standardly achieved via \textit{Input Perturbation} or \textit{Local Differential Privacy} (LDP), where the Respondent applies a local randomization mechanism $\mathcal{M}$ to generate a noisy version $\tilde{s} = \mathcal{M}(s)$.
Even if the Untrusted Processor fully recovers $\tilde{s}$ (as allowed under ITA), the underlying sensitive data $s$ remains protected by the noise added at the source.
Thus, indistinguishability is enforced at the source layer, making the specific choice of the processor's algorithm irrelevant to the privacy guarantee.

\textbf{Quantum Encoding as Intrinsic Source Noise.}
This necessity for source-layer protection provides a rigorous motivation for our quantum learning framework.
In our model, the encoding map $s \mapsto \rho_s$ plays a role conceptually equivalent to classical input perturbation, but with a fundamental physical advantage.
In the classical setting, distinct data points $s \neq s'$ are perfectly distinguishable unless artificial noise is added.
In the quantum setting, however, if the encoded states $\{\rho_s\}_s$ are \textit{non-orthogonal}, they are physically indistinguishable with certainty.
This non-orthogonality introduces an intrinsic, unavoidable uncertainty—effectively ``quantum noise''—that prevents even an adversary with unlimited computational power from perfectly distinguishing $s$ from $s'$.
Therefore, our framework intrinsically embeds privacy into the physical layer.
Even if the Untrusted Data Processor employs an ITA algorithm (i.e., performs the optimal Helstrom measurement to extract maximum information), their ability to infer $s$ is fundamentally limited by the \textit{non-orthogonality} of the encoded states.
This confirms that our security condition is robust: privacy is not contingent on the Processor's cooperation but is guaranteed by the physical nature of the encoding itself.
}

\subsection{Distinction from Degradability}

Finally, it is crucial to distinguish ITA from the concept of
\textit{quantum channel degradation}
\cite{PhysRevA.85.012326,Buscemi2016}.
Degradability asks whether an
algorithm can be simulated for \textit{any arbitrary input state}.
In
contrast, ITA only asks whether the algorithm can be simulated on the
\textit{specific training ensemble} \(\{\rho_s\}_s\), and under the
branch-wise instrument order relevant to the announced classical
hypothesis.
An algorithm might be non-degradable, and hence secure in a
global channel-theoretic sense, but still simulable on the specific
subspace or family of states occupied by the Respondent's data.
Conversely, an ensemble-dependent ITA claim need not imply global
non-degradability.  Therefore, privacy certification in the untrusted
regime must be data-dependent, verifying admissibility explicitly
against the geometry of the Respondent's encoded states.
\section{Conclusion}
\label{sec:conclusion}
We established an information-theoretic framework for quantum generalization, demonstrating that limited information leakage controls expected generalization error (Theorem \ref{exp_gen_bound}).
Going beyond expected error bounds, we derived a bound on generalization error in probability (Theorem \ref{thm:renyi_gen_bound}) via Sandwiched R\'enyi divergence and a complementary lower bound on true loss (Theorem \ref{thm:true_loss_lb}), effectively sandwiching the risk under a newly introduced Classical-Quantum Sub-Gaussianity (Definition \ref{ass_cq_subg}).

We further established $(\eps, \delta)$-QDP as a sufficient condition for generalization by deriving a mechanism-agnostic stability bound on the Holevo information (Theorem \ref{lemma_Holevo_stability}) with logarithmic sample scaling by employing a grid-covering optimization to rigorously handle approximate privacy.
Finally, via Information-Theoretic Admissibility (ITA), we demonstrated a fundamental quantum advantage: unlike the classical regime, 
quantum mechanics permits admissible algorithms for which nontrivial
privacy guarantees remain meaningful against untrusted Data Processors.
\bibliographystyle{IEEEtran}
\bibliography{master}
\clearpage
\appendices

\section*{Organization of the Appendix}
The Appendix is organized into five thematic parts to support the main results:

\begin{itemize}
    \item \textbf{Generalization Error Bounds (Proofs):} 
    We provide the complete derivations for our generalization guarantees.
Appendix \ref{proof_exp_gen_bound} contains the proof of the expected generalization bound (Theorem \ref{exp_gen_bound}).
Appendix \ref{app:true_los_lb} derives the proof of lower bound on the expected true loss (Theorem \ref{thm:true_loss_lb}).
Appendix \ref{app:renyi_bound} establishes the proof of probabilistic upper-bound on generalization error via Sandwiched R\'enyi divergence (Theorem \ref{thm:renyi_gen_bound}) under the i.i.d. assumption.

    \item \textbf{Comparisons with Prior Work:} 
    We explicitly contrast our results with existing literature.
Appendix \ref{App:comp_gen} compares our generalization bounds with those of \cite{Caro23} and \cite{WDH2025}, including a detailed numerical analysis (Appendix \ref{app:numerical_comparison}).
Appendix \ref{App:comp_stab} contrasts our stability bounds with the results of \cite{Nuradha25} and \cite{Caro23}.
\item \textbf{Stability and Privacy Proofs:} 
    Appendix \ref{proof_lemma_Holevo_stability} provides the rigorous proof for the stability of $1$-neighbor $(\eps,\delta)$-DP algorithms (Theorem \ref{lemma_Holevo_stability}), along with the proofs for the pure DP and high-privacy regimes.
Appendix \ref{fact_QDP_group_proof} details the group privacy degradation properties.

\item \textbf{Technical Lemmas:} 
    Appendices \ref{proof_fact:Dmax_bound} and \ref{proof_lemma_mixture_quant_KL} contain proofs for auxiliary information-theoretic inequalities used throughout the stability analysis.
\end{itemize}
\section{Proof of Theorem \ref{exp_gen_bound}}\label{proof_exp_gen_bound}
Given the fact that $I[S\Te ;WB'] = D(\sigma^{S\Te WB}_{\cN}\|\sigma^{S\Te} \otimes \sigma^{WB'}_{\cN})$, we can lower-bound $D(\sigma^{S\Te WB}_{\cN}$ $\|\sigma^{S\Te} \otimes \sigma^{WB'}_{\cN})$ as follows,
    \begin{align}
        &~~~D(\sigma^{S\Te WB}_{\cN}\|\sigma^{S\Te} \otimes \sigma^{WB'}_{\cN})\nn\\ &\overset{(a)}{\geq} \tr\left[\lambda L^{S\Te WB}\sigma^{S\Te WB}_{\cN}\right] - \ln\tr\left[e^{\lambda L^{S\Te WB}}\left(\sigma^{S\Te} \otimes \sigma^{WB'}_{\cN}\right)\right]\nn\\
        &= \tr\left[\lambda L^{S\Te WB}\sigma^{S\Te WB}_{\cN}\right] -\tr\left[\lambda L^{S\Te WB}\left(\sigma^{S} \otimes \sigma^{WB'}_{\cN}\right)\right]\nn\\
        &\hspace{50pt} - \ln\tr\left[e^{\lambda \left(L^{S\Te WB} - \tr\left[\lambda L^{S\Te WB}\left(\sigma^{S\Te} \otimes \sigma^{WB'}_{\cN}\right)\right]\bbI^{S\Te WB}\right)}\left(\sigma^{S\Te} \otimes \sigma^{WB'}_{\cN}\right)\right]\nn\\
        &\overset{(b)}{\geq} \tr\left[\lambda L^{S\Te WB}\sigma^{S\Te WB}_{\cN}\right] 
- \tr\left[\lambda L^{S\Te WB}\left(\sigma^{S\Te} \otimes \sigma^{WB'}_{\cN}\right)\right] - \frac{\lambda^2 \alpha^2}{2},\label{proof_exp_gen_eq}
    \end{align}
where $(a)$ follows from the variational lower-bound for the quantum relative entropy (see \cite[Theorem $5.9$]{H2017QIT}) and $(b)$ follows from \eqref{ass_cq_subg_eq2} of Definition \ref{ass_cq_subg}.
Further, we can rewrite \eqref{proof_exp_gen_eq} as follows,
\begin{align}
    &\frac{\lambda^2 \alpha^2}{2} - \lambda\left(\tr\left[L^{S\Te WB}\sigma^{S\Te WB}_{\cN}\right] - \tr\left[ L^{S\Te WB}\left(\sigma^{S\Te} \otimes \sigma^{WB'}_{\cN}\right)\right]\right)\nn\\
    &\hspace{250pt}+ D(\sigma^{S\Te WB}_{\cN}\|\sigma^{S\Te} \otimes \sigma^{WB'}_{\cN}) \geq 0,\nn
\end{align}

Since the above inequality is a non-negative quadratic equation in $\lambda$ with the coefficient $\frac{\alpha^2}{2} \geq 0,$ its discriminant must be non-positive.
Hence, 
\begin{align}
    &\left(\tr\left[L^{S\Te WB}\sigma^{S\Te WB}_{\cN}\right] - \tr\left[ L^{S\Te WB}\left(\sigma^{S\Te} \otimes \sigma^{WB'}_{\cN}\right)\right]\right)^{2}\hspace{100pt}\nn\\
    &\leq 4\cdot\frac{\alpha^2}{2}\cdot D(\sigma^{S\Te WB}_{\cN}\|\sigma^{S\Te} \otimes \sigma^{WB'}_{\cN}).
\end{align}

Thus, we have, 
\begin{align}
\abs{\tr\left[L^{S\Te WB}\sigma^{S\Te WB}_{\cN}\right] - \tr\left[ L^{S\Te WB}\left(\sigma^{S\Te} \otimes \sigma^{WB'}_{\cN}\right)\right]} \leq \sqrt{2\alpha^2I[S\Te;WB']}.\label{proof_exp_gen_eq3}
\end{align}

Therefore, the combination of  Definition \ref{gen_ws_error_exp} and \eqref{proof_exp_gen_eq3} yields the following,
\begin{align}
        \overline{\text{\textnormal{gen}}}_{\rho}(\cN) &\leq \sqrt{2\alpha^2 I[S\Te;WB']}.\label{proof_exp_gen_eq4}
    \end{align}

 This completes the proof of Theorem \ref{exp_gen_bound}.\hfill\QED

{

\section{Proof of Theorem \ref{thm:renyi_gen_bound}}\label{app:renyi_bound}

In this section, we establish an upper-bound on the absolute generalization error in probability using the Sandwiched R\'enyi divergence. In contrast to the bound on the expected generalization error (Theorem \ref{exp_gen_bound}), which is based on standard Mutual Information 
and only controls the error on average, our focus here is on obtaining guarantees that hold with high probability (confidence level $1-\delta$). Such guarantees are essential in safety-critical settings, where average-case performance is not enough;
one must ensure the error remains small with high confidence.

Achieving this type of high-probability guarantee requires controlling higher-order moments of the dependence between the data and the learned hypothesis.
This dependence is quantified by the Sandwiched R\'enyi divergence. Even though the sandwiched R\'enyi divergence ($\Tilde{D}_{\gamma}$) gives a larger value than standard Mutual Information for $\gamma \in (1, \infty)$, using this stronger measure allows us to separate the randomness of the loss function from how much the algorithm relies on the training data with the help of non-commutative H\"older's inequality \cite{bhatia2013matrix}. This separation is necessary to guarantee that the model performs well even in the worst-case scenarios.

Before delving into the formal proof, it is crucial to establish how the independent and identically distributed (i.i.d.) structure of the dataset influences the tail behavior of the global loss operator. The theorem relies on the local sub-Gaussianity of individual data points, which naturally translates to a tighter concentration for the global aggregate.

\begin{remark}[Scaling of the Global Variance]
It is important to highlight that the local assumption \eqref{ass_cq_subg_loc_cond_eq} implies a strictly tighter bound for the global loss operator.
Specifically, due to the tensor product structure and the independence of $Z_i$, the global condition holds with a variance proxy that scales as $1/n$,
    \begin{equation}
        \tr\left[e^{\lambda\left(L^{S\Te B}_{w} - \tr\left[L^{S\Te B'}_{w}(\sigma^{S \Te} \otimes (\sigma^{\cN}_{w})^{B'}) \right]\bbI^{S\Te B}\right)}(\sigma^{S \Te} \otimes (\sigma^{\cN}_{w})^{B'})\right] \leq e^{\frac{\lambda^2 \alpha^2}{2n}}.\label{ass_cq_subg_loc_cond_eq2}
    \end{equation}
where, for every $w \in \cW$, we define, $L^{S\Te B'}_{w} :=  \sum_{s \in \cS}\ketbrasys{s}{S} \otimes \overset{\Te B'}{L(s,w)}.$ This scaling is a direct consequence of the additivity of cumulants for independent variables (or technically, via Jensen's inequality for the operator exponential).
This $1/n$ factor is precisely what allows the generalization bound to vanish as the dataset size increases.
\end{remark}

 A direct consequence of this scaling is that the expected generalization error bound from Theorem \ref{exp_gen_bound} also benefits from the sample size.
Substituting the variance proxy $\frac{\alpha^2}{n}$ into the framework of Theorem \ref{exp_gen_bound} yields us the following result

\begin{corollary}[Expected Generalization Bound under I.I.D.
Assumption]
\label{corr:iid_expected_bound}
 For a fixed $\alpha \in (0,\infty),$ if the loss operators for a quantum learning algorithm $\cN$, satisfy Definition \ref{ass_cq_subg_loc}, then, we have,
\begin{equation}
\overline{\text{\textnormal{gen}}}_{\rho}(\cN) \leq \sqrt{\frac{2\alpha^2}{n} I[S \Te;WB']}.\nn
\end{equation}
\end{corollary}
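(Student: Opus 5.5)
The plan is to rerun the proof of Theorem \ref{exp_gen_bound} verbatim, replacing the variance proxy $\alpha^2$ by $\alpha^2/n$. The only new input is a global sub-Gaussian bound of the form \eqref{ass_cq_subg_eq2} with proxy $\alpha^2/n$, obtained from the local condition of Definition \ref{ass_cq_subg_loc}. Once that bound is in hand, the Donsker--Varadhan-type variational inequality gives
\begin{align}
I[S\Te;WB'] &\geq \lambda\left(\hat{L}_{\rho}(\cN) - L_{\rho}(\cN)\right) - \frac{\lambda^2\alpha^2}{2n} \nn
\end{align}
for all $\lambda\in\bbR$. Nonpositivity of the discriminant of this quadratic in $\lambda$ then yields $\overline{\text{gen}}_{\rho}(\cN)\le\sqrt{2\alpha^2 I[S\Te;WB']/n}$.

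The main work is establishing the global MGF bound under the i.i.d. structure. Two structural facts do most of it.

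\begin{itemize}
\item \emph{Product structure of the reference state.} Under the product reference state $\sigma^{S\Te}\otimes\sigma^{WB'}_{\cN}$, conditioning on $S=s$ and $W=w$ gives $\rho^{\Te}_s\otimes(\sigma^{\cN}_w)^{B'}$. By the i.i.d. assumption, $\rho^{\Te}_s=\bigotimes_i\rho^{\hat\Te}_{z_i}$, and $\sigma^{\cN}_w$ factorizes over the $\hat B$ blocks because the channel is a product of local maps.
\item \emph{Commuting local terms.} By \eqref{loss_observables_frag}, $L(s,w)$ is $\frac1n$ times a sum of operators acting on distinct tensor factors. These terms commute, so
\begin{align}
e^{\lambda L(s,w)} &= \bigotimes_i e^{\frac{\lambda}{n}\hat L(w,z_i)} . \nn
\end{align}
\end{itemize}

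Together these make the trace factorize, so the conditional MGF becomes a product over $i$ of local MGFs evaluated at $\lambda/n$. The expectation over $Z_1,\dots,Z_n$ i.i.d. then splits across $i$. Each local factor is bounded by $e^{\lambda^2\alpha^2/(2n^2)}$, and the product of $n$ such factors gives $e^{\lambda^2\alpha^2/(2n)}$, matching \eqref{ass_cq_subg_loc_cond_eq2}.

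I expect two obstacles, both about how the centering and the expectation over $W$ interact.

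\emph{Centering.} The global centering constant is the average of the local means, so for each $w$ the centered global exponent is the sum of the locally centered terms, and the factorization goes through. The difficulty is that Definition \ref{ass_cq_subg_loc} centers with the mean taken jointly over $P_Z\times P_W$, not conditionally on $w$, and it averages over $W$ outside a single local exponential rather than a product of $n$ of them. The product form therefore needs the per-$w$ version, Definition \ref{ass_cq_subg_loc_cond}. I would handle this by carrying out the factorization for each fixed $w$ under that conditional assumption, exactly as in the paper's remark.

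\emph{Averaging over $W$.} After fixing $w$, I would average the per-$w$ bound over $W$. The conditional means differ across $w$, and this discrepancy has to be absorbed. Under the paper's stated remark, I would take the global bound \eqref{ass_cq_subg_loc_cond_eq2} for each $w$ and mix over $P_W$.

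If the joint centering causes trouble, the fallback is to apply the variational argument to each conditional $w$-slice separately. That gives a bound by $\sum_w P_W(w)\sqrt{2\alpha^2 D_w/n}$, and Jensen's inequality (concavity of the square root) bounds this by $\sqrt{2\alpha^2 I/n}$. This step requires $\sum_w P_W(w)D_w\le I[S\Te;WB']$, with $D_w$ the relative entropy of the $w$-conditioned state against $\sigma^{S\Te}\otimes(\sigma^{\cN}_w)^{B'}$. That inequality holds, because the $W$ register is classical and the reference state is product.
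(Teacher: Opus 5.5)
Your fallback is the only complete argument in the proposal, and it proves the bound under Definition~\ref{ass_cq_subg_loc_cond}. The statement assumes Definition~\ref{ass_cq_subg_loc}. These hypotheses are incomparable: the conditional one places no constraint on how the per-$w$ means $m_w$ spread, while the unconditional one allows a rare $w$ to carry a loss with large fluctuations. So, as written, you have proved a different corollary.

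Your primary route cannot be completed. That route is exactly the paper's one-line argument: take the Remark's per-$w$ bound \eqref{ass_cq_subg_loc_cond_eq2}, then substitute $\alpha^2/n$ into Theorem~\ref{exp_gen_bound}. The problem is that the jointly centered, $W$-averaged condition \eqref{ass_cq_subg_eq2} with proxy $\alpha^2/n$ follows from neither local definition. Take trivial $\Te,B'$ and $\hat{L}(z,w)=c_w$ with $c_W=\pm1$ equiprobable. Then Definition~\ref{ass_cq_subg_loc} holds with $\alpha=1$, and Definition~\ref{ass_cq_subg_loc_cond} holds with every $\alpha>0$. Yet the left side of \eqref{ass_cq_subg_eq2} equals $\cosh\lambda$, which exceeds $e^{\lambda^2\alpha^2/(2n)}$ for small $\lambda$ whenever $\alpha^2<n$. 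The generalization gap is zero in this example, so it is the intermediate claim that fails, not the corollary. The failure has two sources. Mixing the per-$w$ bounds leaves the factor $\bbE_W e^{\lambda(m_W-m)}\ge 1$. Under Definition~\ref{ass_cq_subg_loc}, the global MGF is a $W$-average of $n$-th powers of local MGFs, while the hypothesis controls only the average of first powers, and Jensen points the wrong way. So the fallback is the proof, not a contingency. It is correct, and $\sum_w P_W(w)D_w=I[S\Te;WB']$ in fact holds with equality, but only under the conditional hypothesis.

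The missing idea is to single-letterize the divergence instead of tensorizing an MGF. Set $q_w(z):=\tr[(\bbI^{\hat{\Te}}\otimes\cN^{(z)}_w)(\rho_z)]$. Then $P^{\cN}_{W|S}(w|s)=\prod_i q_w(z_i)$, so the posterior $P^{\cN}_{S|W=w}$ is the product of $P_{Z|W}(z|w)\propto P_Z(z)q_w(z)$. This product posterior, not the product form of the channel alone, is what gives $\sigma^{\cN}_w=\bar\sigma_w^{\otimes n}$, and your factorization step also needs it. Define:
\begin{itemize}
\item $\hat\sigma_{z,w}$, the normalized local output;
\item $\bar\sigma_w:=\sum_z P_{Z|W}(z|w)\tr_{\hat{\Te}}\hat\sigma_{z,w}$;
\item $\tau_w:=\sum_z P_{Z|W}(z|w)\ketbra{z}\otimes\hat\sigma_{z,w}$ and $\tau_{0,w}:=\sum_z P_Z(z)\ketbra{z}\otimes\rho^{\hat{\Te}}_z\otimes\bar\sigma_w$;
\item $\tau:=\sum_w P_W(w)\ketbra{w}\otimes\tau_w$, and $\tau_0$ likewise.
\end{itemize}
Then $\sigma^{S\Te B'}_{\cN,w}=\tau_w^{\otimes n}$ and $\sigma^{S\Te}\otimes\sigma^{\cN}_w=\tau_{0,w}^{\otimes n}$, so $I[S\Te;WB']=n\,D(\tau\|\tau_0)$. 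Because $L(s,w)$ averages identically distributed local terms, $\hat{L}_{\rho}(\cN)-L_{\rho}(\cN)=\tr[K(\tau-\tau_0)]$ with $K:=\sum_{z,w}\ketbra{z}\otimes\ketbra{w}\otimes\hat{L}(w,z)$. Definition~\ref{ass_cq_subg_loc} says precisely that $\tr[e^{t(K-\tr[K\tau_0])}\tau_0]\le e^{t^2\alpha^2/2}$. One application of the variational inequality at the single-letter level therefore gives $|\hat{L}_{\rho}(\cN)-L_{\rho}(\cN)|\le\sqrt{2\alpha^2 D(\tau\|\tau_0)}=\sqrt{2\alpha^2 I[S\Te;WB']/n}$. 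Running the same argument per $w$ and then applying Jensen recovers your fallback under Definition~\ref{ass_cq_subg_loc_cond}.
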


This explicitly demonstrates the $\sqrt{1/n}$ convergence rate for the expected error, confirming that algorithmic stability (bounded mutual information) leads to vanishing generalization error as $n \to \infty$.

We now prove Theorem \ref{thm:renyi_gen_bound} by defining  $\cE_{S,W} := \tr\left[L(S,W)(\sigma^{\cN}_{S,W})^{\Te B'}\right] - L_{\rho} (\cN,W)$.
We are interested in bounding $\Pr\{|\cE_{S,W}| > \eps\}$. By the union bound, this probability is written as,
\begin{align}
    \Pr\{|\cE_{S,W}|
> \eps\} = \Pr\{\cE_{S,W} > \eps\} + \Pr\{\cE_{S,W} < -\eps\} = \Pr\{\cE_{S,W} > \eps\} + \Pr\{-\cE_{S,W} > \eps\}.\nn
\end{align}
We first bound the positive deviation $\Pr\{\cE_{S,W} > \eps\}$.
Applying the Markov inequality for any $\lambda > 0$, we have,

\begin{align} &~~~\Pr\{\cE_{S,W} > \eps\}\nn\\
&= \Pr_{(S,W) \sim P^{\cN}_{SW}}\left\{\tr\left[(L(S,W) - L_{\rho} (\cN,W) \bbI)(\sigma^{\cN}_{S,W})^{\Te B'}\right] > \eps\right\}\nn\\ 
&= E_{W \sim P^{\cN}_{W}} \left[\Pr_{S \sim P^{\cN}_{S|W}}\left\{\tr\left[(L(S,W) - L_{\rho} (\cN,W) \bbI)(\sigma^{\cN}_{S,W})^{\Te B'}\right] > \eps\right\}\right].
\label{eq:initial} \end{align}

For a fixed $w \in \cW$, we have,

\begin{align} &~~~\Pr_{S \sim P^{\cN}_{S|W=w}}\left\{\tr\left[(L(S,w) - L_{\rho} (\cN,w) \bbI)(\sigma^{\cN}_{S,w})^{\Te B'}\right] > \eps\right\}\nn\\ 
&\overset{(a)}{\leq} \Pr_{S \sim P^{\cN}_{S|W=w}}\left\{\tr\left[e^{\lambda(L(S,w) - L_{\rho} (\cN,w) \bbI)}(\sigma^{\cN}_{S,w})^{\Te B'}\right] > e^{\lambda \eps}\right\} \nn\\
&\leq e^{-\lambda\eps} \bbE_{S|W=w} \left[\tr\left[e^{\lambda (L(S,w) - L_{\rho} (\cN,w)\bbI)} (\sigma^{\cN}_{S,w})^{\Te B'}\right]\right]\nn\\
&= e^{-\lambda\eps} \tr\left[e^{\lambda(L^{S\Te B'}_{w}- L_{\rho} (\cN,w)\bbI)}\sigma^{S\Te B'}_{\cN,w}\right].
\label{eq:chernoff_initial} \end{align}
where ($a$) follows from the fact that for any Hermitian operator $H$ and density matrix $\rho$, the convexity bound $\tr[e^H \rho] \geq e^{\tr[H \rho]}$ holds and in \eqref{eq:chernoff_initial}, we define $\sigma^{S\Te B'}_{\cN,w} := \sum_{s \in \cS} P^{\cN}_{S|W=w}(s)$ $\ketbra{s} \otimes  (\sigma^{\cN}_{s,w})^{\Te B'}$.
To bound the trace term in Eq. \eqref{eq:chernoff_initial}, we introduce the product state $\sigma_{\text{prod},w} := \sigma^{S\Te} \otimes(\sigma^{\cN}_{w})^{B'} $.
Thus, we have,
\begin{align}
    &~~~\tr\left[e^{\lambda(L^{S\Te B'}_{w}- L_{\rho} (\cN,w)\bbI)}\sigma^{S\Te B'}_{\cN,w}\right]\nn\\ 
    &= \tr\left[ \left(\sigma_{\text{prod},w}^{\frac{\gamma-1}{2\gamma}} e^{\lambda(L^{S\Te B'}_{w} - L_{\rho} (\cN,w)\bbI)} \sigma_{\text{prod},w}^{\frac{\gamma-1}{2\gamma}}\right) \left(\sigma_{\text{prod},w}^{\frac{1-\gamma}{2\gamma}} \sigma^{S\Te B'}_{\cN,w} \sigma_{\text{prod},w}^{\frac{1-\gamma}{2\gamma}}\right) \right].\label{eq:term_comp}
\end{align}
We invoke the non-commutative H\"older's inequality,
$$ \abs{\tr[AB]} \leq \left( \tr[\abs{A}^p]\right)^{\frac{1}{p}}\left( \tr[\abs{B}^q]\right)^{\frac{1}{q}},$$
choosing the conjugate exponents $q = \gamma$ and $p = \frac{\gamma}{\gamma-1}$, and defining the operators
\[
A := \sigma_{\text{prod},w}^{\frac{\gamma-1}{2\gamma}} e^{\lambda(L^{S\Te B'}_{w} - L_{\rho} (\cN,w)\bbI)} \sigma_{\text{prod},w}^{\frac{\gamma-1}{2\gamma}}, 
\quad\quad 
B := \sigma_{\text{prod},w}^{\frac{1-\gamma}{2\gamma}} \sigma^{S\Te B'}_{\cN,w} \sigma_{\text{prod},w}^{\frac{1-\gamma}{2\gamma}}.
\]
With this choice, the inequality specializes to $\tr[AB] \leq \underbrace{\left( \tr[\abs{A}^p]\right)^{\frac{1}{p}}}_{\textbf{Term I}}\underbrace{\left( \tr[\abs{B}^\gamma]\right)^{\frac{1}{\gamma}}}_{\textbf{Term II}}$.

\paragraph{1.
Analysis of Term I (Algorithm's Data Dependency):}
By the definition of sandwiched R\'enyi divergence and the fact that $B$ is a positive operator, we have,
\begin{align}
   \left( \tr[\abs{B}^\gamma]\right)^{\frac{1}{\gamma}} &= \left( \tr\left[ \left( \sigma_{\text{prod},w}^{\frac{1-\gamma}{2\gamma}} \sigma^{S\Te B'}_{\cN,w} \sigma_{\text{prod},w}^{\frac{1-\gamma}{2\gamma}} \right)^\gamma \right] \right)^{\frac{1}{\gamma}} 
    = \exp\left( \frac{\gamma-1}{\gamma} \Tilde{D}_{\gamma} (\sigma^{S\Te B'}_{\cN,w}\| \sigma_{\text{prod},w}) \right).
\label{eq:bound_B_full}
\end{align}

\paragraph{2. Analysis of Term II (Randomness of the Loss operators):}
The combination of the fact that $B$ is a positive operator and the Araki-Lieb-Thirring inequality ($\tr[(YXY)^r] \leq \tr[Y^{r}X^rY^{r}]$) with $r=p=\frac{\gamma}{\gamma-1}$, yields the following,
\begin{align}
    \left( \tr[\abs{A}^p]\right)^{\frac{1}{p}} &= \left( \tr\left[ \left( \sigma_{\text{prod},w}^{\frac{\gamma-1}{2\gamma}} e^{\lambda(L^{S\Te B'}_{w} - L_{\rho} (\cN,w)\bbI)} \sigma_{\text{prod},w}^{\frac{\gamma-1}{2\gamma}} \right)^p \right] \right)^{\frac{1}{p}} \nn\\
    &\leq \left( \tr\left[ \sigma^{1/2}_{\text{prod}} e^{\frac{\gamma\lambda}{\gamma-1}(L^{S\Te B'}_{w} - L_{\rho} (\cN,w)\bbI)}\sigma^{1/2}_{\text{prod}} \right] \right)^{\frac{\gamma-1}{\gamma}}\nn\\
    &= \left( \tr\left[  e^{\frac{\gamma\lambda}{\gamma-1}(L^{S\Te B'}_{w} - L_{\rho} (\cN,w)\bbI)}\sigma_{\text{prod},w} \right] \right)^{\frac{\gamma-1}{\gamma}}.
\end{align}

Invoking the Classical-Quantum Sub-Gaussian assumption (\eqref{ass_cq_subg_loc_cond_eq2}) for $\alpha$ and setting $\lambda \leftarrow  \frac{\gamma\lambda}{\gamma-1}$ in \eqref{ass_cq_subg_eq2}, we have,
\begin{align}
    \left( \tr[\abs{A}^p]\right)^{\frac{1}{p}} &\leq \left( \exp\left( \frac{1}{2n} \left(\frac{\gamma\lambda}{\gamma-1}\right)^2 \alpha^2 \right) \right)^{\frac{\gamma-1}{\gamma}} 
    = \exp\left( \frac{\gamma \lambda^2 \alpha^2}{2n(\gamma-1)} \right).
\label{eq:bound_A_full}
\end{align}

\paragraph{3. Aggregation and Global Divergence:}
Substituting \eqref{eq:bound_B_full} and \eqref{eq:bound_A_full} back into \eqref{eq:chernoff_initial}, and then computing the expectation over $W$ in \eqref{eq:initial}, yields us,
{\allowdisplaybreaks\begin{align}
&~~~\Pr\{\cE_{S,W} > \eps\}\nn\\
&\leq e^{-\lambda\eps} \exp\left( \frac{\gamma \lambda^2 \alpha^2}{2n(\gamma-1)} \right) \bbE_{W \sim P^{\cN}_{W}} \left[ \exp\left( \frac{\gamma-1}{\gamma} \Tilde{D}{\gamma} (\sigma^{S\Te B'}_{\cN,w}\|  \sigma_{\text{prod},w}) \right) \right]\nn\\
&\overset{(a)}{\leq} e^{-\lambda\eps} \exp\left( \frac{\gamma \lambda^2 \alpha^2}{2n(\gamma-1)} \right) \left(\bbE_{W \sim P^{\cN}_{W}} \left[ \exp\left((\gamma-1) \Tilde{D}{\gamma} (\sigma^{S\Te B'}_{\cN,w}\|  \sigma_{\text{prod},w}) \right) \right]\right)^{\frac{1}{\gamma}}\nn\\
&\overset{(b)}{\leq}  e^{-\lambda\eps} \exp\left( \frac{\gamma \lambda^2 \alpha^2}{2n(\gamma-1)} \right) \left(\exp\left((\gamma-1) \Tilde{D}{\gamma} (\sigma^{S\Te WB'}_{\cN}| \sigma^{S\Te} \otimes \sigma^{WB'}_{\cN}) \right) \right)^{\frac{1}{\gamma}}\nn\\
&=  \exp\left( -\lambda\eps +\frac{\gamma \lambda^2 \alpha^2}{2n(\gamma-1)} \right) \exp\left(\frac{\gamma-1}{\gamma} \Tilde{D}{\gamma} (\sigma^{S\Te WB'}_{\cN}|
\sigma^{S\Te} \otimes \sigma^{WB'}_{\cN}) \right) \nn\\
&= \exp\left( -\lambda\eps +\frac{\gamma \lambda^2 \alpha^2}{2n(\gamma-1)} + \frac{\gamma-1}{\gamma} \Tilde{D}{\gamma} (\sigma^{S\Te WB'}_{\cN}| \sigma^{S\Te} \otimes \sigma^{WB'}_{\cN}) \right)\label{final_bound_eq}
\end{align}}
where $(a)$ follows from Jensen's inequality and the concavity of the function $f(x) = x^{\frac{1}{\gamma}}$ for $\gamma > 1$ and $(b)$ follows since the joint state $\sigma^{S\Te WB'}_{\cN}$ and the product state $\sigma^{S\Te} \otimes \sigma^{WB'}_{\cN}$ are block-diagonal with respect to the classical system $W$, the divergence decomposes as,
\begin{equation}
\Tilde{D}{\gamma} (\sigma^{S\Te WB'}_{\cN}| \sigma^{S\Te} \otimes \sigma^{WB'}_{\cN}) = \frac{1}{\gamma-1} \ln \bbE_{W} \left[ \exp\left( (\gamma-1) \Tilde{D}{\gamma} (\sigma^{S\Te B'}_{\cN,w}\|  \sigma_{\text{prod},w}) \right) \right].\nn
\end{equation}

Minimizing the exponent $f(\lambda) = -\lambda\eps + \frac{\gamma \alpha^2}{2n(\gamma-1)} \lambda^2$ yields $\lambda^* = \frac{n\eps(\gamma-1)}{\gamma \alpha^2}$, resulting 
in,
\begin{equation}
    \Pr\{\cE_{S,W} > \eps\} \leq \exp\left( - \frac{\gamma-1}{\gamma} \left( \frac{n\eps^2}{2\alpha^2} - \Tilde{D}_{\gamma} (\sigma^{S\Te WB'}_{\cN}\|
\sigma^{S\Te} \otimes \sigma^{WB'}_{\cN}) \right) \right).
\end{equation}

By symmetry of the sub-Gaussian assumption, the same bound holds for the negative deviation $\Pr\{-\cE_{S,W} > \eps\}$.
Thus, for the absolute deviation $|Z|$, we have,
\begin{equation}
    \Pr\{|\cE_{S,W}|
> \eps\} \leq 2 \exp\left( - \frac{\gamma-1}{\gamma} \left( \frac{n\eps^2}{2\alpha^2} - \Tilde{D}_{\gamma} (\sigma^{S\Te WB'}_{\cN}\| \sigma^{S\Te} \otimes \sigma^{WB'}_{\cN}) \right) \right).
\end{equation}

\paragraph{4.
Inversion for High-Probability Guarantee:}
We set the upper bound equal to the confidence level $\delta \in (0,1)$,
\[
\delta = 2 \exp\left( - \frac{\gamma-1}{\gamma} \left( \frac{n\eps^2}{2\alpha^2} - \Tilde{D}_{\gamma} (\sigma^{S\Te WB'}_{\cN}\| \sigma^{S\Te} \otimes \sigma^{WB'}_{\cN}) \right) \right).
\]

Finally, solving for $\eps$, we have,
\[
\eps = \sqrt{\frac{2\alpha^2}{n} \left( \Tilde{D}_{\gamma} (\sigma^{S\Te WB'}_{\cN}\| \sigma^{S\Te} \otimes \sigma^{WB'}_{\cN}) + \frac{\gamma}{\gamma-1} \ln \frac{2}{\delta} \right)}.
\]
This completes the proof of Theorem \ref{thm:renyi_gen_bound}.\hfill\QED

\begin{remark}[Comparison with Classical Bounds]
\label{rem:esposito_recovery}
In the special case where the quantum input subsystem $\Te$ and the algorithm's internal quantum output $B'$ are trivial (i.e., the systems are purely classical), the term $\Tilde{D}_{\gamma} (\sigma^{S\Te WB'}_{\cN}\| \sigma^{S\Te} \otimes \sigma^{WB'}_{\cN})$ reduces to Sibson's mutual information \cite{Verdu15} $I_{\gamma}(S;W)$ of order $\gamma$.
Consequently, the bound derived in Theorem \ref{thm:renyi_gen_bound} recovers the exact form of Corollary 2 in \cite{Esposito21}.
\end{remark}

}

\section{Proof of Theorem \ref{thm:true_loss_lb}}\label{app:true_los_lb}

In this section, we establish a relationship between the expected true loss $L_{\rho}(\cN)$ (see Definition \ref{exp_true_loss_def}) and the expected empirical loss $\hat{L}_{\rho} (\cN)$ (see Definition \ref{exp_emp_loss_def}).
We denote the product state $\sigma_{\text{prod}} := \sigma^{S\Te} \otimes \sigma^{WB'}_{\cN}$.
Our goal is to upper bound the empirical loss $\hat{L}_{\rho}(\cN) = \tr[L^{S\Te WB'}\sigma^{S\Te WB'}_{\cN}]$ in terms of the true loss and the divergence.
We begin by expanding the trace using the identity $\mathbb{I} = \sigma_{\text{prod}}^{\frac{\gamma-1}{2\gamma}} \sigma_{\text{prod}}^{\frac{1-\gamma}{2\gamma}}$ and applying a series of information-theoretic inequalities.
Consider the following derivation,
\begin{align}
     \hat{L}_{\rho} (\cN) & = \tr[L^{S\Te WB'}\sigma^{S\Te WB'}_{\cN}]\nn\\
     &=  \tr\left[\left(\sigma_{\text{prod}}^{\frac{\gamma-1}{2\gamma}}L^{S\Te WB'}\sigma_{\text{prod}}^{\frac{\gamma-1}{2\gamma}}\right) \left(\sigma_{\text{prod}}^{\frac{1-\gamma}{2\gamma}}\sigma^{S\Te WB'}_{\cN}\sigma_{\text{prod}}^{\frac{1-\gamma}{2\gamma}}\right)\right]\nn\\
     &\overset{(a)}{\leq} \tr\left[\left|\sigma_{\text{prod}}^{\frac{\gamma-1}{2\gamma}}L^{S\Te WB'}\sigma_{\text{prod}}^{\frac{\gamma-1}{2\gamma}}\right|^{\frac{\gamma}{\gamma-1}}\right]^{\frac{\gamma-1}{\gamma}} \tr\left[\left|\sigma_{\text{prod}}^{\frac{1-\gamma}{2\gamma}}\sigma^{S\Te WB'}_{\cN}\sigma_{\text{prod}}^{\frac{1-\gamma}{2\gamma}}\right|^{\gamma}\right]^{\frac{1}{\gamma}}\nn\\
     &\overset{(b)}{\leq} \tr\left[\left(L^{S\Te WB'}\right)^{\frac{\gamma}{\gamma-1}}\sigma_{\text{prod}}\right]^{\frac{\gamma-1}{\gamma}}\exp\left( \frac{\gamma-1}{\gamma} \Tilde{D}_{\gamma} (\sigma^{S\Te WB'}_{\cN}\|
\sigma_{\text{prod}}) \right)\nn\\
     &\overset{(c)}{\leq} \tr\left[e^{\frac{\gamma}{\gamma-1}L^{S\Te WB'}}\sigma_{\text{prod}}\right]^{\frac{\gamma-1}{\gamma}}\exp\left( \frac{\gamma-1}{\gamma} \Tilde{D}_{\gamma} (\sigma^{S\Te WB'}_{\cN}\| \sigma_{\text{prod}}) \right)\nn\\
     &= \tr\left[e^{\frac{\gamma}{\gamma-1}\left(L^{S\Te WB'}- L_{\rho} (\cN) \bbI + L_{\rho} (\cN)\bbI\right)}\sigma_{\text{prod}}\right]^{\frac{\gamma-1}{\gamma}}\exp\left( \frac{\gamma-1}{\gamma} \Tilde{D}_{\gamma} (\sigma^{S\Te WB'}_{\cN}\| \sigma_{\text{prod}}) \right)\nn\\
     &= \tr\left[e^{\frac{\gamma}{\gamma-1}\left(L^{S\Te WB'}- L_{\rho} (\cN) \bbI\right)} e^{\frac{\gamma}{\gamma-1}L_{\rho} (\cN)\bbI}\sigma_{\text{prod}}\right]^{\frac{\gamma-1}{\gamma}}\exp\left( \frac{\gamma-1}{\gamma} \Tilde{D}_{\gamma} (\sigma^{S\Te WB'}_{\cN}\| \sigma_{\text{prod}}) \right)\nn\\
     &= \tr\left[e^{\frac{\gamma}{\gamma-1}\left(L^{S\Te WB'}- L_{\rho} (\cN) \bbI\right)} \sigma_{\text{prod}}\right]^{\frac{\gamma-1}{\gamma}}\exp\left(L_{\rho} (\cN) + \frac{\gamma-1}{\gamma} \Tilde{D}_{\gamma} (\sigma^{S\Te WB'}_{\cN}\|
\sigma_{\text{prod}}) \right)\nn\\
     &\overset{(d)}{\leq} \exp\left(\frac{\gamma \alpha^2}{2(\gamma-1)} + L_{\rho} (\cN) + \frac{\gamma-1}{\gamma} \Tilde{D}_{\gamma} (\sigma^{S\Te WB'}_{\cN}\| \sigma_{\text{prod}}) \right).\nn
\end{align}

\textbf{Justification of inequalities:}
\begin{itemize}
    \item[$(a)$] Follows from the non-commutative H\"older inequality \cite{bhatia2013matrix},
    \begin{equation}
        \tr[|AB|] \leq \left(\tr[A^p]\right)^{1/p} \left(\tr[B^q]\right)^{1/q},
    \end{equation} for any two positive opertors $A$ and $B$,
    where we identify the operators $A = \sigma_{\text{prod}}^{\frac{\gamma-1}{2\gamma}}L^{S\Te WB'}\sigma_{\text{prod}}^{\frac{\gamma-1}{2\gamma}}$ and $B = \sigma_{\text{prod}}^{\frac{1-\gamma}{2\gamma}}\sigma^{S\Te WB'}_{\cN}\sigma_{\text{prod}}^{\frac{1-\gamma}{2\gamma}}$ to be positive operators, with conjugate exponents $p = \frac{\gamma}{\gamma-1}$ and $q = \gamma$.

    \item[$(b)$] The first factor follows from the 
Araki-Lieb-Thirring inequality \cite{Araki1990,Lieb_Thirring_2005} $\tr[(B A B)^r] \leq \tr[A^rB^{2r} ]$ (with $B = \sigma_{\text{prod}}^{\frac{\gamma-1}{2\gamma}}$, $A = L^{S\Te WB'}$, and $r = \frac{\gamma}{\gamma-1}$).
The second factor follows directly from the definition of the Sandwiched R\'enyi divergence $\Tilde{D}_{\gamma}$.
\item[$(c)$] Follows from the operator inequality $X^p \leq e^{pX}$ for any positive operator $X \geq 0$ and $p > 0$.
Here, we apply this to the operator $X = L^{S\Te WB'}$ with $p = \frac{\gamma}{\gamma-1}$.
\item[$(d)$] Follows from the Classical-Quantum Sub-Gaussian assumption (Definition \ref{ass_cq_subg}). By setting $\lambda = \frac{\gamma}{\gamma-1}$, the assumption guarantees $\tr\left[e^{\lambda\left(L^{S\Te WB'}- L_{\rho} (\cN) \bbI\right)} \sigma_{\text{prod}}\right] \leq e^{\frac{\lambda^2 \alpha^2}{2}}$.
Raising this to the power of $\frac{1}{\lambda} = \frac{\gamma-1}{\gamma}$ yields the term $\exp\left(\frac{\lambda \alpha^2}{2}\right) = \exp\left(\frac{\gamma \alpha^2}{2(\gamma-1)}\right)$.
\end{itemize}

Rearranging the terms in the final inequality to lower-bound $\exp(L_{\rho}(\cN))$ yields the statement of the theorem.\hfill\QED

Further, if the loss observables $\{L(w,s)\}$ are strictly bounded between $0$ and $\bbI$, we can derive a tighter multiplicative lower bound on the expected true loss that does not depend on the sub-Gaussian parameter $\alpha$.
\begin{corollary}
\label{corr:true_loss_lb_bounded}
Let $\cN$ be a quantum learning algorithm. Assume the loss operators are bounded such that $0 \preceq L(w,s) \preceq \bbI$, for all $w,s$.
For  any sandwiched R\'enyi divergence order $\gamma > 1$, the expected true loss is lower bounded by the empirical loss as follows,
\begin{equation}
    L_{\rho} (\cN) \geq \hat{L}^{\frac{\gamma}{\gamma-1}}_{\rho}(\cN) \exp\left( - \Tilde{D}_{\gamma} (\sigma^{S\Te WB'}_{\cN}\| \sigma^{S\Te} \otimes \sigma^{WB'}_{\cN}) \right).
\end{equation}
\end{corollary}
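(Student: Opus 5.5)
The plan is to rerun the first part of the proof of Theorem \ref{thm:true_loss_lb}, but to replace the exponential moment in steps $(c)$--$(d)$ with a cruder bound that only uses $0\preceq L\preceq\bbI$. Write $\sigma_{\text{prod}}:=\sigma^{S\Te}\otimes\sigma^{WB'}_{\cN}$. Since $\bbI=\sigma_{\text{prod}}^{\frac{\gamma-1}{2\gamma}}\sigma_{\text{prod}}^{\frac{1-\gamma}{2\gamma}}$ on the support of $\sigma_{\text{prod}}$, we can split
\[
\hat{L}_{\rho}(\cN)=\tr\bigl[(\sigma_{\text{prod}}^{\frac{\gamma-1}{2\gamma}}L^{S\Te WB'}\sigma_{\text{prod}}^{\frac{\gamma-1}{2\gamma}})(\sigma_{\text{prod}}^{\frac{1-\gamma}{2\gamma}}\sigma^{S\Te WB'}_{\cN}\sigma_{\text{prod}}^{\frac{1-\gamma}{2\gamma}})\bigr].
\]
This is legitimate whenever $\sigma^{S\Te WB'}_{\cN}\ll\sigma_{\text{prod}}$, which always holds for a joint state relative to the product of its marginals. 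If the divergence were infinite, the claim would be trivial anyway.

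Next, apply the non-commutative H\"older inequality with exponents $p=\frac{\gamma}{\gamma-1}$ and $q=\gamma$, exactly as in step $(a)$. The second factor equals $\exp\bigl(\frac{\gamma-1}{\gamma}\Tilde{D}_{\gamma}(\sigma^{S\Te WB'}_{\cN}\|\sigma_{\text{prod}})\bigr)$ by the definition \eqref{def_sand_renyi}. For the first factor, use Araki--Lieb--Thirring as in step $(b)$ to get
\[
\tr\bigl[(\sigma_{\text{prod}}^{\frac{\gamma-1}{2\gamma}}L\,\sigma_{\text{prod}}^{\frac{\gamma-1}{2\gamma}})^{p}\bigr]\le\tr\bigl[L^{p}\sigma_{\text{prod}}\bigr].
\]
Because $p>1$ and $0\preceq L^{S\Te WB'}\preceq\bbI$, we have the operator inequality $L^p\preceq L$ by functional calculus, since $x^p\le x$ on $[0,1]$. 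Hence $\tr[L^p\sigma_{\text{prod}}]\le\tr[L\,\sigma_{\text{prod}}]=L_{\rho}(\cN)$.

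Combining these gives
\[
\hat{L}_{\rho}(\cN)\le L_{\rho}(\cN)^{\frac{\gamma-1}{\gamma}}\exp\Bigl(\tfrac{\gamma-1}{\gamma}\Tilde{D}_{\gamma}\bigl(\sigma^{S\Te WB'}_{\cN}\|\sigma_{\text{prod}}\bigr)\Bigr).
\]
Both sides are non-negative, so raising to the power $\frac{\gamma}{\gamma-1}$ and rearranging yields
\[
L_{\rho}(\cN)\ge\hat{L}_{\rho}(\cN)^{\frac{\gamma}{\gamma-1}}\exp\bigl(-\Tilde{D}_{\gamma}(\sigma^{S\Te WB'}_{\cN}\|\sigma_{\text{prod}})\bigr),
\]
which is the statement.

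The only delicate step is the Araki--Lieb--Thirring application in its correct form. With $r=p\ge1$ it reads $\tr[(YXY)^r]\le\tr[Y^rX^rY^r]$, here with $Y=\sigma_{\text{prod}}^{\frac{\gamma-1}{2\gamma}}$, so that $Y^{2r}=\sigma_{\text{prod}}$. I would verify that this direction holds for $r\ge1$, which it does, and that the H\"older step is applied to $|\tr[AB]|$ with $A,B\succeq0$ as in step $(a)$. Everything else follows directly from the bound $0\preceq L\preceq\bbI$.
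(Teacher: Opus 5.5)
Your proposal is correct and takes essentially the route the paper intends. The paper states this corollary without a separate proof, right after the proof of Theorem~\ref{thm:true_loss_lb}. It is obtained by keeping steps $(a)$--$(b)$ of that proof: H\"older with $p=\frac{\gamma}{\gamma-1}$, $q=\gamma$, then Araki--Lieb--Thirring. The exponential-moment steps $(c)$--$(d)$ are replaced by $\bigl(L^{S\Te WB'}\bigr)^{\frac{\gamma}{\gamma-1}}\preceq L^{S\Te WB'}$, valid because $0\preceq L^{S\Te WB'}\preceq\bbI$. This gives $\hat{L}_{\rho}(\cN)\le L_{\rho}(\cN)^{\frac{\gamma-1}{\gamma}}\exp\bigl(\frac{\gamma-1}{\gamma}\Tilde{D}_{\gamma}(\sigma^{S\Te WB'}_{\cN}\|\sigma_{\text{prod}})\bigr)$, exactly as you derive; your side checks ($\sigma^{S\Te WB'}_{\cN}\ll\sigma_{\text{prod}}$ and the $r\ge1$ direction of Araki--Lieb--Thirring) are the points that make it rigorous.
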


\begin{remark}
\label{rem:esposito_analogue}
Corollary \ref{corr:true_loss_lb_bounded} can be viewed as a quantum analogue of \cite[Theorem 3]{Modak21}.
In the classical setting, \cite[Theorem 3]{Modak21} derived a similar lower bound on the true risk in terms of the empirical risk and the classical R\'enyi divergence under the assumption of bounded loss functions.
Our result extends this bound to the quantum learning framework, where the non-commutativity of the state and loss operators necessitates the use of the Sandwiched R\'enyi divergence $\Tilde{D}_{\gamma}$ (which reduces to the classical R\'enyi divergence when states commute) and the utilization of the non-commutative H\"older inequality to separate the statistical fluctuations from the dependency structure.
\end{remark}

\section{Comparison of various Generalization Error Bounds obtained in this manuscript with the Prior Works}\label{App:comp_gen}

In this section, we compare the upper-bounds on the generalization error (Theorem \ref{exp_gen_bound} and Theorem \ref{thm:renyi_gen_bound}) derived in this work—both in expectation and in probability—with existing results.
\subsection{Comparision of Theorem \ref{exp_gen_bound} with Corollary $23$ of \cite{Caro23} }\label{comp:caro}

We contrast our upper-bound on the expected generalization error (Theorem \ref{exp_gen_bound}) with Corollary $23$ of \cite{Caro23},
highlighting the key advantages of our framework.
\begin{itemize}
    \item \textbf{Simplified Sub-Gaussianity Assumptions:} In \cite[Corollary $23$]{Caro23}, the authors impose the following two separate point-wise holding sub-Gaussianity requirements,
\begin{small}
\begin{align}
\mathrm{Tr} \left[  e^{\lambda ( L(s, w) - f(s,w) \mathbb{I}^{\Te B'} )} (\rho^{\Te}_s \otimes (\sigma^{\mathcal{N}}_{s, w})^{B'}) \right] &\leq e^{ \frac{\mu^2\lambda^2}{2}}, \quad \forall (s,w) \in \cS\times\cW,\label{qmgf}\tag{QMGF}\\
\mathbb{E}_{S \sim P^m} \left[ e^{\lambda ( f(S, w) - \mathbb{E}_{\tilde{S}}[f(\tilde{S}, w)] )} \right] &\leq e^{ \frac{\tau^2\lambda^2}{2}}, \quad \forall w\in \cW.\label{cmgf}\tag{CMGF}
\end{align}
\end{small}
for some fixed $\mu,\tau >0$, where $f(s,w) \coloneqq \mathrm{Tr}[L(s, w)(\rho^{\Te}_s \otimes (\sigma^{\mathcal{N}}_{s, w})^{B'})]$.
However, unlike Eqs. \eqref{qmgf} and \eqref{cmgf}  in \cite{Caro23}, Theorem \ref{exp_gen_bound} requires only a single condition mentioned in Definition \ref{ass_cq_subg}.
Crucially, our assumption holds in expectation over $P_S \times P_W$, rather than for worst-case pairs.
\item \textbf{Unified Information Measure:}
The upper-bound obtained in \cite[Corollary $23$]{Caro23} contains separated classical and quantum information terms because of the separated sub-Gaussianity assumption mentioned in \eqref{qmgf} and \eqref{cmgf} respectively.
In constrast, our bound on $\overline{\text{\textnormal{gen}}}_{\rho}(\cN)$ in Theorem \ref{exp_gen_bound} relies on a single information-theoretic quantity, which unifies classical and quantum dependencies.
\item \textbf{Failure of Stability Implications:} In scenarios where testing and training data are uncorrelated, the bound in \cite[Corollary $23$]{Caro23} reduces to a purely classical term and therefore it will not account for the quantum system $B'$.
Therefore, from \cite[Corollary $23$]{Caro23} it is not possible to show that stability implies generalizability.
In contrast, Theorem \ref{exp_gen_bound} avoids this limitation, validating the definition of expected true loss proposed in \cite{WDH2025} as its correct formulation.
A justification for the correctness of \eqref{exp_true_loss_def} is also given in \cite{WDH2025}.
\end{itemize}

With this correct definition of true loss, the bound obtained in \cite[Corollary 23]{Caro23} translates to \cite[Theorem 1]{WDH2025}.
In Appendix \ref{app:numerical_comparison}, we make a comparison of Theorem \ref{exp_gen_bound} with \cite[Theorem 1]{WDH2025} for the case when $\alpha = \mu=\tau,$ where 
$ \alpha,\mu$ and $\tau$ 
denote the sub-Gaussianity parameters appearing in \eqref{ass_cq_subg_eq}, \eqref{qmgf}, and \eqref{cmgf}, respectively.
\subsection{Numerical Comparison of Theorem \ref{exp_gen_bound} with \cite[Theorem 1]{WDH2025}}
\label{app:numerical_comparison}

In this appendix, we numerically validate our theoretical results by comparing them against the bounds established in \cite{WDH2025}.
We utilize the classical-quantum toy example described in \cite[Section \RNum{6}]{WDH2025} to demonstrate the tightness of our mutual information-based approach.
For this comparison, we evaluate the following two quantities under the condition that the sub-Gaussianity parameters satisfy $\mu=\tau=\alpha$.
\begin{figure}[htbp]
  \centering
  \begin{subfigure}[b]{0.9\textwidth}
  \centering
    \includegraphics[width=0.75\textwidth]{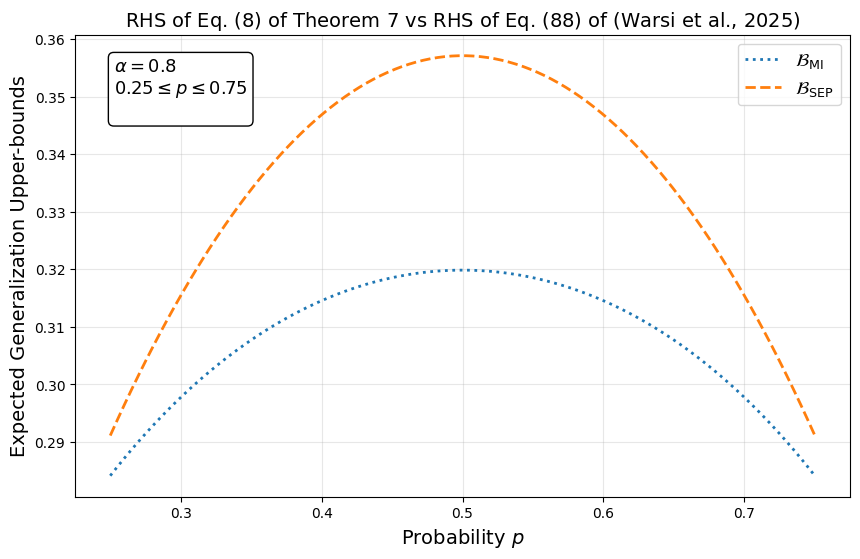}
    \label{fig:var_p}
  \end{subfigure}
  \hfill\\
  \begin{subfigure}[b]{0.9\textwidth}
  \centering
    \includegraphics[width=0.75\textwidth]{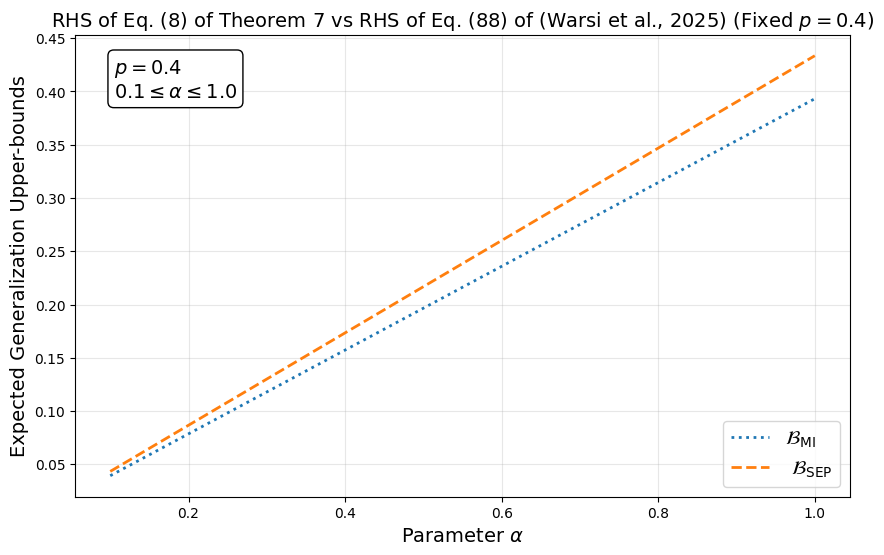} 
    \label{fig:var_tau}
  \end{subfigure}
    \caption{Numerical comparison of the generalization error bounds for the classical-quantum toy example in \cite{WDH2025}. 
    \textbf{(a)} Comparison of $\mathcal{B}_{\text{MI}}$ \eqref{B_IT} and $\mathcal{B}_{\text{SEP}}$ \eqref{B_SEP} as a function of the prior probability $p \in [0.25, 0.75]$. 
    \textbf{(b)} Comparison of $\mathcal{B}_{\text{MI}}$ \eqref{B_IT} and $\mathcal{B}_{\text{SEP}}$ \eqref{B_SEP} as a function of the sub-Gaussianity parameter $\alpha \in [0.1, 1]$ for a fixed prior $p=0.4$. 
    In both regimes, our bound $\mathcal{B}_{\text{MI}}$ (\textcolor{plotblue}{blue}) provides a strictly tighter upper bound than $\mathcal{B}_{\text{SEP}}$ (\textcolor{plotorange}{orange}).}
    \label{fig:comparison}
\end{figure}

\begin{enumerate}
    \item \textbf{Our Mutual Information Bound ($\mathcal{B}_{\text{MI}}$):} 
    Derived from Theorem \ref{exp_gen_bound}, this bound relies on the total mutual information between the input and the output system.
Due to the independence of the test and train systems conditioned on $Z$ in this example, the term $I(Z, \Te; W B')$ simplifies, yielding:
    \begin{align}
        \textcolor{plotblue}{\mathcal{B}_{\text{MI}} = \sqrt{2\alpha^2 I(Z; W B')}}.
\label{B_IT}
    \end{align}

    \item \textbf{The Separated Bound from \cite{WDH2025} ($\mathcal{B}_{\text{SEP}}$):} 
    We compare against the bound in \cite[Eq.
(88), Theorem 1]{WDH2025}, which separates the classical and quantum contributions.
In this specific toy example, the first term of their theorem vanishes, reducing the bound to:
    \begin{align}
        \textcolor{plotorange}{\mathcal{B}_{\text{SEP}} = \mathbb{E}_{Z,W} \left[\sqrt{2\alpha^2 D\left(\sigma^{B'}_{(Z,W)} \middle\| \sigma^{B'}_{W}\right)}\right] + \sqrt{2\alpha^2 I(Z;W)}}.
\label{B_SEP}
    \end{align}
\end{enumerate}

As detailed in Figure \ref{fig:comparison} below, for comparison, we plot \eqref{B_IT} and \eqref{B_SEP} for the example mentioned in \cite[Subsection-\RNum{6}]{WDH2025}.
\subsection{Comparision of Theorem \ref{exp_gen_bound} with Theorem $2$ of \cite{WDH2025} }\label{comp:warsi}

In Theorem $2$ of \cite{WDH2025}, the authors obtained an upper-bound on expected generalization bounds of quantum learning algorithms, in terms of R\'enyi divergences.
We contrast our upper-bound on the expected generalization error (Theorem \ref{exp_gen_bound}) with their results,
highlighting the key advantages of our framework.
\begin{itemize}
    \item \textbf{Simplified Sub-Gaussianity Assumptions:}  In \cite[Theorem $2$]{WDH2025}, the authors introduce five distinct point-wise sub-Gaussianity assumptions, as specified in \cite[Assumption $5$ and $6$]{WDH2025}.
These conditions are needed there to establish an upper bound in terms of the R\'enyi divergence.
In contrast, Theorem \ref{exp_gen_bound} replaces Assumptions $5$ and $6$ in \cite{WDH2025} with a single requirement, given in Definition \ref{ass_cq_subg}.
A key difference is that our condition is formulated in expectation with respect to $P_S \times P_W$, rather than being imposed for all worst-case pairs.
\item \textbf{Unified Information Measure:}
The upper bound derived in \cite[Theorem $2$]{WDH2025} involves two terms based on quantum R\'enyi divergence and one term based on classical R\'enyi divergence.
This structure stems from the separate sub-Gaussianity assumptions formulated in \cite[Assumption $5$ and $6$]{WDH2025}.
By contrast, our bound in Theorem \ref{exp_gen_bound} is expressed in terms of a single information-theoretic quantity that simultaneously captures both classical and quantum dependencies.
\end{itemize}

\subsection{Comparison of Theorem \ref{thm:renyi_gen_bound} with Theorem 4 of \cite{WDH2025}}
\label{app:comparison_wdh}

We contrast our concentration bound (Theorem \ref{thm:renyi_gen_bound}) with Theorem 4 of \cite{WDH2025}, highlighting three key advantages of our framework.
\begin{itemize}
    \item \textbf{Simplified Sub-Gaussianity Assumptions:} The result in \cite{WDH2025} necessitates a quantum learning framework with separate pointwise sub-Gaussianity conditions for the quantum posterior states and the classical hypothesis distribution.
We simplify these requirements significantly, relying on only a single  sub-Gaussian condition (Eq. \eqref{ass_cq_subg_loc_eq}) on the individual loss operators.
\item \textbf{Unified Information Measure:} The framework in \cite{WDH2025} employs a "separated" approach that sums classical mutual information and a distinct quantum divergence term.
This decoupling is analytically complex and often results in looser bounds.
In contrast, our bound relies on a single global divergence, $\Tilde{D}_{\gamma} (\sigma^{S\Te WB'}_{\cN}\| \sigma^{S\Te} \otimes \sigma^{WB'}_{\cN})$, which captures classical and quantum dependencies jointly within a unified measure.
\item \textbf{Average-Case vs. Worst-Case:} The quantum term in \cite{WDH2025} is formulated as a "worst-case" bound, typically involving a supremum over inputs or hypotheses (e.g., $\sup_{w} \Tilde{D}_{\gamma}$).
Conversely, our bound is formulated for the average case: it depends on the divergence of the \emph{expected} classical-quantum state, allowing us to directly incorporate the actual data distribution.
\end{itemize}

\section{Proofs of Theorem \ref{lemma_Holevo_stability}, Corollary \ref{corr_Holevo_stability_2} and Corollary \ref{corr_Holevo_stability_1}}\label{proof_lemma_Holevo_stability}

\subsection{Proof of Theorem \ref{lemma_Holevo_stability}}\label{proof_theorem_main}

The proof of Theorem \ref{lemma_Holevo_stability} above relies on the Claims \ref{fact:Dmax_bound} and \ref{lemma_mixture_quant_KL} below.
The proofs of these claims are given in Appendix \ref{proof_fact:Dmax_bound} and \ref{proof_lemma_mixture_quant_KL}.

\begin{claim}
\label{fact:Dmax_bound}
Consider $\rho, \rho', \sigma \in \cD(\cH_A)$.
Then, $D(\rho\|\sigma) \le D(\rho\|\rho') + D_{\max}(\rho'\|\sigma).$
\end{claim}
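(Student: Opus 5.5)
The plan is to reduce the claim to a single operator inequality on logarithms and then take traces against $\rho$. Without loss of generality I assume $\rho \ll \rho'$ and $\rho' \ll \sigma$. Otherwise either $D(\rho\|\rho')=+\infty$ or $D_{\max}(\rho'\|\sigma)=+\infty$, and the inequality is trivial. Indeed, if $\rho\ll\rho'$ and $\rho'\ll\sigma$ then $\rho\ll\sigma$, so both sides are finite exactly when the right-hand side is.

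Let $\lambda := D_{\max}(\rho'\|\sigma) = \ln \inf\{t : \rho' \preceq t\sigma\}$. The defining property gives the operator inequality $\rho' \preceq e^{\lambda}\sigma$. The key step is to apply operator monotonicity of the logarithm. Since $\ln$ is operator monotone on positive definite operators, I will restrict everything to $\supp(\sigma)$; there $\rho' + \eta \Pi_\sigma$ and $e^\lambda\sigma+\eta\Pi_\sigma$ are invertible for $\eta>0$. This yields
\[
\ln(\rho'+\eta\Pi_\sigma) \preceq \lambda\,\Pi_\sigma + \ln(\sigma+ e^{-\lambda}\eta\Pi_\sigma)
\]
on $\supp(\sigma)$.

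Next I take the trace against $\rho$, which is supported inside $\supp(\rho')\subseteq\supp(\sigma)$, and let $\eta\to 0$. On $\supp(\rho')$, the operator $\ln(\rho'+\eta\Pi_\sigma)$ converges to $\ln\rho'$, and $\rho$ lives there. This gives
\[
\tr[\rho\ln\rho'] \le \lambda + \tr[\rho\ln\sigma].
\]
The limit on the right is harmless because $\sigma$ is invertible on its support.

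Finally I write
\[
D(\rho\|\sigma) = \tr[\rho\ln\rho] - \tr[\rho\ln\sigma] = D(\rho\|\rho') + \tr[\rho\ln\rho'] - \tr[\rho\ln\sigma],
\]
and substitute the previous inequality to obtain $D(\rho\|\sigma)\le D(\rho\|\rho')+D_{\max}(\rho'\|\sigma)$.

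The only delicate point is the support/regularization bookkeeping in the monotonicity step. Operator monotonicity of $\ln$ requires strictly positive operators, so the $\eta$-perturbation and the limit must be handled on the correct subspaces. Everything else is a one-line rearrangement.
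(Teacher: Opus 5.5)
Your proposal is correct and follows essentially the same route as the paper: apply operator monotonicity of $\ln$ to $\rho' \preceq e^{D_{\max}(\rho'\|\sigma)}\sigma$, take the trace against $\rho$, and add $\tr[\rho\ln\rho]$ to both sides. The only difference is that you make the support bookkeeping explicit, via the reduction to $\rho\ll\rho'\ll\sigma$ and the $\eta$-regularization on $\supp(\sigma)$, whereas the paper applies $\ln$ directly to possibly singular $\rho'$ and $\sigma$ without comment; your version is therefore a more careful rendering of the same argument.
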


\begin{claim}\label{lemma_mixture_quant_KL}
Consider $\rho$ and $\sigma$ be two quantum states over Hilbert space $\cH$ such that $\rho \ll \sigma$ and $\sigma$ 
is a finite mixture of probability distributions such that $\sigma=\sum_{b=1}^{m} P(b) \sigma_b$, where $\sum_{b=1}^{m} P(b)=1$, and $\rho \ll \sigma_b$ for all $b \in [m]$.
Then, 
$
D(\rho \| \sigma)
\leq \min _{b\in [m]}\left\{D\left(\rho \| \sigma_b\right)-\ln P(b) \right\}.$
\end{claim}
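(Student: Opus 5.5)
The quickest route is to combine Claim \ref{fact:Dmax_bound} with an operator inequality. Fix an arbitrary index $b\in[m]$ with $P(b)>0$; if $P(b)=0$ the right-hand side is $+\infty$ and there is nothing to prove. Since every summand in $\sigma=\sum_{b'}P(b')\sigma_{b'}$ is positive semidefinite, we have the operator inequality
\[
P(b)\,\sigma_b \preceq \sigma, \qquad\text{i.e.}\qquad \sigma_b \preceq \frac{1}{P(b)}\,\sigma .
\]
By the definition of the max-relative entropy, $D_{\max}(\sigma_b\|\sigma)=\ln\inf\{\lambda:\sigma_b\preceq\lambda\sigma\}$. Hence $D_{\max}(\sigma_b\|\sigma)\le -\ln P(b)$. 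In particular $\supp(\sigma_b)\subseteq\supp(\sigma)$, so this quantity is finite.

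Next, apply Claim \ref{fact:Dmax_bound} with $\rho'=\sigma_b$. This gives
\[
D(\rho\|\sigma)\le D(\rho\|\sigma_b)+D_{\max}(\sigma_b\|\sigma)\le D(\rho\|\sigma_b)-\ln P(b).
\]
The hypothesis $\rho\ll\sigma_b$ guarantees that $D(\rho\|\sigma_b)$ is finite, so the inequality is meaningful. Since $b$ was arbitrary, taking the minimum over $b\in[m]$ yields the claim.

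The only delicate point is support bookkeeping, should one prefer to avoid relying on Claim \ref{fact:Dmax_bound}. In that case I would argue directly from operator monotonicity of the logarithm, as follows.
\begin{enumerate}
\item Let $\Pi$ be the projector onto $\supp(\sigma)$.
\item For $\epsilon>0$, use $P(b)(\sigma_b+\epsilon\Pi)\preceq \sigma+\epsilon P(b)\Pi$. Both sides are positive definite on $\supp(\sigma)$, because $\supp(\sigma_b)\subseteq\supp(\sigma)$.
\item Operator monotonicity of $\log$ then gives $\ln P(b)\,\Pi+\log(\sigma_b+\epsilon\Pi)\preceq\log(\sigma+\epsilon P(b)\Pi)$ on that subspace.
\item Take the trace against $\rho$, which is supported in $\supp(\sigma_b)\subseteq\supp(\sigma)$.
\item Let $\epsilon\to0$. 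Because $\rho\ll\sigma_b$, the term $\tr[\rho\log(\sigma_b+\epsilon\Pi)]$ converges to $\tr[\rho\log\sigma_b]$.
\end{enumerate}
This produces $-\tr[\rho\log\sigma]\le-\tr[\rho\log\sigma_b]-\ln P(b)$, which is the same bound. I expect this limiting step to be the only place requiring care; the rest is immediate.
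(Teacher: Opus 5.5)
Your proof is correct and is essentially the paper's argument. The paper applies operator monotonicity of $\ln$ directly to $P(b)\sigma_b\preceq\sigma$, obtaining $\ln\sigma\succeq\ln P(b)\,\mathbb{I}+\ln\sigma_b$, and then traces against $\rho$. Your route through Claim~\ref{fact:Dmax_bound} unpacks to exactly this: that claim is proved by the same monotonicity step, and with $\rho'=\sigma_b$ and $e^{D_{\max}(\sigma_b\|\sigma)}\le 1/P(b)$ it reduces to the paper's proof. Your fallback argument is the paper's proof itself, with the $\epsilon$-regularization supplying the support bookkeeping for singular $\sigma_b$ that the paper leaves implicit.
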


Since we aim to obtain an upper-bound on $I[S;WB']$, one way to proceed is to use the fact that $I[S;WB']=\min_{\omega^B} D(\sigma^{SB}\|\sigma^S \otimes \omega^B)$ (where $B \equiv WB'$).
Thus,
        \begin{align}
        I[S;WB'] &\leq D(\sigma^{S B} \| \sigma^{S}\otimes\omega^{B})\nn\\
        &= \sum_{s \in \cS} P^{\otimes n}_{Z}(s) D(\cN^{s}(\rho_{s}) \| \omega^B)\label{lemma_Holevo_stability_eq1}.
\end{align}
        
We now choose different values of $\omega^B$ to obtain upper-bounds on $I[S;WB']$ discussed in steps below.

(\textbf{Step ${\bf 1}$}) \label{step1} Consider $\omega^B$ to be a uniform mixture of $\cN^{\bf f}(\rho_{\mathbf{f}}),$ over all the types $\mathbf{f} \in T^{n}_{\abs{\cZ}}$ i.e. $\omega^B := \frac{1}{\abs{T^{n}_{\abs{\cZ}}}}\sum_{\mathbf{f} \in T^{n}_{\abs{\cZ}}}\cN^{\bf f}(\rho_{\mathbf{f}})$, then, using Claim \ref{lemma_mixture_quant_KL} and  the fact that $\abs{T^{n}_{\abs{\cZ}}} \leq (n+1)^{\abs{\cZ}-1}$ (see \cite[Eq. $6.18$]{Group2}), Eq.
\eqref{lemma_Holevo_stability_eq1} can be upper-bounded as follows,
\begin{align}
     &~~~I[S;WB'] \nn\\
     &\leq  \sum_{s \in \cS} P^{\otimes n}_{Z}(s) \min_{\mathbf{f} \in T^{n}_{\abs{\cZ}}} \left\{ D(\cN^s(\rho_{s}) \| \cN^{\bf f}(\rho_{\mathbf{f}})) - \ln \abs{T^{n}_{\abs{\cZ}}}^{-1}\right\}\nn\\
     &\leq(\abs{\cZ}-1) \ln (n+1).\nn
\end{align}

Observe that the above upper-bound on $I[S;WB']$ is independent of the privacy parameters ($\eps,\delta$) of $\cA$.
This happened because, we chose $\omega^B$ to be a uniform mixture of representative quantum states of each type.
This choice implied that $\min_{\mathbf{f} \in T^{n}_{\abs{\cZ}}} D(\cN^s(\rho_{s}) \| \cN^{\bf f}(\rho_{\mathbf{f}})) = 0$.
To get an upper-bound on $I[S;WB']$ in terms of the privacy parameters, we need to choose $\omega^B$ which makes use of the fact that $\cA$ satisfies \eqref{pp4}.
We will accomplish this by using a grid covering for the types of $\cS$. We discuss in the step below.

(\textbf{Step ${\bf 2}$}) In contrast to Step $1$, we will now choose $\omega^{B}$ to be a mixture over a smaller collection of the output states of $\cA$.
This smaller collection is obtained by using a grid covering over the types of $\cS$, which was developed in the proof of Proposition $2$ of \cite{RGBS21}.
We now discuss their grid covering over the types of $\cS$ below.
\quad Observe that any type ${\bf f} \in T^n_{\abs{\cZ}}$ can be 
    thought of as a point inside a $\abs{\cZ}-1$ dimensional grid $[0,n]^{\abs{\cZ}-1}$, which is of size $(n+1)^{\abs{\cZ}-1}$.
This is because, for any ${\bf f} = ({\bf f}_1,\cdots,{\bf f}_{\abs{\cZ}}) \in T^n_{\abs{\cZ}}$, the first $\abs{\cZ}-1$ coordinates decide the last coordinate ${\bf f}_{\abs{\cZ}}$,
     since we have a constraint $\sum_{i=1}^{\abs{\cZ}} {\bf f}_i = n$.
We now split each dimension of the grid $[0,n]^{\abs{\cZ}-1}$ (which is a $[0,n]$ interval) into $t$ equal parts for some 
     \begin{equation}
         t \in \bbN : t \in[1, n].
\label{grid_size_con}
     \end{equation}
     
    \quad  We can think of the grid $[0,n]^{\abs{\cZ}-1}$ as a cover of $t^{\abs{\cZ}-1}$ smaller grids of length $l := \frac{n}{t}$.
Note that each side of the smaller grid has $\floor{l}+1$ points.
Further, if $\floor{l}+1$ is odd, then we choose the central point of the smaller corresponding to the coordinates of the center of the smaller grid.
Thus, for any $s \in \cS$, if we consider its type as ${\bf f}^{(s)}$, 
     then we can find a type ${\bf g}^{(s)} \in T^n_{\abs{\cZ}}$ such that the first $\abs{\cZ}-1$ coordinates of ${\bf f}$ are the coordinates of the center of the smaller grid in which the first $\abs{\cZ}-1$ coordinates of ${\bf f}^{(s)}$ resides.
In each dimension of the bigger grid, the distance between $s$ and the center of the nearest smaller grid ${\bf c}^{s}$ is given as follows,
    \begin{equation*}
        \abs{{\bf f}^{(s)}_{z_{(i)}} - {\bf c}^{(s)}_{z_{(i)}}} \leq \frac{\floor{l}+1}{2} \leq \frac{n}{2t}+\frac{1}{2}, \mbox{~for each } i \in [\abs{\cZ}-1],
    \end{equation*}
    where $z_{(i)}$ is the $i$-th element of the alphabet $\cZ$.
Therefore, if along all dimension $i \in [\abs{\cZ}-1]$, ${\bf f}^{(s)}_{z_{(i)}} - {\bf g}^{(s)}_{z_{(i)}} = -\frac{n}{2t}+\frac{1}{2}$, then the count of last element $z_{(\abs{\cZ})} \in \cZ$ has to compensate for it.
Thus, we have the following,
    \begin{align*}
        \abs{{\bf g}^{(s)}_{z_{(\abs{\cZ})}} - {\bf f}_{z_{(\abs{\cZ})}}} &\leq  (\abs{\cZ}-1)\left(\frac{n}{2t}+\frac{1}{2}\right).
\end{align*} 
    Then, for any $s \in \cS$ the following holds, 
    \begin{align}
          d(s,T_{{\bf g}^{(s)}}) &\leq (\abs{\cZ}-1)\frac{n}{t},\label{lemma_Holevo_stability_eq2}
    \end{align}
    where $d(s,T_{{\bf g}^{(s)}})$ is distance between the types of $s$ and ${\bf g}^{(s)}$ as defined in Section \ref{sec_def_not}.
    
(\textbf{Step ${\bf 3}$}) We now prove Theorem \ref{lemma_Holevo_stability} using the grid covering technique discussed in the proof of \cite[Proposition 2]{RGBS21}.
Fix $\omega^{B} = \sum_{{\bf f} \in T'} \frac{1}{\abs{T'}} \cN^{\bf f}(\rho_{\bf f})$, where $T'$ is the collection of the center points of all the smaller grids obtained in Step 2. Then, using Claim \ref{lemma_mixture_quant_KL} and the fact that $\abs{T'} \leq t^{\abs{\cZ}-1}$, we have,
    {\allowdisplaybreaks\begin{align}
        I[S;WB'] 
        &\leq \sum_{s \in \cS} P^{\otimes n}_{Z}(s) \min_{{\bf f} \in T'} \left\{ D(\cN^{s}(\rho_{s}) \| \cN^{\bf f}(\rho_{\bf f})) + (\abs{\cZ}-1)\ln t\right\}\nn\\
        &\leq \sum_{s \in \cS} P^{\otimes n}_{Z}(s) \left( D(\cN^{s}(\rho_{s}) \| \cN^{{\bf g}^{(s)}}(\rho_{{\bf g}^{(s)}}))+ 
(\abs{\cZ}-1)\ln t\right).\label{lemma_Holevo_stability_eq4}
    \end{align}}

   (\textbf{Step ${\bf 4}$}) We will now analyze the first term in the RHS of \eqref{lemma_Holevo_stability_eq4} by using Claim \ref{fact:Dmax_bound} and \cite[Lemma $6.9$]{Tomamichel_2016}.
Toward this, in \cite[Lemma $6.9$]{Tomamichel_2016}, let $\rho = \cN^s(\rho_s)$ and $\sigma = \cN^{{\bf g}^{(s)}}(\rho_{{\bf g}^{(s)}})$.
Thus, \cite[Lemma $6.9$]{Tomamichel_2016} implies that there exists a quantum state $\cN^s(\rho_{s})'$ in the close vicinity of $\cN^s(\rho_{s})$ such that $D_{\max}(\cN^s(\rho_{s})' \| \cN^{{\bf g}^{(s)}}(\rho_{{\bf g}^{(s)}}) )\leq f(\eps,\delta)$, where $f(\cdot,\cdot)$ is some function.
Thus, using \cite[Lemma $6.9$]{Tomamichel_2016}, Claim \ref{fact:Dmax_bound}, Assumption \eqref{ass1} and the extension of privacy constraints of $\cA$ under $k$-neighboring inputs, we have the following series of inequalities,
    {\allowdisplaybreaks\begin{align}
        D(\cN^s(\rho_{s}) \| \cN^{{\bf g}^{(s)}}(\rho_{{\bf g}^{(s)}}))
        &\leq D(\cN^s(\rho_{s}) \| \cN^s(\rho_{s})')  + \eps'\nn\\
        &\leq \frac{2}{m}E^{2}_{1}\left(\cN^s(\rho_{s}) \| \cN^s(\rho_{s})'\right) + \eps'\nn\\  
       &\leq\eps'+ \frac{2}{m}g_{\frac{n(\abs{\cZ}-1)}{t}}(\eps,\delta) ,\label{lemma_Holevo_stability_eq3}
    \end{align}}
    where $\eps':= \frac{n(\abs{\cZ}-1)\eps}{t} + \ln \frac{1}{1- g_{\frac{n(\abs{\cZ}-1)}{t}}(\eps,\delta)}$, and $g_{\frac{n(\abs{\cZ}-1)}{t}}(\eps,\delta) = \frac{e^{\frac{n(\abs{\cZ}-1)\eps}{t}} - 1}{e^\eps 
- 1}\delta$. Thus,  using Eqs. \eqref{lemma_Holevo_stability_eq4} and \eqref{lemma_Holevo_stability_eq3} we have,
    {\allowdisplaybreaks\begin{align}
        &\hspace{-7pt}I[S;WB'] {\leq}\frac{n(\abs{\cZ}-1)\eps}{t} +  (\abs{\cZ}-1)\ln t + h_{\abs{\cZ}}(\eps,\delta),\label{lemma_Holevo_stability_eq6}
    \end{align}}
   where $h_{\abs{\cZ}}(\eps,\delta):= \ln \frac{1}{1- g_{{n(\abs{\cZ}-1)}}(\eps,\delta)} + \frac{2}{m}g_{n(\abs{\cZ}-1)}(\delta)$ (observe that $h_{\abs{\cZ}}(\eps,0) = 0$) and the last inequality follows from the fact that the grid size $t \geq 1$.
   
(\textbf{Step ${\bf 5}$}) In this step, we optimize the choice over $t$ (grid size) to tighten the upper-bound obtained in Eq.
\eqref{lemma_Holevo_stability_eq6}. Observe that the value of $t$ which minimizes the RHS of Eq.
\eqref{lemma_Holevo_stability_eq6} is,
    \begin{equation}
       t^{\star} = n\eps.\label{optimal_t}
    \end{equation}
    As mentioned in the statement of Theorem \ref{lemma_Holevo_stability}, we have $\eps \in \left[\frac{1}{n}\right., 1]$ and thus \eqref{optimal_t} yields that $1 \leq t^{\star} \leq n$, which satisfies the size constraint of grid mentioned in \eqref{grid_size_con}.
Therefore, substituting $t = t^{\star}$ in \eqref{lemma_Holevo_stability_eq6} yields,    
    \begin{align*}
        I[S;WB'] &\leq (|\mathcal{Z}| - 1) \left(1 + \ln\left({n\eps}\right)\right) + h_{\abs{\cZ}}(\eps,\delta)\nn\\
        &=(|\mathcal{Z}| - 1) \ln\left({n e\eps}\right) + h_{\abs{\cZ}}(\eps,\delta).
\end{align*}
This completes the proof of  Theorem  \ref{lemma_Holevo_stability}.\hfill\QED
\subsection{Proof of Corollary \ref{corr_Holevo_stability_2}}\label{proof_corr1}
For $\eps < \frac{1}{n}$ (as mentioned in Corollary \ref{corr_Holevo_stability_2}), \eqref{optimal_t} yields that $t^{\star} < 1$, and therefore it does not \eqref{grid_size_con}.
Thus, in this case, we set $t = 1$ in \eqref{lemma_Holevo_stability_eq6} to obtain the desired upper-bound.\hfill\QED
\subsection{Proof of Corollary \ref{corr_Holevo_stability_1}}\label{proof_corr2}

    For $\eps > 1$ (as mentioned in Corollary \ref{corr_Holevo_stability_1}), \eqref{optimal_t} yields that $t^{\star} > n$, which does not satisfy the grid size constraint mentioned in \eqref{grid_size_con}.
Therefore in this case we set grid size $t=n$. However, for this choice of grid size, observe that the grid covers all the sequences in $\cS$ and therefore covers all the type-representatives in $\cS$.
This is the same case as Step 1. Therefore, we have, 
    \begin{align}
     I[S;WB']
     &\leq (\abs{\cZ}-1) \ln (n+1)\nn.
\end{align}
This completes the proof of Corollary \ref{corr_Holevo_stability_1}. Further, note that if we substitute $t = n$ in \eqref{lemma_Holevo_stability_eq6}, then it would yield us a weaker bound as compared to the above.
\hfill\QED

\section{Comparison of Upper-bounds on Stability with Prior Work}\label{App:comp_stab}

In this section, we compare the stability upper-bound (Theorem \ref{lemma_Holevo_stability}) derived in this work with existing results.
\subsection{Comparison between Theorem \ref{lemma_Holevo_stability} and \cite[Proposition~10]{Nuradha25}}\label{comp_stab}
In \cite[Proposition~10]{Nuradha25}, the authors derived an upper bound on the Holevo information for quantum $(\varepsilon,\delta)$-LDP quantum channels, as stated in Eq.~(209) of \cite{Nuradha25}.
However, one of the authors of \cite{Nuradha25} later clarified to the authors of the present paper \cite{wilde2025private} that the phrase “for quantum $(\varepsilon,\delta)$-LDP quantum channels” was a typographical error.
The corrected statement is as follows.
If an algorithm $\mathcal{A}$ satisfies $\varepsilon$-QLDP, meaning that
\begin{equation}
\tr[M \mathcal{A}(\rho_x)]
\le e^{\varepsilon}\tr[M \mathcal{A}(\rho_{x'})],
\qquad \forall\, x,x' \in \mathcal{X}, \ \forall\, M: 0 \le M \le I,\label{wilde2}
\end{equation}
then, the following bound holds:
\begin{equation}
I[X;B]_{\sigma} \leq \varepsilon \tanh\left(\frac{\varepsilon}{2}\right) = \varepsilon \left(\frac{e^\varepsilon -1 }{e^\varepsilon +1 }\right),\label{wilde1}
\end{equation}
where $I[X;B]_{\sigma}$ is the Holevo information computed with respect to the state $\sigma := \sum_{x \in \mathcal{X}} P(x) \ketbra{x} \otimes \mathcal{A}(\rho_x)$.
Here,
their learning algorithm $\mathcal {A}$ can be considered as
a map $\cN$ given in \eqref{SHJ}, but the map does not depend on $s$.
Hence, the reference \cite{Nuradha25} also adopts the Untrusted Data Processor scenario similar to the second part of \cite{Caro23}.

In contrast, our main result, Theorem~\ref{lemma_Holevo_stability}, provides an upper bound under a weaker assumption: the algorithm $\mathcal{A}$ satisfies $1$-neighbor $(\varepsilon,\delta)$-DP, i.e., the condition~\eqref{pp4} holds for every $s \overset{1}{\sim} s'$ (see Section \ref{sec_def_not}).
When we set $\delta = 0$ and $\mathcal{X} = \mathcal{S}$, the form of the constraint in \eqref{wilde2} becomes identical to that in \eqref{wilde2}.
However, our result applies this constraint only to $1$-neighboring pairs, whereas their result assumes it for all pairs $s \neq s' \in \mathcal{S}$.

In fact, if we strengthen our assumption to match theirs—namely, require \eqref{pp4} for any distinct $s,s' \in \mathcal{S}$—then part $(i)$ of \cite[Corollary 3]{DWH2025} recovers the same bound as \eqref{wilde1}, thereby aligning our result with the corrected version of \cite[Proposition~10]{Nuradha25}.

The proof of Theorem \ref{exp_gen_bound} formally establishes the connection between algorithmic stability and generalizability by treating the mutual information $I[S;WB']$ as a proxy for stability.
We first demonstrate that the expected generalization error is fundamentally limited by the square root of the information the algorithm leaks about the training data $S$, i.e., we have the following,
\begin{align}
        \overline{\text{\textnormal{gen}}}_{\rho}(\cN) &\leq \sqrt{2\alpha^2 I[S;WB']}.\label{gen_stab_eq}
    \end{align}

Here, $I[S;WB']$ quantifies the dependence of the output hypothesis on the specific training set;
a lower value implies that the algorithm is "stable" and not overfitting to individual data points.
The crucial link to Theorem \ref{lemma_Holevo_stability} is that it provides the explicit upper bound on this stability measure derived solely from the privacy constraints.
Finally, by substituting the bound from Theorem \ref{lemma_Holevo_stability} into \eqref{gen_stab_eq}, we mathematically confirm that the rigorous stability imposed by $(\eps,\delta)$-DP directly suppresses the generalization error, by preventing the algorithm from depending too heavily on any single data point and ensuring that the learned hypothesis performs well on unseen data.
\subsection{Comparison between Theorem \ref{lemma_Holevo_stability} and \cite[Appendix C.7]{Caro23}}\label{comp_stab_car}
The reference \cite{Caro23} studies the local differential privacy of learning algorithms in two settings.
That is, their discussion is composed of two parts, 
the first part starting with ``First'' and the second part starting with ``Next''.
\subsubsection{Assumption in \cite[Appendix C.7]{Caro23}}
Their first part discusses the Holevo information under a certain condition.
However, a careful examination of the proof in \cite[Appendix C.7]{Caro23} reveals that
the argument relies on a stronger assumption than their statement as follows.
In this place, the authors claim to prove the following bound
on the Holevo information.
\begin{equation}
I(\text{test}; \text{hyp})_{\sigma^{\mathcal{A}}_{(s,w)}} \leq 2\varepsilon (1 - e^{-\varepsilon}) \sqrt{2I(\text{test}; \text{train})_{\rho^{\mathcal{A}}_{(s,w)}}},\label{caro1}
\end{equation}
under the assumption that the channel $\Lambda^{\cA}_{s,w} : \cH^{\mbox{train}} \to \cH^{\mbox{hyp}}$ is $\varepsilon$-LDP, i.e.,
\begin{equation}
\tr\left[M \Lambda^{\cA}_{s,w} (\rho^{\mbox{train}}_{1})\right] \leq e^\varepsilon \tr\left[M \Lambda^{\cA}_{s,w} (\rho^{\mbox{train}}_{2})\right],
\label{AL1}
\end{equation}
for all $0 \preceq M \preceq \bbI^{\mbox{hyp}}$ and $\rho^{\mbox{train}}_{1}, \rho^{\mbox{train}}_{2} \in \cD(\cH^{\mbox{train}})$.
However, a closer examination of their proof reveals that the argument implicitly depends on a stronger condition, namely
\begin{equation}
\tr\left[O\big(\bbI^{\mathrm{test}}\otimes \Lambda^{\cA}_{s,w}\big)\big(\rho^{\mathrm{test;train}}_{1}\big)\right]
\leq e^\varepsilon \,
\tr\left[O\big(\bbI^{\mathrm{test}}\otimes \Lambda^{\cA}_{s,w}\big)\big(\rho^{\mathrm{test;train}}_{2}\big)\right],
\label{car_2}
\end{equation}
for all $0 \preceq O \preceq \bbI^{\mathrm{test;hyp}}$ and $\rho^{\mathrm{test;train}}_{1}, \rho^{\mathrm{test;train}}_{2} \in \cD(\cH^{\mathrm{test;train}})$.
In other words, the proof appears to require that $\Lambda^{\cA}_{s,w}$, which is locally $\varepsilon$-LDP on $\cH^{\mathrm{hyp}}$, also preserves $\varepsilon$-LDP globally when extended to the joint space $\cH^{\mathrm{test;hyp}}$.
Crucially, \eqref{AL1} does \emph{not} imply \eqref{car_2}. Indeed, by \cite[Theorem 4]{ZY2017}, the identity channel on $\cH^{\mathrm{test}}$ fails to satisfy differential privacy for any $\varepsilon \geq 0$, so the composition $\bbI^{\mathrm{test}}\otimes \Lambda^{\cA}_{s,w}$ cannot satisfy $\varepsilon$-LDP solely on the basis of \eqref{AL1}.
Therefore, there is a gap in the argument of \cite[Appendix C.7]{Caro23}: the claimed bound \eqref{caro1} does not follow from their stated assumption \eqref{AL1}.
That is, one must assume \eqref{car_2} instead of \eqref{AL1}.
Moreover, the upper bound obtained in \eqref{caro1} involves the term $I(\text{test}; \text{train})_{\rho^{\mathcal{A}}{(s,w)}}$.
To render this bound meaningful, $I(\text{test}; \text{train})_{\rho^{\mathcal{A}}{(s,w)}}$ should also be controlled by some function of the security parameter $\varepsilon$, although our evaluations—such as Theorem \ref{lemma_Holevo_stability}—do satisfy this requirement.
\subsubsection{Security condition in \cite[Appendix C.7]{Caro23}}
Their second part essentially changes their model into 
the  Untrusted Data Processor scenario studied in Section \ref{SSA}
because 
on the page $59$ of \cite{Caro23} the authors mention the following:
    \begin{quote}
        ``Next, we turn our attention to the classical MI term in our generalization bounds.
Here, we assume that the learner $\mathcal{A}$ uses an overall $\varepsilon$-LDP POVM. As the POVM $\left\{|s\rangle\langle s|
\otimes E_s^{\mathcal{A}}(w)\right\}_{s, w}$ is not LDP even if every $\left\{E_s^{\mathcal{A}}(w)\right\}_w$ is, we make the simplifying assumption that the learner uses an $s$-independent $\varepsilon$-LDP POVM $\left\{E^{\mathcal{A}}(w)\right\}_w$.''
    \end{quote}
Even in this scenario, our results still hold, as explained in Section \ref{SSA}.
However, in this scenario, it is reasonable to impose the ITA condition given in Definition \ref{def:ITA} to our learning algorithm,
as discussed in Section \ref{SSA}
while they did not consider such a constraint.

\section{Proof of Corollary \ref{fact_QDP_group}}\label{fact_QDP_group_proof}
Since $s \overset{k}{\sim} s'$, there exists a $k+1$-length sequence $\{s_i\}_{i=0}^{k} \subseteq \cS$ such that $s_0 = s$, $s_k = s'$ and for each $i \in [k]$, $s_{i-1} \overset{1}{\sim} s_i$.
Thus, for any $0 \preceq \Lambda \preceq \bbI$, using Eq \eqref{pp4}, we have
    \begin{align*}
        \tr[\Lambda \cN^{(s)}(\rho_s)] &\leq e^{\eps} \tr[\Lambda \cN^{(s_1)}(\rho_{s_1})] + \delta\\
        &\leq e^{2\eps} \tr[\Lambda \cN^{(s_2)}(\rho_{s_2})] + (e^{\eps}+1)\delta\\
        &\leq e^{3\eps} \tr[\Lambda \cN^{(s_3)}(\rho_{s_3})] + (e^{2\eps}+e^{\eps}+1)\delta\\
        &\vdotswithin{=}\\
        &\leq e^{k\eps} \tr[\Lambda \cN^{(s')}(\rho_{s'})] + (e^{(k-1)\eps}+e^{(k-2)\eps}+\cdots+e^{\eps}+1)\delta\\
        & = e^{k \eps} \tr[\Lambda \cN(\sigma)]+ g_{k}(\delta).
\end{align*}
    This completes the proof of Corollary \ref{fact_QDP_group}.\hfill\QED

\section{Proof of Lemma \ref{lem:ITA_classical}}\label{proof_LL}
Choose a basis $\{\ket{x}\}$ that diagonalizes $\rho_s$.
In this basis, the state can be expressed as
\begin{equation}
\rho_s = \sum_{x} P_{X|s}(x)\,\ketbra{x},
\end{equation}
on system $X$.
Next, define the instrument $\{\mathcal{N}_w'\}_w$ by
\begin{equation}
\mathcal{N}_w'(\ketbra{x}) := \ketbra{x} \otimes \mathcal{N}_w(\ketbra{x}),
\end{equation}
where the output system is $X B'$.
Recall that the original learning algorithm $\{\mathcal{N}_w\}_w$ outputs on system $B'$.
Under this construction, $\{\mathcal{N}_w'\}_w$ is strictly more informative than $\{\mathcal{N}_w\}_w$.

Suppose, for contradiction, that $\{\cN_w\}_w$ is more informative than $\{\cN_{w}'\}_w$.
Then there exist CP-TP maps $\{\Gamma_w\}_w$ such that
\[
\Gamma_{w}(\cN_{w}(\rho_s)) = \cN_{w}'(\rho_s) \quad \text{for all } s\in \cS.
\]
Hence,
\begin{align}
&\tr_{B'W} \Bigg[\sum_w \Gamma_w (\cN_{w}(\rho_s)) \otimes \ketbra{w}\Bigg] \nn\\
=&\tr_{B'W} \Bigg[\sum_w \cN_{w}'\Bigg(\sum_{x}P_{X|s}(x)\ketbra{x}\Bigg)\otimes \ketbra{w}\Bigg]\nn\\
=&\tr_{B'W} \Bigg[\sum_{x}P_{X|s}(x)\sum_w \cN_{w}'(\ketbra{x})\otimes \ketbra{w}\Bigg]\nn\\
=&\tr_{B'W}\Bigg[\sum_{x}P_{X|s}(x)\ketbra{x}\otimes \sum_w \cN_{w}(\ketbra{x})\otimes \ketbra{w}\Bigg]\nn\\
=&\sum_{x}P_{X|s}(x)\ketbra{x} = \rho_s,
\end{align}
which contradicts the assumption that no CP-TP map $\Gamma$ satisfies
\[
\Gamma \left(\sum_{w\in \cW}\cN_{w}(\rho_s)\otimes \ketbra{w}\right) = \rho_s.
\]
Therefore, $\{\cN_w'\}_w$ is strictly more informative than $\{\cN_{w}\}_w$, and thus $\{\cN_{w}\}_w$ is not ITA.\hfill\QED

\section{Proof of Claim \ref{fact:Dmax_bound}}\label{proof_fact:Dmax_bound}
By the definition of the max-relative entropy, we have
\[
    \rho' \le e^{D_{\max}(\rho'\|\sigma)}\, \sigma.
\]
Equivalently,
\[
    \sigma \ge e^{-D_{\max}(\rho'\|\sigma)}\, \rho'.
\]
Since the logarithm is operator monotone, this implies
\[
    \ln \sigma \succeq \ln \rho' - D_{\max}(\rho'\|\sigma)\, \mathbb{I}.
\]
Multiplying both sides by $-\rho$ and taking the trace (which reverses the inequality), we obtain
\[
    -\tr[\rho \ln \sigma] 
    \le -\tr[\rho \ln \rho'] + D_{\max}(\rho'\|\sigma).
\]
Adding $\tr[\rho \ln \rho]$ to both sides gives
\[
    \tr[\rho (\ln \rho - \ln \sigma)]
    \le \tr[\rho (\ln \rho - \ln \rho')] + D_{\max}(\rho'\|\sigma),
\]
which can be written as
\[
    D(\rho\|\sigma) \le D(\rho\|\rho') + D_{\max}(\rho'\|\sigma).
\]
This completes the proof of Claim  \ref{fact:Dmax_bound}.\hfill\QED

\section{Proof of Claim \ref{lemma_mixture_quant_KL}}\label{proof_lemma_mixture_quant_KL}

We begin by invoking the operator monotonicity of the function $\ln(\cdot)$.
Since $\sigma \succeq P(b)\sigma_b$ for every $b$, we obtain
\begin{equation}
\label{eq:logmonotone}
\ln \sigma \succeq \ln P(b)\,\mathbb{I} + \ln \sigma_b.
\end{equation}
Using \eqref{eq:logmonotone}, we immediately have,
\begin{align}
    D(\rho\|\sigma) 
    &= \tr\left[\rho(\ln\rho - \ln\sigma)\right] \nonumber\\
    &\le \tr\left[\rho(\ln\rho - \ln P(b)\,\mathbb{I} - \ln\sigma_b)\right] \nonumber\\
    &= D(\rho\|\sigma_b) - \ln P(b).
\label{eq:simple_bound}
\end{align}
Since \eqref{eq:simple_bound} holds for all $b$, it implies Claim  \ref{lemma_mixture_quant_KL}.
\hfill\QED

\end{document}